\documentclass[a4paper,UKenglish,cleveref, autoref, thm-restate,numberwithinsect]{lipics-v2021}

\title{Minimal and Canonical Quotients for Simulation Equivalences}

\titlerunning{Quotients for Simulation Equivalences}

\author{Eduardo Costa Martins}{Eindhoven University of Technology, The Netherlands}{e.j.costa.martins@tue.nl}{https://orcid.org/0009-0006-1706-896X}{}

\author{Tim A.\,C. Willemse}{Eindhoven University of Technology, The Netherlands}{t.a.c.willemse@tue.nl}{https://orcid.org/0000-0003-3049-7962}{}

\authorrunning{E. Costa Martins and T.\,A.\,C. Willemse}

\Copyright{Eduardo Costa Martins and Tim A.\,C. Willemse}

\ccsdesc[500]{Theory of computation~Concurrency}
\ccsdesc[300]{Theory of computation~Problems, reductions and completeness}

\keywords{Coupled Similarity, Weak Simulation, Minimisation} 

\category{} 

\relatedversion{} 

\funding{The Cynergy4MIE project is supported by the Chips Joint Undertaking and its members, including the top-up funding by National Authorities under Grant Agreement No 101140226.}

\nolinenumbers 

\hideLIPIcs  

\usepackage[noend]{algpseudocode}
\usepackage{algorithm}
\usepackage{appendix}
\usepackage{extra}

\begin{document}

\maketitle

\begin{abstract}
	Quotients have only been studied for a handful of equivalences in the linear time-branching time spectrum, for which there are results pertaining to canonicity and minimality. We extend these results to weak simulation equivalence and coupled similarity, two closely related equivalences induced by simulation preorders. We describe abstract procedures for transforming an LTS into a unique representative of its equivalence class, and for transforming an LTS into an equivalent state- and transition-minimal LTS. Moreover, we show the minimisation problem is NP-complete.
\end{abstract}

\section{Introduction}\label{sec:introduction}

\newcommand{\new}[1]{#1}

\new{
A Labelled Transition System is a mathematical model for capturing how a system performs actions to evolve from one configuration to another.
LTSs are typically used as a target model for higher-level languages such as Process Algebras or Petri nets, providing these languages with a formal semantic model.
Concepts such as behavioural equivalence are often studied at the level of LTSs.
For instance, in~\cite{van_glabbeek_linear_1993}, van Glabbeek introduced the linear time-branching time spectrum of behavioural equivalences for systems with unobservable actions ($\tau$-steps).

Given an equivalence relation on LTSs, one common transformation is to reduce the state space whilst preserving semantic equivalence.
A common reduction technique is quotienting, which involves the merging of states into equivalence classes, and adding transitions between equivalence classes according to a certain schema.
Quotienting is interesting for a variety of reasons.
It can be used in compositional and modular reasoning, thereby helping to combat the infamous state space explosion problem; it also offers a more abstract view on a system's behaviour by removing distinctions in the LTS that are not semantically important.

Of the behavioural equivalences studied in~\cite{van_glabbeek_linear_1993}, quotienting functions have been found for (divergence preserving) weak bisimilarity~\cite{bolognesi_fundamental_1987}, (divergence preserving) branching bisimilarity~\cite{eloranta_essential_1997}, and trace equivalence, where the latter is a well-known result from automata theory. 
There are also results related to quotienting for behavioural equivalences that do not take $\tau$-steps into account, such as bisimilarity and strong simulation equivalence~\cite{reniers_results_2014,simulation}.

It seems natural to assume that quotienting works for all the behavioural equivalences, but this is not known, nor is there---to the best of our knowledge---any argument for an underlying reason as to why it should.
For instance, even though quotienting functions exist for the most common behavioural equivalences, the quotients themselves differ in a number of ways. 
For bisimilarity, the quotient is \emph{canonical} (i.e.\ unique up to isomorphism), and \emph{minimal} in the number of states and transitions. 
For branching bisimilarity, the same holds if $\tau$-loops are removed afterwards~\cite{eloranta_essential_1997}. 
For weak bisimilarity, so-called duplicate transitions need to be removed as well~\cite{eloranta_essential_1997}. 
For these equivalences:
\begin{itemize}
    \item the minimal LTS is unique;
    \item it can be computed in polynomial time~\cite{jansen_om_2020,kanellakis_ccs_1990}; and
    \item the quotient is state minimal if unreachable states are removed.
\end{itemize}
Yet, for trace equivalence, this is not the case. 
Whilst there is a unique and minimal \emph{deterministic} LTS for a given equivalence class, there may be multiple distinct and minimal LTSs if non-determinism is permitted. 
Additionally, the quotient LTS may not be state minimal, and the computation of a minimal LTS is PSPACE-complete. 
Thus, there are few recurring patterns amongst quotients for different equivalences.

In this paper, we study the canonicity and minimality for quotienting for coupled simulation equivalence~\cite{coupledsimoriginal} and weak simulation equivalence~\cite{park_concurrency_1981,van_glabbeek_linear_1993}. 
First, we generalise the notion of duplicate transitions from the setting of weak bisimilarity.
Then, we introduce transformations based on the $\tau$-laws for weak simulation equivalence and for coupled similarity, namely $\tau.(x+y)=x+y$ and $\tau.(\tau.x+y)=\tau.x+y$.
Applying the $\tau$-laws from left to right is deterministic, and provides the ingredients for obtaining a canonical quotient.
Applying them from right to left is non-deterministic, and provides the means for a minimal quotient.

When applying the $\tau$-laws from right to left, there is an optimal choice that provides the largest reduction in size.
However, finding a minimal quotient can be as hard as finding such an optimal choice, which itself is as hard as finding a minimal set cover, which is NP-complete, and minimisation turns out to be NP-complete, too.
The procedure we give for computing a minimal quotient relies on some solver for the set cover problem.
This way, we can leverage state of the art algorithms for the set cover problem in order to minimise state spaces.
Moreover, the instances of the set cover problem that need to be solved are typically small.\smallskip

\noindent\textit{Related Work.} In~\cite{bolognesi_fundamental_1987} it is shown how LTSs can be quotiented w.r.t.\ weak bisimilarity by solving the relational coarsest partition problem. It was shown in~\cite{eloranta_minimizing_1991} that by removing so-called \emph{redundant} transitions from a quotiented LTS, the resulting LTS is unique and minimal w.r.t.\ weak bisimilarity. The results from~\cite{eloranta_minimizing_1991} were generalised to branching bisimilarity and their divergence-preserving variants in~\cite{eloranta_essential_1997}. A canonical and minimal quotient for simulation equivalence (without internal $\tau$-steps) in the context of Kripke structures was found in~\cite{simulation}. The results from~\cite{simulation} extend to LTSs using translations found in~\cite{reniers_results_2014}.\smallskip

\noindent\textit{Outline.} 
\Cref{sec:prelim} contains preliminaries. In \Cref{sec:quotient} we show that the $\forall$-quotient is sound for both equivalences, though not necessarily state- or transition-minimal, nor canonical. In \Cref{sec:can}, we describe a procedure for a canonical quotient, which is not necessarily minimal. In \Cref{sec:min} we describe a procedure for a minimal quotient. This procedure can be modified so that the output is also canonical. Lastly, in \Cref{sec:conc} we conclude our findings. Proofs can be found in the appendix.
}

\newcommand{\old}[1]{}

\old{
An LTS is a semantic model that captures how a system performs actions to evolve from one configuration to another. It is typical to reduce the state space of these models whilst preserving semantic equivalence. A common reduction technique is quotienting, which involves the merging of states into equivalence classes, and adding transitions between equivalence classes according to a certain schema. In~\cite{van_glabbeek_linear_1993}, van Glabbeek introduced the linear time-branching time spectrum of behavioural equivalences for systems with $\tau$-steps. Of these behavioural equivalences, quotienting functions have been found for (divergence preserving) weak bisimilarity~\cite{bolognesi_fundamental_1987}, (divergence preserving) branching bisimilarity~\cite{eloranta_essential_1997}, and trace equivalence, where the latter is a well-known result from automata theory. There are also results related to quotienting for behavioural equivalences that do not take $\tau$-steps into account, such as bisimilarity and strong simulation equivalence~\cite{reniers_results_2014,simulation}. It seems natural to assume that quotienting works for all the behavioural equivalences, but this is not known, nor is there---to the best of our knowledge---any argument for an underlying reason as to why it should.

Moreover, even though quotienting functions exist for the most common behavioural equivalences, the quotients themselves differ in a number of ways. For bisimilarity, the quotient is unique up to isomorphism, and minimal in the number of states and transitions. For branching bisimilarity, the same holds if $\tau$-loops are removed afterwards~\cite{eloranta_essential_1997}. For weak bisimilarity, so-called duplicate transitions need to be removed as well~\cite{eloranta_essential_1997}. For these equivalences:
\begin{itemize}
    \item the minimal LTS is unique;
    \item it can be computed in polynomial time~\cite{jansen_om_2020,kanellakis_ccs_1990}; and
    \item the quotient is state minimal if unreachable states are removed.
\end{itemize}
Yet, for trace equivalence, this is not the case. Whilst there is a unique and minimal \emph{deterministic} LTS for a given equivalence class, there may be multiple distinct and minimal LTSs if non-determinism is permitted. Additionally, the quotient LTS may not be state minimal, and the computation of a minimal LTS is NP-hard. Thus, there are few recurring patterns amongst quotients for different equivalences.

In this paper, we work towards generalising results about quotienting to other behavioural equivalences. We specifically look at the coupled simulation and weak simulation preorders, and their induced equivalences. First, we generalise the notion of duplicate transitions from the setting of weak bisimilarity. Then, we introduce transformations based on the $\tau$-laws for weak simulation equivalence and for coupled similarity, namely $\tau.(x+y)=x+y$ and $\tau.(\tau.x+y)=\tau.x+y$. Applying the $\tau$-laws from left to right is deterministic, and provides the means for a canonical quotient. Applying them from right to left is non-deterministic, and provides the means for a minimal quotient.

When applying the $\tau$-laws from right to left, there is an optimal choice that provides the largest reduction in size. Finding a minimal quotient can be as hard as finding such an optimal choice, which itself is as hard as finding a minimal set cover, and therefore minimisation is NP-complete. For this reason, the procedure we give for computing a minimal quotient relies on some solver for the set cover problem. This way, we can leverage state of the art algorithms for the set cover problem in order to minimise state spaces. Moreover, the instances of the set cover problem that need to be solved are typically small.

\textsf{\bfseries Related Work.} In~\cite{bolognesi_fundamental_1987} it is shown how LTSs can be quotiented w.r.t. weak bisimilarity by solving the relational coarsest partition problem. It was shown in~\cite{eloranta_minimizing_1991} that by removing so-called \emph{redundant} transitions from a quotiented LTS, the resulting LTS is unique and minimal w.r.t. weak bisimilarity. The results from~\cite{eloranta_minimizing_1991} were generalised to branching bisimilarity and their divergence-preserving variants in~\cite{eloranta_essential_1997}. A canonical and minimal quotient for simulation equivalence (without internal $\tau$-steps) in the context of Kripke structures was found in~\cite{simulation}. The results from~\cite{simulation} extend to LTSs using translations found in~\cite{reniers_results_2014}.

\textsf{\bfseries Outline.} \Cref{sec:prelim} contains preliminaries. In \Cref{sec:quotient} we show that the $\forall$-quotient is sound for both equivalences, though not necessarily state- or transition-minimal, nor canonical. In \Cref{sec:can}, we describe a procedure for a canonical quotient, which is not necessarily minimal. In \Cref{sec:min} we describe a procedure for a minimal quotient. This procedure can be modified so that the output is also canonical. Lastly, in \Cref{sec:conc} we conclude our findings. Proofs can be found in the appendix.
}

\section{Preliminaries}\label{sec:prelim}

We recall Labelled Transition Systems, discuss the two notions of equivalence considered in this paper, and introduce the notational conventions we use throughout the paper.

\begin{definition}[Labelled Transition System]
    A \emph{labelled transition system} (LTS) is a 4-tuple $L=\langle S, \act_\tau, \rightarrow, \iota\rangle$ with:
    \begin{itemize}
        \item $S$ a set of states;
        \item $\act_\tau$ a set of actions of which $\tau$, the unobservable action, is an element;
        \item ${\rightarrow}\subseteq S\times\mathcal{A}_\tau\times S$ a transition relation; and
        \item $\iota\in S$ is the initial state.
    \end{itemize}
\end{definition}

Henceforth, we only consider transition systems with a finite number of states and transitions. We fix the set of actions $\act_\tau$, and write $\act$ to denote $\act_\tau\setminus\set{\tau}$. Elements of $\act$ are called \emph{visible}, whereas $\tau$ is \emph{invisible}. We use the following notation:
\begin{itemize}
    \item $p\xrightarrow{\alpha}q$ denotes $(p,\alpha,q)\in{\rightarrow}$,
    \item $p\xrightarrow{X}q$, for $X\subseteq\act_\tau$, denotes $p\xrightarrow{\alpha}q$ for each $\alpha\in X$,
    \item $p\xrightarrow{\alpha}$ means there exists some $q\in S$ such that $p\xrightarrow{\alpha}q$,
    \item $p\nxrightarrow{\alpha}q$ denotes $(p,\alpha,q)\notin{\rightarrow}$.
\end{itemize}
Moreover, for a transition $p\xrightarrow{\alpha}q$, we call $p$ and $q$ the \emph{source} and \emph{target} states of the transition, respectively.

For an LTS, $L=\langle S, \act_\tau, \rightarrow, \iota\rangle$, we may remove or add transitions as follows. Let $Y\subseteq S\times \act_\tau\times S$. Then the LTSs $L \cup Y$ and $L \setminus Y$ are defined as follows:
\begin{itemize}
    \item $L\cup Y\coloneqq\langle S, \act_\tau, {\rightarrow}\cup Y, \iota\rangle$; and similarly
    \item $L\setminus Y\coloneqq\langle S, \act_\tau, {\rightarrow}\setminus Y, \iota\rangle$.
\end{itemize}
We frequently consider LTSs with some states in common. Say there are LTSs $L, \bar{L}, L_i$ that have some states, $p$, $q$, in common. To distinguish transitions in each LTS, we use the following notational conventions:
\begin{itemize}
	\item $p\xrightarrow{\alpha}q$ as usual means `$(p,\alpha,q)$ is a transition of $L$,'
	\item $\bar{p}\xrightarrow{\alpha}\bar{q}$ is shorthand for `$(p,\alpha,q)$ is a transition of $\bar{L}$,' and
	\item $p_i\xrightarrow{\alpha}q_i$ is shorthand for `$(p,\alpha,q)$ is a transition of $L_i$.'
\end{itemize}
Here, we assume that the names $\bar{p}, p_i, \bar{q}, q_i$ are not part of the LTSs.

The main behavioural equivalences we discuss in this paper are weak simulation equivalence and coupled similarity. Whilst not the main focus, weak bisimilarity is relevant as well. These equivalences are defined using \emph{weak} transitions:
\begin{itemize}
    \item $p\xRightarrow{\tau}q$ iff $q$ is reachable from $p$ using zero or more $\tau$-transitions, and
    \item $p\xRightarrow{a}q$ iff $\exists p',q':p\xRightarrow{\tau}p'\xrightarrow{a}q'\xRightarrow{\tau}q$.
\end{itemize}
On the other hand, transitions of the form $p\xrightarrow{\alpha}q$ with $\alpha\in\act_\tau$ are called \emph{strong} transitions.

\begin{definition}[Weak Simulation]
    Let $L=\langle S, \act_\tau, \rightarrow, \iota\rangle$ be an LTS. A binary relation $\RR$ on $S$ is a weak \emph{simulation} on $L$ iff for all $(p,q)\in\RR,\alpha\in\act_\tau,p'\in S$, the \emph{simulation condition} holds:
    \[
        p\xrightarrow{\alpha}p'\implies\exists q'\in S: q\xRightarrow{\alpha}q'\land p'\RR q'\quad.
    \]
    We write $p\sqsubseteq_S q$ iff there exists a weak simulation $\RR$ such that $p\RR q$, and say that $q$ \emph{simulates}~$p$.

	Moreover, a symmetric weak simulation is called a \emph{weak bisimulation.} We write $p\bisim_w q$ iff there exists a weak bisimulation $\RR$ such that $p\RR q$, and say that $p$ and $q$ are \emph{weakly bisimilar}.
\end{definition}

    

\begin{definition}[Coupled Simulation]
    Let $L=\langle S, \act_\tau, \rightarrow, \iota\rangle$ be an LTS. A relation $\RR$ on $S$ is a \emph{coupled simulation} on $L$ iff it is a weak simulation, and moreover for all $(p,q)\in\RR$, the \emph{coupling condition} holds:
    \[
        \exists q'\in S: q\xRightarrow{\tau}q'\land q'\RR p\quad.
    \]
    We write $p\sqsubseteq_{CS} q$ iff there exists a coupled simulation $\RR$ such that $p\RR q$, and say that $q$ \emph{coupled simulates} $p$.
\end{definition}

Throughout this paper, we use the fact that $\sqsubseteq_S$ is the greatest weak simulation, and $\sqsubseteq_{CS}$ is the greatest coupled simulation~\cite{Coupledsim_Contrasim-AFP}. 

In this paper, we provide a number of results that apply to both preorders and their induced equivalences. To avoid repetition, we fix $x\in\set{S,CS}$ so that $\sqsubseteq_x$ describes the weak simulation preorder, or the coupled similarity preorder. An $x$-simulation is a weak simulation if $x=S$, and a coupled simulation if $x=CS$. For states $p,q$, and LTSs $L$ and $L'$ with initial states $\iota$ and $\iota'$, respectively, we write:
\begin{itemize}
    \item $p\equiv_x q$ iff $p\sqsubseteq_x q$ and $q\sqsubseteq_x p$,
    \item $p\sqsubset_x q$ iff $p\sqsubseteq_x q$, but not $q\sqsubseteq_x p$,
    \item $p\sqsupseteq_x q$ and $p\sqsupset_x q$ iff $q\sqsubseteq_x p$ and $q\sqsubset_x p$, respectively,
    \item $L\equiv_x L'$ iff $\iota\equiv_x\iota'$ in the disjoint union of $L$ and $L'$,
    \item $[p]_x$ for the $\equiv_x$ equivalence class of $p$, and
    \item $p\sqsupset_x$ to denote the set $\setof{q}{p\sqsupset_x q}$.
\end{itemize}
When we say equivalent, we mean equivalent with respect to $\equiv_x$. The examples in this paper work for both $\equiv_S$ (weak simulation equivalence) and $\equiv_{CS}$ (coupled similarity) unless stated otherwise. An \emph{inert} transition is one of the form $p\xrightarrow{\tau}q$ such that $p$ and $q$ are equivalent. 

Quotienting describes the merging of equivalent states in an LTS in such a way that the resulting LTS is again equivalent to the starting LTS.

\begin{definition}[Quotient]
	An LTS, $L=\langle S, \act_\tau, \rightarrow, \iota\rangle$, is a \emph{quotient} LTS iff for all states $p,q\in S$, $p\equiv_x q$ implies $p=q$. 
	A function $\QQ$ mapping one LTS to another is a \emph{quotienting function} iff for all LTSs $L$, $\QQ(L)\equiv_x L$ and $\QQ(L)$ is a quotient LTS. We say that $\QQ(L)$ is a quotient of $L$.
\end{definition}

We use isomorphism to describe uniqueness of a quotient.


\begin{definition}[Isomorphism]
    Let $L=\langle S, \act_\tau, \rightarrow, \iota\rangle$ and $L'=\langle S', \act_\tau, \rightarrow', \iota'\rangle$ be two LTSs. We say $L$ and $L'$ are \emph{isomorphic}, written $L\sim L'$, iff there is a bijection $f:S\rightarrow S'$ such that:
    \begin{itemize}
        \item $f(\iota)=\iota'$; and
        \item $(p,\alpha,q)\in{\rightarrow}$ iff $(f(p), \alpha, f(q))\in{\rightarrow}'$.
    \end{itemize}
    Let $\QQ$ be some quotienting function and $L$ some LTS. We say $\QQ(L)$ is \emph{canonical} or \emph{unique} iff $\QQ(L)\sim\QQ(L')$ for all $L'\equiv_x L$.
\end{definition}

As is typical when discussing quotients~\cite{simulation,eloranta_essential_1997}, we define the size of an LTS using a lexicographic ordering on states and transitions, thereby prioritising a reduction in the number of states over a reduction in the number of transitions.

\begin{definition}[Size]
    Let $L=\langle S, \act_\tau, \rightarrow, \iota\rangle$ and $L'=\langle S', \act_\tau, \rightarrow', \iota'\rangle$ be two LTSs. We write $|L|\leq|L'|$ iff:
    \begin{itemize}
        \item $|S|<|S'|$; or
        \item $|S|=|S'|$ and $|{\rightarrow}|\leq|{\rightarrow}'|$.
    \end{itemize}
    We say $L$ is \emph{minimal} (for $\equiv_x$) iff $|L|\leq |L'|$ for all $L'\equiv_x L$.
\end{definition}

\section{Universal Quotient and Redundant Transitions}\label{sec:quotient}

The universal quotient, or $\forall$-quotient, is a common transformation~\cite{simulation} for various behavioural equivalences involving the merging of equivalent states, see \Cref{fig:forall_quotient}. It gets its name from the $\forall$-quantifier that appears in the definition of the transition relation.

\begin{definition}[$\forall$-Quotient]
    Let $L=\langle S, \act_\tau, \rightarrow, \iota\rangle$ be an LTS. The $\forall$\emph{-quotient} of $L$ is an LTS, $L_{/\equiv}=\langle S_{/\equiv}, \act_\tau, \rightarrow_{/\equiv}, [\iota]_x\rangle$, where:
    \begin{itemize}
        \item $S_{/\equiv}=\setof{[p]_x}{p\in S}$; and
        \item $\rightarrow_{/\equiv}=\setof{
        ([p]_x,\alpha,[q]_x)}{
        \forall p'\in[p]_x,\exists q'\in[q]_x:p'\xRightarrow{\alpha}q'\land (\alpha=\tau\implies p'\not\equiv_x q')
        }$.
    \end{itemize}
\end{definition}
\begin{remark}
The transition relation for the $\forall$-quotient is defined such that $L_{/\equiv}$ has no inert transitions. Successive transformations we define require the absence of such inert transitions, and also do not introduce inert transitions.
\end{remark}

\begin{figure}[H]
    \centering
    \begin{tikzpicture}[node distance = 1.5cm]
        \tikzset{every path/.style={thick}}
        \node[node] (p) {$p$};
        \node[node] (q)[above right = 0.75cm and 1.5cm of p] {$q$};
        \node[node] (r)[below right = 0.75cm and 1.5cm of p] {$r$};
        \node[node] (s)[right=of q] {$s$};
        \node[node] (sp)[right=of r] {$s'$};
        \coordinate[above left=0.25cm and 0.43cm of p] (initL);
        \draw[->] (initL) to (p);

        \draw[->] (p) to["$a$"] (q);
        \draw[->, loop above] (q) to["$b$"] (q);
        \draw[->] (q) to["$\tau$"] (s);
        \draw[->] (p) to["$a$" below left] (r);
        \draw[->] (r) to["$\tau$"] (s);
        \draw[->] (r) to["$b$" below] (sp);

        \coordinate[] (x) at ($(s)!0.5!(sp)$);
        \node[node] (pc) [right=2.5cm of x] {$[p]_x$};
        \node[node] (qc) [above right= 0.75cm and 1.5cm of pc] {$[q]_x$};
        \node[node] (rc) [below right= 0.75cm and 1.5cm of pc] {$[r]_x$};
        \node[node] (sc) [below right= 0.75cm and 1.5cm of qc] {$[s]_x$};
        \coordinate[above left=0.25cm and 0.43cm of pc] (initLc);
        \draw[->] (initLc) to (pc);
        \draw[|->, shorten <=0.75cm, shorten >=0.75cm] (x) to (pc);
        
        \draw[->] (pc) to["$a$"] (qc);
        \draw[->, loop above] (qc) to["$b$"] (qc);
        \draw[->] (qc) to["$\tau$" below left] (sc);
        \draw[->] (pc) to["$a$" below left] (rc);
        \draw[->, bend right] (rc) to["$b$" below right] (sc);
        \draw[->] (rc) to["$\tau$"] (sc);

        \draw[->] (pc) to["$a$"] (sc);
        \draw[->, bend left,] (qc) to["$b$"] (sc);
        
    \end{tikzpicture}
    \caption{The $\forall$-quotient maps the left LTS to the right LTS, where $s'\in[s]_x$.}
    \label{fig:forall_quotient}
\end{figure}

That the $\forall$-quotient \emph{is} indeed a quotienting function is stated in \Cref{prop:quotient}, the proof of which relies on the following lemmas. These require the following definitions, where $\alpha\in\act_\tau$ and $X$ is a set of states, in the context of an LTS $L=\langle S,\act_\tau,\rightarrow,\iota\rangle$.
\begin{align*}
	G_\alpha(p)&=\setof{p'}{p\xRightarrow{\alpha}p'}\\
    \Max(X)&=\setof{p\in X}{\forall q\in X: p\sqsubseteq_x q\implies q\sqsubseteq_x p}\\
    \Min(X)&=\setof{p\in X}{\forall q\in X: q\sqsubseteq_x p\implies p\sqsubseteq_x q}
\end{align*}
We remark that whenever $X$ is non-empty and finite, then $\Max(X)$ and $\Min(X)$ are non-empty.
\begin{restatable}{lemma}{tausucc}\label{lem:tau_succ}
    If $p\xRightarrow{\tau} q$, then $q\sqsubseteq_x p$.
\end{restatable}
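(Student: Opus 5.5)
We prove the claim by exhibiting a suitable relation and showing it is an $x$-simulation. Then the fact that $\sqsubseteq_x$ is the greatest $x$-simulation gives $q\sqsubseteq_x p$. The natural candidate is
\[
\RR \;=\; \setof{(q,p)}{p\xRightarrow{\tau}q}\;\cup\;{\sqsubseteq_x}.
\]
Alternatively, one can take just $\RR_0=\setof{(q,p)}{p\xRightarrow{\tau}q}$ and check that it is closed enough. We will use $\RR_0$ directly, since it suffices.

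\emph{Simulation condition.} Let $p\xRightarrow{\tau}q$ and $q\xrightarrow{\alpha}q'$. We need $p'$ with $p\xRightarrow{\alpha}p'$ and $(q',p')\in\RR_0$. Choose $p'=q'$; then $(q',q')\in\RR_0$ because $q'\xRightarrow{\tau}q'$ via zero steps. It remains to show $p\xRightarrow{\alpha}q'$, splitting on $\alpha$:
\begin{itemize}
\item If $\alpha=\tau$, then $p\xRightarrow{\tau}q\xrightarrow{\tau}q'$, so $q'$ is reachable from $p$ by $\tau$-steps.
\item If $\alpha=a\in\act$, then $p\xRightarrow{\tau}q\xrightarrow{a}q'\xRightarrow{\tau}q'$ fits the definition of $p\xRightarrow{a}q'$ with intermediate states $q$ and $q'$.
\end{itemize}
Hence $\RR_0$ is a weak simulation, which settles the case $x=S$.

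\emph{Coupling condition (case $x=CS$).} For $(q,p)\in\RR_0$ we need $p''$ with $p\xRightarrow{\tau}p''$ and $(p'',q)\in\RR_0$, that is, $q\xRightarrow{\tau}p''$. Taking $p''=q$ works: $p\xRightarrow{\tau}q$ holds by assumption, and $q\xRightarrow{\tau}q$ holds trivially. So $\RR_0$ is also a coupled simulation.

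There is no real obstacle here. The only points needing care are that $\xRightarrow{\tau}$ is reflexive and transitive, which follows from its definition as reachability by zero or more $\tau$-steps, and that the definition of $\xRightarrow{a}$ allows $\tau$-prefixes and suffixes. Once these are noted, everything reduces to the observation that $\RR_0$ contains the identity and is closed under moving forward along transitions.
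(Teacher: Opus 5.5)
Your proof is correct and takes essentially the same route as the paper. The paper uses the identity plus the single pair $(q,p)$, whereas you take the reverse of $\xRightarrow{\tau}$, which contains the identity by reflexivity. In both arguments $q\xrightarrow{\alpha}q'$ is matched by $p\xRightarrow{\alpha}q'$ landing in an identity pair, and coupling is witnessed by $p\xRightarrow{\tau}q$ together with $(q,q)$.
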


\begin{restatable}{lemma}{simact}\label{lem:sim_act}
    If $p\sqsubseteq_x q$ and $p\xRightarrow{\alpha}p'$, then there exists $q'$ such that $q\xRightarrow{\alpha}q'$ and $p'\sqsubseteq_x q'$.	
\end{restatable}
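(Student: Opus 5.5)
The statement to prove is \Cref{lem:sim_act}: if $p\sqsubseteq_x q$ and $p\xRightarrow{\alpha}p'$, then some $q'$ has $q\xRightarrow{\alpha}q'$ and $p'\sqsubseteq_x q'$. The simulation condition only matches \emph{strong} steps by weak ones, so the plan is to lift it to weak steps by induction on the length of the underlying path of strong transitions. Throughout we use that $\sqsubseteq_x$ is itself an $x$-simulation (the greatest one). Hence only its simulation condition is needed, and the coupling condition plays no role. The argument is therefore uniform in $x\in\set{S,CS}$.

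\textbf{The case $\alpha=\tau$.} Here $p\xRightarrow{\tau}p'$ means there is a path $p=p_0\xrightarrow{\tau}p_1\xrightarrow{\tau}\cdots\xrightarrow{\tau}p_n=p'$ with $n\geq 0$. We induct on $n$ and claim there is $q_n$ with $q\xRightarrow{\tau}q_n$ and $p_n\sqsubseteq_x q_n$. For $n=0$, take $q_0=q$, using reflexivity of $\xRightarrow{\tau}$. For the step, suppose $q\xRightarrow{\tau}q_i$ and $p_i\sqsubseteq_x q_i$. Applying the simulation condition to $p_i\xrightarrow{\tau}p_{i+1}$ yields $q_{i+1}$ with $q_i\xRightarrow{\tau}q_{i+1}$ and $p_{i+1}\sqsubseteq_x q_{i+1}$. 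We then conclude using transitivity of $\xRightarrow{\tau}$, since concatenating $\tau$-paths gives a $\tau$-path.

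\textbf{The case $\alpha=a\in\act$.} Decompose $p\xRightarrow{a}p'$ as $p\xRightarrow{\tau}p_1\xrightarrow{a}p_2\xRightarrow{\tau}p'$. We proceed in three steps.
\begin{itemize}
\item Apply the $\tau$ case to $p\xRightarrow{\tau}p_1$, obtaining $q_1$ with $q\xRightarrow{\tau}q_1$ and $p_1\sqsubseteq_x q_1$.
\item Apply the simulation condition to the strong step $p_1\xrightarrow{a}p_2$, obtaining $q_2$ with $q_1\xRightarrow{a}q_2$ and $p_2\sqsubseteq_x q_2$.
\item Apply the $\tau$ case to $p_2\xRightarrow{\tau}p'$, obtaining $q'$ with $q_2\xRightarrow{\tau}q'$ and $p'\sqsubseteq_x q'$.
\end{itemize}
It remains to show $q\xRightarrow{a}q'$. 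This holds because the relation $\xRightarrow{a}$ absorbs $\tau$-prefixes and $\tau$-suffixes: if $r\xRightarrow{\tau}r'\xRightarrow{a}s'\xRightarrow{\tau}s$, then $r\xRightarrow{a}s$. This follows by unfolding the definition of $\xRightarrow{a}$ and concatenating $\tau$-paths.

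The only step needing any care is this composition of weak transitions, together with the observation that the simulation condition yields a \emph{weak} $\alpha$-step. Neither is a real obstacle. The proof is routine once the $\tau$ case has been established as a separate induction.
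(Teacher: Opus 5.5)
Your proof is correct, but it takes a different route from the paper. The paper does no induction. It writes $p\xRightarrow{\alpha}p'$ as $p\xRightarrow{\tau}r\xrightarrow{\alpha}r'\xRightarrow{\tau}p'$ and uses \Cref{lem:tau_succ} with transitivity of $\sqsubseteq_x$ to get $r\sqsubseteq_x p\sqsubseteq_x q$. It then applies the simulation condition once, to the single strong step $r\xrightarrow{\alpha}r'$, obtaining $q\xRightarrow{\alpha}q'$ with $r'\sqsubseteq_x q'$. Finally it uses \Cref{lem:tau_succ} and transitivity again for $p'\sqsubseteq_x r'\sqsubseteq_x q'$. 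So the $\tau$-prefix and $\tau$-suffix are absorbed into the preorder rather than matched step by step on the side of $q$.

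That is shorter given the surrounding lemmas, but it has two costs. It relies on transitivity of $\sqsubseteq_x$, which the paper takes as known; the usual proof of that fact, already for weak similarity, goes through exactly the weak-step lifting you perform. Also, its decomposition silently excludes the zero-step case $\alpha=\tau$, $p=p'$, though that case is trivial with $q'=q$.

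Your argument uses only the simulation condition and concatenation of $\tau$-paths. It therefore holds verbatim for any weak simulation in place of $\sqsubseteq_x$, needs neither \Cref{lem:tau_succ} nor transitivity, and handles the empty path explicitly. The price is the induction on path length, which the paper's preorder-based shortcut avoids.
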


\begin{restatable}{lemma}{lmax}\label{lem:max}
    Let $p\in S,\alpha\in\act_\tau$. Then, for all $q\equiv_x p$ and all $p'\in\Max(G_\alpha(p))$, there is some $q'\in S$ such that $q\xRightarrow{\alpha}q'$ and $p'\equiv_x q'$.
\end{restatable}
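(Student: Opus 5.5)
We prove \Cref{lem:max} by combining \Cref{lem:sim_act} (applied twice, back and forth between $p$ and $q$) with the maximality of $p'$ in $G_\alpha(p)$.

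Fix $q\equiv_x p$ and $p'\in\Max(G_\alpha(p))$, so $p\xRightarrow{\alpha}p'$.

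First, since $p\sqsubseteq_x q$, \Cref{lem:sim_act} gives some $q'$ with $q\xRightarrow{\alpha}q'$ and $p'\sqsubseteq_x q'$. This $q'$ is our candidate, and it remains to show $q'\sqsubseteq_x p'$.

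Second, we use $q\sqsubseteq_x p$. Applying \Cref{lem:sim_act} to $q\xRightarrow{\alpha}q'$ yields some $p''$ with $p\xRightarrow{\alpha}p''$ and $q'\sqsubseteq_x p''$. Hence $p''\in G_\alpha(p)$.

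Third, we need transitivity of $\sqsubseteq_x$. For weak similarity this is standard. For coupled similarity it holds because $\sqsubseteq_{CS}$ is a preorder, which we take from the cited literature~\cite{Coupledsim_Contrasim-AFP}. Chaining the two steps gives $p'\sqsubseteq_x q'\sqsubseteq_x p''$, so $p'\sqsubseteq_x p''$. Since $p'\in\Max(G_\alpha(p))$ and $p''\in G_\alpha(p)$, maximality gives $p''\sqsubseteq_x p'$. Transitivity again yields $q'\sqsubseteq_x p''\sqsubseteq_x p'$, so $q'\sqsubseteq_x p'$. Together with $p'\sqsubseteq_x q'$ this gives $p'\equiv_x q'$, as required.

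The argument does not depend on whether $\alpha=\tau$, because \Cref{lem:sim_act} covers weak $\tau$-steps, including the empty one. The only point needing care is the transitivity of $\sqsubseteq_{CS}$, which the paper uses implicitly (for example, $\equiv_x$ is treated as an equivalence); we simply invoke it.
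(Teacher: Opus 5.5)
Your proof is correct and is essentially the paper's. The paper uses exactly this back-and-forth via \Cref{lem:sim_act} plus maximality of $p'$ for $\alpha\neq\tau$, but treats $\alpha=\tau$ separately: since $p\in G_\tau(p)$, maximality forces $p'\equiv_x p$, so $q'=q$ works. Your uniform treatment is fine because the statement of \Cref{lem:sim_act} does hold for $\alpha=\tau$: for any $p\xRightarrow{\tau}p'$ one may take $q'=q$, since $p'\sqsubseteq_x p\sqsubseteq_x q$ by \Cref{lem:tau_succ}. This holds even though the paper's proof of that lemma only spells out the non-empty decomposition.
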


\begin{restatable}{lemma}{lmin}\label{lem:min}
    Let $p\in S$ and suppose $x=CS$. Then, for all $q\in S$ such that $p\sqsubseteq q$, and all $p'\in\Min(G_\tau(p))$, there is some $q'\in S$ such that $q\xRightarrow{\tau}q'$ and $p'\equiv_x q'$.	
\end{restatable}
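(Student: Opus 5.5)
The plan is to combine the simulation condition, applied along the weak $\tau$-path, with the coupling condition applied twice. The minimality of $p'$ in $G_\tau(p)$ is what closes the loop. Throughout, $\sqsubseteq$ denotes $\sqsubseteq_{CS}$. I will use that $\sqsubseteq_{CS}$ is a preorder, in particular transitive, which is a standard fact for the greatest coupled simulation~\cite{Coupledsim_Contrasim-AFP}. I will also use that $\xRightarrow{\tau}$ is transitive.

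\textbf{Step 1.} Since $p'\in G_\tau(p)$, we have $p\xRightarrow{\tau}p'$. Applying \Cref{lem:sim_act} to $p\sqsubseteq q$ yields some $q_1$ with $q\xRightarrow{\tau}q_1$ and $p'\sqsubseteq q_1$.

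\textbf{Step 2.} Because $\sqsubseteq_{CS}$ is itself a coupled simulation, the coupling condition for the pair $(p',q_1)$ gives some $q_2$ with $q_1\xRightarrow{\tau}q_2$ and $q_2\sqsubseteq p'$. Then $q\xRightarrow{\tau}q_2$, so $q_2$ is the candidate for $q'$. It remains to show $p'\sqsubseteq q_2$.

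\textbf{Step 3.} This is the key step, where minimality is used. Apply the coupling condition once more, now to the pair $(q_2,p')$. This gives some $p_2$ with $p'\xRightarrow{\tau}p_2$ and $p_2\sqsubseteq q_2$. By transitivity of $\xRightarrow{\tau}$, we get $p\xRightarrow{\tau}p_2$, so $p_2\in G_\tau(p)$. By transitivity of $\sqsubseteq$, from $p_2\sqsubseteq q_2\sqsubseteq p'$ we get $p_2\sqsubseteq p'$. Since $p'\in\Min(G_\tau(p))$, this forces $p'\sqsubseteq p_2$. Hence $p'\sqsubseteq p_2\sqsubseteq q_2$, so $p'\sqsubseteq q_2$.

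Together with $q_2\sqsubseteq p'$ this gives $p'\equiv_{CS}q_2$, and $q'\coloneqq q_2$ witnesses the lemma.

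The main subtlety is realising that the coupling condition must be applied a second time, in the reverse direction (to the pair $(q_2,p')$). This produces a $\tau$-successor of $p'$, and hence of $p$, that lies below $p'$, and minimality can then be invoked. Everything else is routine transitivity reasoning.
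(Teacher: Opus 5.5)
Your proof is correct and follows the paper's argument. The paper also applies the coupling condition to get $q\xRightarrow{\tau}q'$ with $q'\sqsubseteq_x p'$, applies it again to the pair $(q',p')$ to get a $\tau$-successor of $p'$ below $q'$, and closes with minimality of $p'$ in $G_\tau(p)$. The one difference is your Step~1: the paper just notes $p'\sqsubseteq_x p\sqsubseteq_x q$ by \Cref{lem:tau_succ}, which amounts to taking $q_1=q$, rather than invoking \Cref{lem:sim_act}.
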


\begin{restatable}[$\forall$-Quotient]{proposition}{quotient}\label{prop:quotient}
    The function mapping an LTS $L$ to its $\forall$-quotient, $L_{/\equiv}$, is a quotienting function. Moreover, any state $p$ of $L$ is equivalent to the state $[p]_x$ of $L_{/\equiv}$.
\end{restatable}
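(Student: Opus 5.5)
We need two things. First, $L_{/\equiv}$ is a quotient LTS: no two distinct states of it are equivalent. Second, every $p$ in $L$ is equivalent to $[p]_x$ in $L_{/\equiv}$; applied to $\iota$ this gives $L\equiv_x L_{/\equiv}$. We prove the second claim first, working in the disjoint union of $L$ and $L_{/\equiv}$. The quotient property then follows. Indeed, if $[p]_x\equiv_x[q]_x$ in $L_{/\equiv}$, then $p\equiv_x[p]_x\equiv_x[q]_x\equiv_x q$ by transitivity of $\equiv_x$ in the disjoint union. Since the relation between states of $L$ in the union coincides with that in $L$, this gives $[p]_x=[q]_x$.

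\textbf{Direction $[p]_x\sqsubseteq_x p$.} Take $\RR_1=\setof{([p]_x,q)}{q\sqsupseteq_x p}$ and show it is an $x$-simulation. Suppose $[p]_x\xrightarrow{\alpha}[r]_x$ in the quotient. By the $\forall$-condition applied to $p$ itself, there is $r'\in[r]_x$ with $p\xRightarrow{\alpha}r'$. By \Cref{lem:sim_act}, $q$ has a weak $\alpha$-step to some $q'\sqsupseteq_x r'\equiv_x r$, so $([r]_x,q')\in\RR_1$. For $x=CS$ we also need the coupling condition: $q\xRightarrow{\tau}q'$ with $q'\RR^{-1}$-related back. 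We pick $q'=q$ and must show $q\sqsubseteq_x[p]_x$, which depends on the other direction. We therefore define the two relations together and take their union $\RR=\RR_1\cup\RR_2$, checking the conditions for the union.

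\textbf{Direction $p\sqsubseteq_x[p]_x$.} Take $\RR_2=\setof{(q,[p]_x)}{q\sqsubseteq_x p}$. Suppose $q\xrightarrow{\alpha}q'$ and $q\sqsubseteq_x p$. Then $p\xRightarrow{\alpha}p'$ for some $p'\sqsupseteq_x q'$. Choose $p''\in\Max(G_\alpha(p))$ with $p'\sqsubseteq_x p''$; this exists by finiteness. By \Cref{lem:max}, every $p_0\in[p]_x$ reaches, by a weak $\alpha$-step, some state equivalent to $p''$.

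\begin{itemize}
\item If $\alpha\neq\tau$, or if $\alpha=\tau$ and $p''\not\equiv_x p$, this yields a strong transition $[p]_x\xrightarrow{\alpha}[p'']_x$. The $\tau$ proviso holds because each such $p_0\equiv_x p\not\equiv_x p''$.
\item If $\alpha=\tau$ and $p''\equiv_x p$, the empty weak step of $[p]_x$ suffices, since $q'\sqsubseteq_x p''\equiv_x p$.
\end{itemize}

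The main obstacle is that a strong transition in the quotient gives only a weak transition, whereas the target needs weak steps. This is fine, because weak steps of the quotient compose. What requires care is that the simulation condition demands a matching weak step from $[p]_x$, and we just produced a single strong one, so this direction is fine.

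\textbf{Coupling for $x=CS$.} For a pair $(q,[p]_x)\in\RR_2$ we need $[p]_x\xRightarrow{\tau}[s]_x$ with $([s]_x,q)\in\RR$, i.e.\ $s\sqsubseteq_x q$. Since $q\sqsubseteq_{CS}p$, the coupling condition gives $p\xRightarrow{\tau}p_1$ with $p_1\sqsubseteq_{CS}q$. Take $s\in\Min(G_\tau(p))$ with $s\sqsubseteq_x p_1$. By \Cref{lem:min}, every element of $[p]_x$ reaches by $\tau$-steps a state equivalent to $s$. So either $s\equiv_x p$, and we stay put, or the quotient has $[p]_x\xrightarrow{\tau}[s]_x$. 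Then $([s]_x,q)\in\RR_1$ because $q\sqsupseteq_x s$. Symmetrically, for $([p]_x,q)\in\RR_1$ we use the stay-put step together with $(q,[p]_x)$. Here we need $q\sqsubseteq_x p$, which fails in general, since only $q\sqsupseteq_x p$ is known. We fix this by using the coupling condition of $p\sqsubseteq_{CS}q$: $q\xRightarrow{\tau}q_1$ with $q_1\sqsubseteq p$, and then $(q_1,[p]_x)\in\RR_2$. Verifying this interplay of \Cref{lem:min} with the coupling condition is the delicate step.
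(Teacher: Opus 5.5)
Your proposal is correct and is essentially the paper's proof: $\RR_1\cup\RR_2$ is exactly the paper's relation $\RR$, the simulation condition is discharged via \Cref{lem:sim_act} and \Cref{lem:max}, the coupling condition via the coupling condition of $\sqsubseteq_{CS}$ together with \Cref{lem:min}, and the quotient property follows by the same transitivity argument. When writing it up, drop the false start of choosing $q'=q$ for pairs in $\RR_1$, since your later repair via $q\xRightarrow{\tau}q_1\sqsubseteq_x p$, giving $(q_1,[p]_x)\in\RR_2$, is the correct argument; also drop the muddled ``main obstacle'' remark, since a single strong step of the quotient is already a weak step.
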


If an LTS has at least two equivalent states, then the $\forall$-quotient guarantees a reduction in the size of the LTS. However, if no two states are equivalent, then the $\forall$-quotient essentially computes the transitive closure of the given LTS. That is, it adds a $p\xrightarrow{\alpha}q$ transition whenever $p\xRightarrow{\alpha}q$, potentially creating a \emph{duplicate transition} and increasing the size of the LTS.

\begin{definition}[Duplicate Transitions~\cite{eloranta_essential_1997}]
	Let $L=\langle S, \act_\tau, \rightarrow, \iota\rangle$ be an LTS without inert transitions. We say a transition $p\xrightarrow{\alpha}q$ of $L$ \emph{duplicates} $p'\xrightarrow{\alpha}q'$ iff:
    \begin{itemize}
        \item $p\xRightarrow{\tau}p'$, 
		\item $q'\xRightarrow{\tau}q$, and
        \item $(p,q)\neq(p',q')$.
    \end{itemize}
    We say $p\xrightarrow{\alpha}q$ is a \emph{duplicate} if there is some $p'\xrightarrow{\alpha}q'$ that $p\xrightarrow{\alpha}q$ duplicates.
\end{definition}

Conversely, if some $p\xrightarrow{\alpha}q$ transition is a duplicate, then $p$ can reach $q$ with a weak $\alpha$-transition without using the duplicate $p\xrightarrow{\alpha}q$ transition, which can be removed. Consider the LTS, $L$, on the right in \Cref{fig:forall_quotient}, and the LTS, $L'$, on the left in \Cref{fig:covered_intro}. The two LTSs are equivalent, and computing the $\forall$-quotient of $L'$ results in (an LTS isomorphic to) the larger LTS $L$. Conversely, removing all duplicate transitions of $L$ results in (an LTS isomorphic to) the smaller LTS $L'$.

\begin{figure}[ht]
    \centering
    \begin{tikzpicture}[node distance = 1.5cm]
        \tikzset{every path/.style={thick}}
        \node[node] (p) {$p$};
        \node[node] (q)[above right = 0.75cm and 1.5cm of p] {$q$};
        \node[node] (r)[below right = 0.75cm and 1.5cm of p] {$r$};
        \node[node] (s)[above right=0.75cm and 1.5cm of r] {$s$};
        \coordinate[above left=0.25cm and 0.43cm of p] (initL);
        \draw[->] (initL) to (p);

        \draw[->] (p) to["$a$"] (q);
        \draw[->, loop above] (q) to["$b$"] (q);
        \draw[->, bend left] (q) to["$\tau$"] (s);
        \draw[->] (p) to["$a$"] (r);
        \draw[->, bend left] (r) to["$\tau$"] (s);
        \draw[->, bend right] (r) to["$b$" below right] (s);
        
        \node[node] (pp)[right=2.5cm of s] {$p$};
        \node[node] (qp)[above right = 0.75cm and 1.5cm of pp] {$q$};
        \node[node] (rp)[below right = 0.75cm and 1.5cm of pp] {$r$};
        \node[node] (sp)[above right=0.75cm and 1.5cm of rp] {$s$};
        \coordinate[above left=0.25cm and 0.43cm of pp] (initLp);
        \draw[->] (initLp) to (pp);

        \draw[->] (pp) to["$a$"] (qp);
        \draw[->, loop above] (qp) to["$b$"] (qp);
        \draw[->, bend left] (qp) to["$\tau$"] (sp);
        \draw[->, bend left] (rp) to["$\tau$"] (sp);
        \draw[->, bend right] (rp) to["$b$" below right] (sp);
        
        \draw[|->, shorten <=0.75cm, shorten >=0.75cm] (s) to (pp);
    \end{tikzpicture}
    \caption{Two equivalent LTSs illustrating the need for a stronger notion of duplicate transitions.}
    \label{fig:covered_intro}
\end{figure}

Removing all duplicate transitions after quotienting results in an LTS no larger than the input LTS---we prove a slightly different result in \Cref{prop:covered_quotient}. For weak bisimilarity, this procedure describes a canonical, minimal LTS~\cite{eloranta_minimizing_1991,eloranta_essential_1997}. However, this is not the case for coupled or weak simulation equivalence, as there may be removable transitions that are not included in the notion of duplicate transitions. Consider the LTSs in \Cref{fig:covered_intro}. The $p\xrightarrow{a}r$ transition may be removed since $p\xrightarrow{a}q$ and $r\sqsubset_x q$. Consequently, $r$ may be removed since it is unreachable. The $p\xrightarrow{a}r$ transition is what we call a \emph{covered transition}. Removing covered transitions is similar to removing duplicate transitions, except that we do not require a weak transition to the same state as the covered transition, but simply to some greater state, see \Cref{fig:covered_demo}. The concept of a covered transition resembles that of a `little brother' in the context of simulation equivalence for Kripke structures~\cite{DBLP:conf/icalp/KuceraM99,simulation}.

\begin{definition}[Covered Transitions]\label{def:covered}
    Let $L=\langle S, \act_\tau, \rightarrow, \iota\rangle$ be an LTS without inert transitions. We say a transition $p\xrightarrow{\alpha}q$ of $L$ is \emph{covered by} $p'\xrightarrow{\alpha}q'$ iff:
    \begin{itemize}
        \item $p\xRightarrow{\tau}p'$,
        \item $q\sqsubseteq q'$, and
        \item $(p,q)\neq(p',q')$.
    \end{itemize}
    We say $p\xrightarrow{\alpha}q$ is \emph{covered} if there is some $p'\xrightarrow{\alpha}q'$ that covers $p\xrightarrow{\alpha}q$.
\end{definition}

\begin{figure}[ht]
    \centering
	\hfill
    \begin{tikzpicture}[node distance = 1.5cm]
        \tikzset{every path/.style={thick}}
        \node[node] (p) {$p$};
        \node[node] (pp)[below=of p] {$p'$};
        \node[node] (rp)[right=of pp] {$q'$};
        \node[node] (q)[above=of rp] {$q$};

        \draw[-Implies, double equal sign distance] (p) to["$\tau$" left] (pp);
        \draw[->] (pp) to["$\alpha$" below] (rp);
        \draw[-Implies, double equal sign distance] (rp) to["$\tau$" right] (q);        
        \draw[->] (p) to["$\alpha$"] (q);
    \end{tikzpicture}
	\hfill
    \begin{tikzpicture}[node distance = 1.5cm]
        \tikzset{every path/.style={thick}}
        \node[node] (p) {$p$};
        \node[node] (pp)[below=of p] {$p'$};
        \node[node] (rp)[right=of pp] {$q'$};
        \node[node] (q)[above=of rp] {$q$};

        \draw[-Implies, double equal sign distance] (p) to["$\tau$" left] (pp);
        \draw[->] (pp) to["$\alpha$" below] (rp);
        \draw[->, dashed] (q) to (rp);        
        \draw[->] (p) to["$\alpha$"] (q);
    \end{tikzpicture}
	\hfill\mbox{}
    \caption{Comparing duplicate and covered transitions. The $p\xrightarrow{\alpha}q$ transition on the left (right) is duplicated (covered) by the $p'\xrightarrow{\alpha}q'$ transition.}
    \label{fig:covered_demo}
\end{figure}

\begin{restatable}{lemma}{pcovered}\label{prop:covered} 
    Let $L=\langle S, \act_\tau, \rightarrow, \iota\rangle$ be an LTS without inert transitions, and $p\xrightarrow{\alpha}q$ some transition of $L$ covered by $p'\xrightarrow{\alpha}q'$. Let $s,t\in S$ be arbitrary, then:
    \[
        s\xRightarrow{\tau}p'\xrightarrow{\alpha}q'\xRightarrow{\tau}t\text{ in }L\iff s\xRightarrow{\tau}p'\xrightarrow{\alpha}q'\xRightarrow{\tau}t\text{ in }L\setminus\set{(p,\alpha,q)}\quad.
    \]	
\end{restatable}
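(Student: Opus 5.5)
The direction from right to left is immediate, since $L\setminus\set{(p,\alpha,q)}$ has a subset of the transitions of $L$, so every path in the smaller LTS is also a path in $L$. For the direction from left to right, I would fix a path $s\xRightarrow{\tau}p'\xrightarrow{\alpha}q'\xRightarrow{\tau}t$ in $L$ and show that it never uses the transition $(p,\alpha,q)$. It is then literally a path in $L\setminus\set{(p,\alpha,q)}$. The middle step is harmless, because $(p',q')\neq(p,q)$ by \Cref{def:covered}, so $p'\xrightarrow{\alpha}q'$ is a different transition from $p\xrightarrow{\alpha}q$.

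If $\alpha\neq\tau$, the two weak segments consist only of $\tau$-transitions and so cannot contain $p\xrightarrow{\alpha}q$. We are then done.

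The remaining case $\alpha=\tau$ is the real content, and I would argue by contradiction using \Cref{lem:tau_succ} together with the absence of inert transitions. The key observation is that $p'\xrightarrow{\tau}q'$ is a transition of $L$, so $p'\not\equiv_x q'$. Moreover \Cref{lem:tau_succ} already gives $q'\sqsubseteq_x p'$. It therefore suffices to derive $p'\sqsubseteq_x q'$ in each bad case.

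First suppose the segment $s\xRightarrow{\tau}p'$ uses $p\xrightarrow{\tau}q$. Then its suffix gives $q\xRightarrow{\tau}p'$, so \Cref{lem:tau_succ} yields $p'\sqsubseteq_x q$. The covering condition gives $q\sqsubseteq_x q'$. Transitivity of the preorder then gives $p'\sqsubseteq_x q'$, hence $p'\equiv_x q'$, contradicting non-inertness.

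Now suppose the segment $q'\xRightarrow{\tau}t$ uses $p\xrightarrow{\tau}q$. Then its prefix gives $q'\xRightarrow{\tau}p$, and combining with $p\xRightarrow{\tau}p'$ from the covering condition we get $q'\xRightarrow{\tau}p'$. By \Cref{lem:tau_succ}, $p'\sqsubseteq_x q'$, which again gives the same contradiction.

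I expect the only subtle point to be noticing that one should not try to reroute around the removed transition, since replacing $p\xrightarrow{\tau}q$ by $p\xRightarrow{\tau}p'\xrightarrow{\tau}q'$ does not reach $q$. Instead, one shows that the removed transition cannot occur at all on such a path. That argument needs only the two facts $q\sqsubseteq_x q'$ and $p\xRightarrow{\tau}p'$ from the covering definition, plus transitivity of $\sqsubseteq_x$, which holds for both the weak simulation and coupled similarity preorders.
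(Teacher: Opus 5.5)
Your proposal is correct and takes essentially the same approach as the paper. The paper also argues that the removed transition cannot occur on either $\tau$-segment. On the first segment it derives $p'\equiv_x q'$ from \Cref{lem:tau_succ} together with $q\sqsubseteq_x q'$; on the second it obtains a $\tau$-cycle through $p'$ and $q'$. Both cases contradict the absence of inert transitions, and the middle step is handled by $(p',q')\neq(p,q)$, as you do.
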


\begin{restatable}[Removing Covered Transitions]{corollary}{ccovered}\label{cor:covered}
    Let $L=\langle S, \act_\tau, \rightarrow, \iota\rangle$ be an LTS without inert transitions, and let $p\xrightarrow{\alpha}q$ be some covered transition of $L$. Let $\bar{L}\coloneqq L\setminus\set{(p,\alpha,q)}$. Then $L\equiv_x \bar{L}$. Moreover, for all $s\in S$, it holds that $s$ in $L$ is equivalent to $s$ in $\bar{L}$.	
\end{restatable}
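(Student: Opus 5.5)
The plan is to work in the disjoint union of $L$ and $\bar{L}$ and exhibit two relations that are $x$-simulations there:
\[
\RR_1=\setof{(u,\bar{v})}{u\sqsubseteq_x v\text{ in }L},\qquad \RR_2=\setof{(\bar{w},u)}{w\sqsubseteq_x u\text{ in }L}.
\]
Taking $u=v$, resp.\ $w=u$, gives $s\sqsubseteq_x\bar{s}$ and $\bar{s}\sqsubseteq_x s$ for every $s\in S$. Instantiating with $s=\iota$ then yields $L\equiv_x\bar{L}$.

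\emph{Simulation condition for $\RR_2$.} This is the easy direction. Every transition of $\bar{L}$ is a transition of $L$. So a step $\bar{w}\xrightarrow{\alpha}\bar{w}'$ can be matched using \Cref{lem:sim_act} inside $L$, giving $u\xRightarrow{\alpha}u'$ with $w'\sqsubseteq_x u'$, and $(\bar{w}',u')\in\RR_2$.

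\emph{Simulation condition for $\RR_1$.} I would first prove an auxiliary claim: if $v\xRightarrow{\alpha}w$ in $L$, then there is $w'$ with $v\xRightarrow{\alpha}w'$ in $\bar{L}$ and $w\sqsubseteq_x w'$ in $L$. Given a path in $L$, look at the first occurrence of the removed transition $(p,\beta,q)$. The prefix $v\xRightarrow{\tau}p$ before it lies in $\bar{L}$. The covering transition $p'\xrightarrow{\beta}q'$ gives $p\xRightarrow{\tau}p'\xrightarrow{\beta}q'$, and by \Cref{prop:covered} (with $s=p$, $t=q'$) this path survives in $\bar{L}$. The remainder of the path runs from $q$. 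Since $q\sqsubseteq_x q'$, \Cref{lem:sim_act} lets $q'$ mimic that remainder in $L$, reaching a state above $w$. Transitivity of $\sqsubseteq_x$ and \Cref{lem:tau_succ}, used to absorb trailing $\tau$-steps, give the required order. With the claim, a step $u\xrightarrow{\alpha}u'$ is first matched in $L$ by \Cref{lem:sim_act} and then transferred to $\bar{L}$, so the new pair again lies in $\RR_1$.

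\emph{Termination of the claim.} The replacement path from $q'$ may itself use the removed transition, so some induction is needed. I would induct on the number of visible steps still to be performed; for $\tau$-segments, I would induct on the position of $q'$ in $\sqsubseteq_x$, which is well-founded since $S$ is finite. The fact that $(p,q)\neq(p',q')$ and that there are no inert transitions should rule out cycling back to the same use of the removed transition. This is the step I expect to be fiddly.

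\emph{Coupling condition ($x=CS$).} For $(\bar{w},u)\in\RR_2$ we have $w\sqsubseteq_{CS}u$. Its coupling gives $u\xRightarrow{\tau}u'$ in $L$ with $u'\sqsubseteq_{CS}w$, so $(u',\bar{w})\in\RR_1$, as required. For $(u,\bar{v})\in\RR_1$ we need $\bar{v}\xRightarrow{\tau}\bar{v}'$ in $\bar{L}$ with $v'\sqsubseteq_{CS}u$.

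This second case is the main obstacle. The auxiliary claim only produces states that are \emph{larger}, which is the wrong direction for coupling. My plan is to choose the coupling witness via \Cref{lem:min}, taking $v_0\in\Min(G_\tau(v))$, and show that a minimal $\tau$-reachable state can be reached in $\bar{L}$. Concretely, I would argue that if every path to the class of $v_0$ used the removed $\tau$-transition $p\xrightarrow{\tau}q$, then the covering path through $p'\xrightarrow{\tau}q'$, together with \Cref{lem:min} applied at $q'\sqsupseteq_x q$, yields an equivalent state reachable without it. If that fails, I would fall back to strengthening \Cref{prop:covered} to cover $\tau$-reachability of minimal states directly.
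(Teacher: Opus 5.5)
Your proposal is correct, and its skeleton is the paper's. The paper uses exactly $\RR_1\cup\RR_2$, and the $\bar{L}$-to-$L$ direction is trivial in both. For $x=CS$ it is this union, not each $\RR_i$ on its own, that is a coupled simulation, since the coupling witnesses for one half land in the other.

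The differences are in how the removed transition is bypassed.

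\textbf{Simulation condition.} The paper matches a visible step by $t\xRightarrow{\tau}r\xrightarrow{a}r'$, discards the trailing $\tau$-steps with \Cref{lem:tau_succ}, and reroutes once through the covering transition via \Cref{prop:covered}. It does not treat separately the case you worry about, where the removed transition is a $\tau$-step in the prefix. That case needs only one reroute, not an induction. Once you are at $q'$, the $\tau$-prefix $q'\xRightarrow{\tau}r_1$ of the replacement match falls under the trailing clause of \Cref{prop:covered} (take $t=r_1$). The reason is that $q'\xRightarrow{\tau}p$ would put $p'$ and $q'$ on a $\tau$-cycle and make $p'\xrightarrow{\tau}q'$ inert. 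Your well-founded measure is still sound, since $q'\sqsubset_x p'\sqsubseteq_x v$.

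\textbf{Coupling condition for pairs $(s,\bar{t})$.} The paper sidesteps your direction problem by applying coupling rather than simulation at the covering target. Suppose the witness path $t\xRightarrow{\tau}t'$ uses $p\xrightarrow{\tau}q$. Then $t'\sqsubseteq_x q$ by \Cref{lem:tau_succ}, and $q\sqsubseteq_x q'$ by coveredness. Coupling for $t'\sqsubseteq_{CS}q'$ then gives $q'\xRightarrow{\tau}q''$ with $q''\sqsubseteq_x t'\sqsubseteq_x s$. The whole path $t\xRightarrow{\tau}p'\xrightarrow{\tau}q'\xRightarrow{\tau}q''$ survives in $\bar{L}$ by \Cref{prop:covered}. (The printed chain in the paper begins with $s\sqsubseteq_x t'$, a slip for $t'\sqsubseteq_x s$; the argument above is the intended one.)

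Your \Cref{lem:min} route also works and needs no strengthening of \Cref{prop:covered}:
\begin{itemize}
\item choose $v_0\in\Min(G_\tau(v))$ below the given coupling witness, since an arbitrary minimal element need not lie below $u$;
\item note $v_0\in\Min(G_\tau(q))$, because $G_\tau(q)\subseteq G_\tau(v)$;
\item apply \Cref{lem:min} to $q\sqsubseteq_x q'$;
\item carry $q'\xRightarrow{\tau}q''$ into $\bar{L}$ with the same trailing clause.
\end{itemize}
\Cref{lem:min} is itself two applications of coupling and yields $q''\equiv_x v_0$, which is more than you need. The paper's single application is therefore the leaner argument.
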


The following lemma states that a covered transition remains covered after the removal of some other covered transition. Therefore, one may identify all the covered transitions of some LTS, $L$, remove them all at once, and the resulting LTS is equivalent. Moreover, since removing a covered transition does not introduce new covered transitions, the resulting LTS has no covered transitions.

\begin{restatable}{lemma}{coveredpersist}\label{lem:covered_persist}
    Let $L=\langle S, \act_\tau, \rightarrow, \iota\rangle$ be an LTS without inert transitions. Let $p\xrightarrow{\alpha}q$ and $s\xrightarrow{\beta}t$ be two distinct covered transitions of $L$. Let $\bar{L}\coloneqq L\setminus\set{(p,\alpha,q)}$, then $s\xrightarrow{\beta}t$ is a covered transition of $\bar{L}$.	
\end{restatable}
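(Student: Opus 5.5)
The plan is to keep a witness of coverage in $\bar{L}$ by rerouting through the simulation preorder, not by tracking transitions directly. Fix a transition $s'\xrightarrow{\beta}t'$ of $L$ covering $s\xrightarrow{\beta}t$, so $s\xRightarrow{\tau}s'$, $t\sqsubseteq t'$ and $(s,t)\neq(s',t')$.

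First, two preliminary observations. (i) $\bar{L}$ has no inert transitions: it has fewer transitions than $L$, and by \Cref{cor:covered} every state of $\bar{L}$ is equivalent to the same state of $L$, so a $\tau$-step between equivalent states of $\bar{L}$ would be inert in $L$. (ii) By \Cref{cor:covered} and transitivity in the disjoint union, $u\sqsubseteq_x v$ holds in $L$ iff it holds in $\bar{L}$. Hence the condition ``$t\sqsubseteq t'$'' may be checked in either LTS.

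\emph{Easy case.} If $(s',t')\neq(p,q)$ and some $\tau$-path $s\xRightarrow{\tau}s'$ in $L$ avoids $(p,\alpha,q)$, the same witness works in $\bar{L}$ and we are done.

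\emph{Main case.} Either the covering transition is $p\xrightarrow{\alpha}q$ itself, or every path $s\xRightarrow{\tau}s'$ uses it. Here I would not reconstruct the path by hand. In $L$ we have $s\xRightarrow{\beta}t'$. Since $s$ in $L$ is equivalent to $s$ in $\bar{L}$, \Cref{lem:sim_act} applied in the disjoint union gives $\bar{s}\xRightarrow{\beta}\bar{w}$ with $t'\sqsubseteq w$. Decompose this as $\bar{s}\xRightarrow{\tau}\bar{u}\xrightarrow{\beta}\bar{v}\xRightarrow{\tau}\bar{w}$. For $\beta=\tau$, the weak step could be empty with $w=s$; but then $t\sqsubseteq t'\sqsubseteq s$, and $s\sqsubseteq t$ by \Cref{lem:tau_succ}, which makes $s\xrightarrow{\tau}t$ inert, a contradiction. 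By \Cref{lem:tau_succ} we have $w\sqsubseteq v$, so $t\sqsubseteq t'\sqsubseteq w\sqsubseteq v$. Thus $\bar{u}\xrightarrow{\beta}\bar{v}$ is a transition of $\bar{L}$ with $s\xRightarrow{\tau}u$ in $\bar{L}$ and $t\sqsubseteq v$.

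What remains is $(u,v)\neq(s,t)$, and I expect this to be the main obstacle: the simulating step may simply be $s\xrightarrow{\beta}t$ itself. To get around it, I would run the same argument not from $s$ but from a state beyond the first step. Concretely, I would start from $s'$ when the path is fine but $(s',t')=(p,q)$, and otherwise from the state $q$ reached after the removed $\tau$-step. I would then use $q\sqsubseteq q''$ for a cover $p''\xrightarrow{\alpha}q''$ of $p\xrightarrow{\alpha}q$, together with \Cref{prop:covered} to keep the segment through $p''\xrightarrow{\alpha}q''$ intact. If the produced transition still coincides with $s\xrightarrow{\beta}t$, we obtain a $\tau$-cycle through $s$ and $p$ and $t\equiv_x q$ (respectively $t\equiv_x t'$). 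Absence of inert transitions then forces $s=p$ (respectively $s=s'$), so $t$ and $q$ (respectively $t$ and $t'$) are distinct but equivalent states.

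That degenerate configuration is exactly where the argument can fail. For instance, $p\xrightarrow{a}q$ and $p\xrightarrow{a}t$ with $q\equiv_x t$, $q\neq t$, cover each other, and after removing one the other is no longer covered. So I would invoke the intended setting in which covered transitions are removed after the $\forall$-quotient (no two distinct states are equivalent, as in \Cref{prop:quotient}) to rule this case out; otherwise the statement needs that hypothesis added. With distinct states pairwise inequivalent, $t\equiv_x q$ gives $t=q$ (respectively $t\equiv_x t'$ gives $t=t'$). Combined with $s=p$ (respectively $s=s'$), this yields $(s,t)=(p,q)$, respectively $(s,t)=(s',t')$. The first contradicts the distinctness of the two transitions and the second contradicts the covering condition, so $(u,v)\neq(s,t)$ and $s\xrightarrow{\beta}t$ is covered in $\bar{L}$.
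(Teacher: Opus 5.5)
Your counterexample is correct, so the lemma is false as stated. Take $q\neq t$ two deadlocked, hence equivalent, states. Then $p\xrightarrow{a}q$ and $p\xrightarrow{a}t$ cover each other, yet after removing $(p,a,q)$ the only $a$-step of $p$ is $p\xrightarrow{a}t$ itself.

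The paper's own proof breaks exactly here. Its third case is the one where the cover $s'\xrightarrow{\beta}t'$ of $s\xrightarrow{\beta}t$ is the removed transition, and (in your naming) $p''\xrightarrow{\alpha}q''$ covers $p\xrightarrow{\alpha}q$. There it assumes $s=p''$, derives $p=p''$, and calls this a contradiction with $(p'',q'')\neq(p,q)$. It is none, since $q''\neq q$ is allowed; in your example $p''=p$ and $q''=t$. What must be excluded is $(s,t)=(p'',q'')$. That forces $s=p=p''$ and $t\sqsubseteq_x t'=q\sqsubseteq_x q''=t$. Only your extra hypothesis turns $t\equiv_x q$ into $t=q$, i.e.\ $(s,\beta,t)=(p,\alpha,q)$, which contradicts distinctness.

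So your amendment is the right one, and it costs nothing. Every LTS from which the paper removes covered transitions has pairwise inequivalent states: the $\forall$-quotient in \Cref{prop:covered_quotient}, the intermediate LTSs of \Cref{alg:non_min_quotient} and \Cref{alg:min_quotient} (which inherit the states and $\sqsubseteq_x$ of the quotient), and $L''$ in the NP-hardness proof. Without the hypothesis, the paper's remark that all covered transitions may be removed at once is also unsound: in your example it deletes all $a$-behaviour.

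For the amended statement, once you give up rerouting from $s$, your case split is essentially the paper's argument. The three cases are: the path $s\xRightarrow{\tau}s'$ avoids the removed step; it passes through it, so $\alpha=\tau$ and $q\xRightarrow{\tau}s'$; or $s'\xrightarrow{\beta}t'$ is the removed step. In each, everything is routed through $p''\xrightarrow{\alpha}q''$ and \Cref{prop:covered}.

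Your treatment of $\beta=\tau$, however, is wrong. \Cref{lem:tau_succ} applied to $s\xrightarrow{\tau}t$ gives $t\sqsubseteq_x s$, not $s\sqsubseteq_x t$, so the empty match $w=s$ yields no contradiction. It is in fact always available, since $t'\sqsubseteq_x s'\sqsubseteq_x s$, and the same holds when you restart from $q''$ or from $s'$. Hence \Cref{lem:sim_act} never delivers a strong $\tau$-step, and ``running the same argument'' does not close this case.

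Instead, take $p''\xrightarrow{\alpha}q''$ itself as the witness:
\begin{itemize}
\item It survives in $\bar{L}$ because $(p'',q'')\neq(p,q)$.
\item $\bar{s}\xRightarrow{\tau}\bar{p}''$ holds by \Cref{prop:covered}.
\item In the middle case with $\beta=\tau$, $t\sqsubseteq_x q''$ follows from $t\sqsubseteq_x t'\sqsubseteq_x s'\sqsubseteq_x q\sqsubseteq_x q''$.
\item In the last case, for every $\beta$, $t\sqsubseteq_x q''$ follows from $t\sqsubseteq_x t'=q\sqsubseteq_x q''$.
\end{itemize}
Distinctness from $(s,t)$ is handled as follows. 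In the middle case, $s'\xrightarrow{\tau}t'$ is not inert, so $t'\sqsubset_x s'$ and hence $t\sqsubset_x q''$. In the last case, it is the degenerate-case argument above.

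Only the middle case with $\beta\neq\tau$ needs a simulation step, taken from $q''$ (which simulates $s'$). There the source of the resulting $\beta$-step differs from $s$, since otherwise $p''\xrightarrow{\tau}q''$ would lie on a $\tau$-cycle.
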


By computing the $\forall$-quotient of an LTS, and then removing all covered transitions, we obtain a reduction.

\begin{restatable}{proposition}{coveredquotient}\label{prop:covered_quotient}
    Let $L=\langle S, \act_\tau, \rightarrow, \iota\rangle$ be an LTS. Let $\bar{L}=\langle \bar{S}, \act_\tau, \bar{\rightarrow}, \bar{\iota}\rangle$ be the LTS obtained from $L$ by computing its $\forall$-quotient and subsequently removing all covered transitions. Then $|\bar{L}|\leq|L|$.	
\end{restatable}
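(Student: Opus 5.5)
The plan is a case split on the number of states. The states of $\bar{L}$ are exactly the states of the $\forall$-quotient, i.e.\ the classes $[p]_x$, since removing covered transitions leaves the state set unchanged. So $|\bar{S}|\leq|S|$. If the inequality is strict, we get $|\bar{L}|\leq|L|$ immediately from the lexicographic definition of size. The remaining case is $|\bar{S}|=|S|$. Then every equivalence class is a singleton, so $p\mapsto[p]_x$ is a bijection and $p\equiv_x q$ iff $p=q$. It then suffices to build an injection from the transitions of $\bar{L}$ into the transitions of $L$. I will take the obvious map $([p]_x,\alpha,[q]_x)\mapsto(p,\alpha,q)$, which is injective because the classes are singletons. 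The work is to show that its image consists of genuine strong transitions of $L$.

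Take a transition $[p]_x\xrightarrow{\alpha}[q]_x$ of $\bar{L}$. By the definition of the $\forall$-quotient, $p\xRightarrow{\alpha}q$ in $L$, and if $\alpha=\tau$ then $p\not\equiv_x q$, hence $p\neq q$. Suppose for contradiction that $p\nxrightarrow{\alpha}q$ in $L$. Choose a weak path realising $p\xRightarrow{\alpha}q$ without repeated states in its $\tau$-segments. For $\alpha\in\act$, write it as $p\xRightarrow{\tau}p_1\xrightarrow{\alpha}q_1\xRightarrow{\tau}q$. For $\alpha=\tau$, the path has length at least $2$ because $p\neq q$ and there is no direct step, so single out one of its strong $\tau$-steps $p_1\xrightarrow{\tau}q_1$; since the path is simple, $p_1\neq q_1$. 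In both cases $(p_1,q_1)\neq(p,q)$, because $(p,\alpha,q)$ itself is not a transition of $L$.

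Now I transfer this to the quotient. Since $p_1\xrightarrow{\alpha}q_1$ in $L$ and the classes are singletons (with $p_1\not\equiv_x q_1$ when $\alpha=\tau$), the quotient contains $[p_1]_x\xrightarrow{\alpha}[q_1]_x$. The $\tau$-segments $p\xRightarrow{\tau}p_1$ and $q_1\xRightarrow{\tau}q$ are simple, so each of their steps joins distinct, hence inequivalent, states. Therefore each step is a non-inert $\tau$-transition of the quotient, giving $[p]_x\xRightarrow{\tau}[p_1]_x$ and $[q_1]_x\xRightarrow{\tau}[q]_x$ there. By \Cref{lem:tau_succ} applied in the quotient, $[q]_x\sqsubseteq_x[q_1]_x$. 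Hence, in the $\forall$-quotient, $[p]_x\xrightarrow{\alpha}[q]_x$ is covered by $[p_1]_x\xrightarrow{\alpha}[q_1]_x$ in the sense of \Cref{def:covered}.

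The main obstacle is to conclude that this covered transition cannot survive into $\bar{L}$. This is where \Cref{lem:covered_persist} is used: removing covered transitions one at a time keeps every other covered transition covered. So the procedure that removes all covered transitions of the quotient removes $[p]_x\xrightarrow{\alpha}[q]_x$, contradicting that it is a transition of $\bar{L}$. Hence $p\xrightarrow{\alpha}q$ in $L$, the map is an injection, $|\bar{\rightarrow}|\leq|{\rightarrow}|$, and so $|\bar{L}|\leq|L|$. The other delicate point is the $\tau$ case: the chosen step must be non-inert, and insisting on simple paths is what guarantees this.
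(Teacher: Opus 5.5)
Your proof is correct and follows the paper's argument. It uses the same case split on the number of states, the same injection $([p]_x,\alpha,[q]_x)\mapsto(p,\alpha,q)$, and the same contradiction via a covering transition in the $\forall$-quotient together with \Cref{lem:covered_persist}. The only difference is the covering witness: the paper uses the quotient transition $[p']_x\xrightarrow{\alpha}[q]_x$ or $[p]_x\xrightarrow{\alpha}[q']_x$ (depending on which endpoint differs), whereas you lift the strong step $p_1\xrightarrow{\alpha}q_1$ itself along simple $\tau$-segments, which handles the $\alpha=\tau$ case (non-inertness, possible $\tau$-self-loops) more carefully than the paper does.
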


Surprisingly, even with a generalised notion of duplicate transitions and the subsequent removal of unreachable states, we do not have a canonical or minimal quotient. Consider the LTSs in \Cref{fig:not_can_min}. They are both equivalent, yet we cannot reduce the number of states of the left LTS using the transformations listed so far. Therefore, this procedure does not lead to a minimal quotient. Similarly, the transformations listed so far do not introduce new states, so we do not have a canonical quotient since this would require reducing the number of states of the left LTS, or increasing the number of states of the right LTS.

\begin{figure}[ht]
    \centering
    \begin{tikzpicture}[node distance = 1.5cm]
        \tikzset{every path/.style={thick}}
        \node[node] (p) {$p$};
        \node[node] (q)[above right=0.75cm and 1.5cm of p] {$q$};
        \node[node] (r)[below right=0.75cm and 1.5cm of p] {$r$};
        \coordinate[left=0.5cm of p] (initL);
        \draw[->] (initL) to (p);
        \coordinate[] (x) at ($(q)!0.5!(r)$);

        \draw[->, bend left] (p) to["$\tau$"] (q);
        \draw[->, bend right] (q) to["$\tau$" left] (r);
        \draw[->, bend left] (q) to["$b$"] (r);
        \draw[->, bend right] (p) to["$a$" below left] (r);
        
        \node[node] (pp)[right=2.8cm of x] {$p$};
        \node[node] (rp)[right=of pp] {$r$};
        \coordinate[left=0.5cm of pp] (initLp);
        \draw[->] (initLp) to (pp);

        \draw[->, bend left=45] (pp) to["$b$"] (rp);
        \draw[->] (pp) to["$\tau$"] (rp);
        \draw[->, bend right=45] (pp) to["$a$" below] (rp);
    \end{tikzpicture}
    \caption{The two LTSs are equivalent, but with the transformations introduced so far we cannot reduce the size of the left LTS, or increase the number of states with the right LTS.}
    \label{fig:not_can_min}
\end{figure}

Moreover, the problem of finding a canonical quotient cannot necessarily be solved by finding a minimal quotient, for there are classes of LTSs that do not have a unique minimal LTS. For example, the two LTSs in \Cref{fig:non_unique_min} are both minimal in size, so a canonical minimal quotient would need to be able to distinguish between minimal LTSs.

\begin{figure}[ht]
    \centering
    \begin{tikzpicture}[node distance = 1.5cm]   
        \tikzset{every path/.style={thick}}     
        \node[node] (rpp)[below=of r] {$r$};
        \node[node] (qpp)[left=of rpp] {$q$};
        \node[node] (ppp)[left=of qpp] {$p$};
        \coordinate[left=0.5cm of ppp] (initLpp);
        \draw[->] (initLpp) to (ppp);

        \draw[->, bend left] (ppp) to["$a$"] (qpp);
        \draw[->, bend left] (qpp) to["$a$"] (rpp);
        \draw[->, bend right] (ppp) to["$\tau$" below] (qpp);
        \draw[->, bend right] (qpp) to["$\tau$" below] (rpp);
        \draw[->, loop right] (rpp) to["$b$" right] (rpp);
        
        \node[node] (pppp)[right=2.5cm of rpp] {$p$};
        \node[node] (qppp)[right=of pppp] {$q$};
        \node[node] (rppp)[right=of qppp] {$r$};
        \coordinate[left=0.5cm of pppp] (initLppp);
        \draw[->] (initLppp) to (pppp);
        
        \draw[->] (pppp) to["$a$"] (qppp);
        \draw[->, bend left=45] (qppp) to["$a$"] (rppp);
        \draw[->, bend right=45] (pppp) to["$\tau$" below] (rppp);
        \draw[->] (qppp) to["$\tau$"] (rppp);
        \draw[->, loop right] (rppp) to["$b$" right] (rppp);
    \end{tikzpicture}
    \caption{The two LTSs are equivalent and minimal. In particular, we cannot map between them using the transformations listed so far.}
    \label{fig:non_unique_min}
\end{figure}

\section{Desaturation and Canonicity}\label{sec:can}


The previous sections showed that transformations resembling those found in the literature are insufficient to obtain a canonical quotient.
It turns out that with the introduction of one more transformation, $\tau$-desaturation, we \emph{are} able to describe a quotient that is canonical. This transformation is based on the axioms $\tau.x=x$ and $\tau.(\tau.x+y)=\tau.x+y$ for weak simulation equivalence and coupled simulation, respectively~\cite{aceto_axiomatizing_2014,cs-axiom}. They describe the removal of a $\tau$-transition followed by the saturation of the source state with the outgoing transitions of the target state, see \Cref{fig:path_reduction}. For coupled similarity, this requires that the target state also has an outgoing $\tau$-transition.

\begin{figure}[ht]
    \centering
    \begin{tikzpicture}[node distance = 1.5cm]
        \tikzset{every path/.style={thick}}
        \node[node] (p) {$p$};
        \node[node] (q)[right=of p] {$q$};
        \node[node] (s)[below=of p] {$s$};
        \node[node] (r)[right=of s] {$r$};
        \coordinate[left=0.5cm of p] (initL);

        \draw[->, bend left] (p) to["$\tau$"] (q);
        \draw[->, bend right] (p) to["$a$" below] (q);
        \draw[->, bend left=15] (q) to["$\tau$"] (s);
        \draw[->, bend left=15] (q) to["$a$"] node[midway](x){} (r);

        \draw[->] (initL) to (p);
        
        \node[node] (pp)[right=2.5cm of q] {$p$};
        \node[node] (qp)[right=of pp] {$q$};
        \node[node] (sp)[below=of pp] {$s$};
        \node[node] (rp)[right=of sp] {$r$};
        \coordinate[left=0.5cm of pp] (initLp);

        \draw[->] (pp) to["$a$"] (qp);
        \draw[->, bend right=15] (pp) to["$\tau$" left] node[midway](y){} (sp);
        \draw[->, bend left=15] (pp) to["$a$" below right=0.1cm and 0.3cm] (rp);
        \draw[->, bend right=15] (qp) to["$\tau$" below left=0.1cm and 0.3cm] (sp);
        \draw[->, bend left=15] (qp) to["$a$"] (rp);

        \draw[->] (initLp) to (pp);

        \draw[|->, shorten <=0.75cm, shorten >=0.75cm] (x) to (y);
    \end{tikzpicture}
    \caption{$\tau$-Desaturation maps the left LTS to the right LTS by removing the $p\xrightarrow{\tau}q$ transition and adding $p\xrightarrow{\alpha}q'$ for each $q\xrightarrow{\alpha}q'$. For coupled similarity, we require that $q$ has an outgoing $\tau$-transition.}
    \label{fig:path_reduction}
\end{figure}

\begin{restatable}[$\tau$-Desaturation]{proposition}{ptau}\label{prop:tau}
    Let $L=\langle S, \act_\tau, \rightarrow, \iota\rangle$ be an LTS without inert transitions. Let $p,q\in S$ be such that $p\xrightarrow{\tau}q$. Assume $x=CS$ implies $q\xrightarrow{\tau}r$ for some $r\in S$. Define: 
    \[
        \bar{L}\coloneqq (L\cup\setof{(p,\alpha,q')}{q\xrightarrow{\alpha}q'})\setminus\set{(p,\tau,q)}\quad.
    \]
    Then $\bar{L}\equiv_x L$. Moreover, for all $s\in S$, it holds that $s$ in $L$ is equivalent to $s$ in $\bar{L}$.	
\end{restatable}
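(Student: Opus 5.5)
We work in the disjoint union of $L$ and $\bar{L}$ and write $s$ and $\bar{s}$ for the two copies of a state. The plan is to show $\bar{s}\sqsubseteq_x s$ and $s\sqsubseteq_x \bar{s}$ for every $s\in S$ by giving explicit $x$-simulations; the claim $\bar{L}\equiv_x L$ is then the instance $s=\iota$. The first step is to compare weak transitions in the two systems. \textbf{Claim A:} $\bar{s}\xRightarrow{\alpha}\bar{t}$ implies $s\xRightarrow{\alpha}t$. Indeed, every added transition $p\xrightarrow{\beta}q'$ is realised in $L$ by the path $p\xrightarrow{\tau}q\xrightarrow{\beta}q'$, which is again a weak $\beta$-step. \textbf{Claim B:} $s\xRightarrow{\alpha}t$ implies $\bar{s}\xRightarrow{\alpha}\bar{t}$, except possibly when the only witnesses end with the removed step $p\xrightarrow{\tau}q$ and so reach $t=q$. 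If a path uses $p\xrightarrow{\tau}q$ followed by some $q\xrightarrow{\beta}q'$, we replace the pair by the new transition $p\xrightarrow{\beta}q'$. The only use that cannot be replaced is an occurrence of $p\xrightarrow{\tau}q$ as the very last step.

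For $\bar{s}\sqsubseteq_x s$ we take $\RR_1=\setof{(\bar{s},s)}{s\in S}$. The simulation condition follows from Claim A. Taking $\bar{s}\xRightarrow{\alpha}\bar{t}$ to be a single strong step, we obtain $s\xRightarrow{\alpha}t$ with $\bar{t}\RR_1 t$.

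For $s\sqsubseteq_x \bar{s}$ we take $\RR_2=\setof{(s,\bar{s})}{s\in S}\cup\set{(q,\bar{p})}$. Consider first a pair $(s,\bar{s})$ and a strong step $s\xrightarrow{\alpha}t$. If this step is not the removed one, it is also a step of $\bar{L}$. If it is the removed step $p\xrightarrow{\tau}q$, then $\bar{p}\xRightarrow{\tau}\bar{p}$ with $(q,\bar{p})\in\RR_2$. Now consider the pair $(q,\bar{p})$ and a step $q\xrightarrow{\beta}q'$. Then $\bar{p}\xrightarrow{\beta}\bar{q'}$ is one of the added transitions, so $\bar{p}\xRightarrow{\beta}\bar{q'}$ and $(q',\bar{q'})\in\RR_2$. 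Hence $\RR_2$ is a weak simulation, and this settles the case $x=S$.

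For $x=CS$ we must also check the coupling condition. For $(\bar{s},s)\in\RR_1$ we need $s\xRightarrow{\tau}u$ with $u\RR_1\,$-related back, i.e.\ $(u,\bar{s})$ should lie in a coupled simulation; taking $u=s$ requires $(s,\bar{s})\in\RR_2$, so we should verify $\RR_1\cup\RR_2$ as a single relation. The pairs $(s,\bar{s})$ and $(\bar{s},s)$ couple to each other with zero steps. The only delicate pair is $(q,\bar{p})$: we need some $\bar{u}$ with $\bar{p}\xRightarrow{\tau}\bar{u}$ and $\bar{u}\sqsubseteq_{CS} q$. Here we use the hypothesis $q\xrightarrow{\tau}r$. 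It gives $\bar{p}\xrightarrow{\tau}\bar{r}$ in $\bar{L}$, and we have $\bar{r}\RR_1 r$. Since $q\xrightarrow{\tau}r$, \Cref{lem:tau_succ} gives $r\sqsubseteq_{CS} q$. We then close up by taking the union with ${\sqsubseteq_{CS}}$ and composing, relying on the fact that a composition of coupled simulations is again a coupled simulation.

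The main obstacle is exactly this coupling step for $(q,\bar{p})$. The target state $\bar{u}$ must live in $\bar{L}$ while its simulator is $q$ in $L$, so we cannot simply use $u=q$: the state $\bar{q}$ may be unreachable from $\bar{p}$. If composition is awkward to justify, the first thing I would try is to work directly with the relation $\RR_1\cup\RR_2\cup\setof{(\bar{a},b)}{a\sqsubseteq_{CS} b\text{ in }L}$ and verify by hand that it is a coupled simulation. The transfer along $\bar{a}$ uses Claim A together with \Cref{lem:sim_act}, and coupling uses the coupling property of $\sqsubseteq_{CS}$ inside $L$. This yields $\bar{r}\sqsubseteq_{CS} q$, as required.
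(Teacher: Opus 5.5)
Your argument for $x=S$ is correct, and it is leaner than the paper's, which uses the single relation $\{(s,\bar t),(\bar s,t)\mid s\sqsubseteq_x t\}$ for both cases. The case $x=CS$, however, has a genuine gap exactly at the step you flag.

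First, the fact you rely on is not available. With the single-relation definition used here, a composition of coupled simulations need not be one. For three deadlocked states $d_1,d_2,d_3$, both $\{(d_1,d_2),(d_2,d_1)\}$ and $\{(d_2,d_3),(d_3,d_2)\}$ are coupled simulations, yet their composition contains $(d_1,d_3)$ but not $(d_3,d_1)$, so it violates coupling. Transitivity of $\sqsubseteq_{CS}$ does not rescue this either: $\RR_1\cup\RR_2$ is not yet known to be a coupled simulation, because its coupling condition at $(q,\bar p)$ is precisely what is missing. Obtaining $\bar r\sqsubseteq_{CS} r$ from it is therefore circular.

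Second, your concrete fallback $\RR_1\cup\RR_2\cup\{(\bar a,b)\mid a\sqsubseteq_{CS}b\}$ also fails. Coupling for $(\bar a,b)$ needs some $b\xRightarrow{\tau}u$ with $(u,\bar a)$ in the relation. The coupling property of $\sqsubseteq_{CS}$ in $L$ only gives $u\sqsubseteq_{CS}a$, and the only pairs of the form ($L$-state, $\bar L$-state) you have are $(s,\bar s)$ and $(q,\bar p)$. For instance, let $a\neq c$ be deadlocks and let $b$ have only $b\xrightarrow{\tau}c$ and $b\xrightarrow{e}c$ for a visible action $e$, all away from $p,q$. Then $a\sqsubseteq_{CS}b$, but neither $(b,\bar a)$ nor $(c,\bar a)$ is in your relation.

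What is missing is the mirror family $\{(u,\bar a)\mid u\sqsubseteq_{CS}a\}$; adding it yields exactly the paper's relation. Its simulation condition then requires moving weak steps of $a$ in $L$ to $\bar a$ in $\bar L$. This is the exceptional case of your Claim B, which you never resolve, and the paper handles it as follows:
\begin{itemize}
\item A step $u\xrightarrow{\tau}u'$ is answered with the empty step, since $u'\sqsubseteq_{CS}u\sqsubseteq_{CS}a$.
\item For visible $\alpha$, the matching weak step is truncated right after the $\alpha$-transition (by \Cref{lem:tau_succ}), so it cannot end with $p\xrightarrow{\tau}q$.
\item For coupling, a $\tau$-path from $a$ ending with $p\xrightarrow{\tau}q$ is redirected to $\bar r$ via the added $p\xrightarrow{\tau}r$, using $r\sqsubseteq_{CS}q$.
\end{itemize}

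Alternatively, your composition idea can be made rigorous. Let $\mathcal{T}$ be the transitive closure of $\RR_1\cup\RR_2\cup{\sqsubseteq_{CS}}$, with the last relation taken within $L$, and prove by induction on chain length that $\mathcal{T}$ is a coupled simulation. This works because each generator is a weak simulation on its own and meets the coupling condition with a partner in $\mathcal{T}$; for $(q,\bar p)$ the partner is $(\bar r,q)\in\RR_1;{\sqsubseteq_{CS}}$.

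One of these two arguments has to be supplied; as written, the coupled-similarity case is not proved.
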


Since $\tau$-desaturation may introduce new $\tau$-transitions, it is not immediately clear whether or not the repeated application of $\tau$-desaturation terminates. The following proposition states that it does. As a consequence, we may obtain an LTS with no $\tau$-transitions in the case of weak simulation equivalence, or no \emph{consecutive} $\tau$-transitions in the case of coupled simulation. We call such an LTS \emph{desaturated}.

\begin{restatable}{proposition}{termination}\label{prop:termination}
    Let $L=\langle S, \act_\tau, \rightarrow, \iota\rangle$ be an LTS without inert transitions. The repeated application of $\tau$-desaturation terminates.	
\end{restatable}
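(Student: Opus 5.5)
The plan is to exhibit a well-founded measure on LTSs over the fixed state set $S$ that strictly decreases with every application of $\tau$-desaturation. The measure is the multiset of \emph{ranks} of the targets of $\tau$-transitions, compared in the Dershowitz--Manna multiset ordering. Visible transitions can be ignored, since they play no role in the measure.

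First I fix the preorder on states once and for all. By \Cref{prop:tau}, every state $s$ of $L$ is equivalent to $s$ in $\bar{L}$. Since $\sqsubseteq_x$ is transitive, composing through the disjoint union gives $s\sqsubseteq_x t$ in $L$ iff $s\sqsubseteq_x t$ in $\bar{L}$, for all $s,t\in S$. So the relations $\sqsubseteq_x$, $\equiv_x$ and $\sqsubset_x$ on $S$ do not change along the sequence of desaturations.

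Next I show that the hypothesis ``no inert transitions'' is preserved, so that \Cref{prop:tau} can be applied again at every step. Suppose $p\xrightarrow{\tau}q$ exists and is not inert. By \Cref{lem:tau_succ} we have $q\sqsubseteq_x p$, and since the transition is not inert, $q\sqsubset_x p$. Every $\tau$-transition added by desaturation has the form $p\xrightarrow{\tau}q'$ with $q\xrightarrow{\tau}q'$ in the current LTS. By the same argument $q'\sqsubset_x q$, hence $q'\sqsubset_x p$, so the added transition is not inert either. Visible transitions are never inert. By induction on the number of steps, every intermediate LTS is free of inert transitions, and the invariant $q\sqsubset_x p$ holds for every $\tau$-transition $p\xrightarrow{\tau}q$ at every stage.

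Now define $\mathrm{rk}(q)$ as the length of the longest $\sqsubset_x$-chain descending from $q$ in the finite set $S$. This is well defined, because $\sqsubset_x$ is a strict order on $S$ and $S$ is finite. Moreover, it is the same at every stage by the first step. Let $\mu(L)$ be the multiset $\{\mathrm{rk}(q) \mid p\xrightarrow{\tau}q \text{ in } L\}$, with one element per $\tau$-transition. A desaturation step removes exactly one $\tau$-transition $p\xrightarrow{\tau}q$, contributing $\mathrm{rk}(q)$. The only $\tau$-transitions it may add are of the form $p\xrightarrow{\tau}q'$ with $q'\sqsubset_x q$, so $\mathrm{rk}(q')<\mathrm{rk}(q)$. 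Transitions that were already present add nothing to $\mu$, since transitions form a set. There are only finitely many added transitions, because there are finitely many outgoing transitions of $q$. Hence $\mu(\bar{L})$ is obtained from $\mu(L)$ by replacing one element by finitely many strictly smaller ones, which is a strict decrease in the multiset ordering over $\mathbb{N}$. That ordering is well-founded, so no infinite sequence of desaturations exists.

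The main obstacle is the invariance of the preorder and the absence of inert transitions, since the whole measure depends on ranks being stable across steps. This is exactly what the ``Moreover'' clause of \Cref{prop:tau}, together with \Cref{lem:tau_succ}, provides. Once that is in place, the termination argument itself is the routine multiset-ordering decrease. The restriction for $x=CS$ (that $q$ must have an outgoing $\tau$-transition) only limits which steps are applicable, so it does not affect the argument.
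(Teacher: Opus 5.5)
Your proof is correct, and its core is the same as the paper's. By \Cref{lem:tau_succ} and the absence of inert transitions, every $\tau$-transition points strictly down in $\sqsubset_x$, so desaturating $p\xrightarrow{\tau}q$ replaces one $\tau$-edge into $q$ by $\tau$-edges into states strictly below $q$.

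The difference is only in how this becomes a well-founded measure. The paper fixes a linear extension of $\sqsubseteq_x$ on equivalence classes and indexes the states accordingly. A step then changes only the row $v(p)$ of the $\tau$-adjacency matrix, and that row decreases colexicographically: the entry for $q$ drops from $1$ to $0$, and only entries of lower index can change. Each row ranges over finitely many $0/1$-vectors and the other rows are untouched, so termination is immediate without any multiset ordering, and an explicit (exponential) bound on the number of steps follows.

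Colexicographic order on indicator vectors, with respect to the chosen linear order, is exactly the Dershowitz--Manna order on the set of $\tau$-successors of $p$. So your global multiset of ranks is a close cousin of the paper's measure. It needs no choice of linear extension, since ranks are canonical, and no argument that the change is confined to one source state. The price is appealing to well-foundedness of the multiset extension.

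You also prove explicitly that $\sqsubseteq_x$ and the absence of inert transitions are preserved at every step, via the ``Moreover'' clause of \Cref{prop:tau}. The paper only asserts this in one sentence.
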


Reconsider the LTSs in \Cref{fig:not_can_min}. Applying $\tau$-desaturation on the left LTS to remove the $p\xrightarrow{\tau}q$ transition results in the state $q$ being unreachable. This unreachable state can subsequently be removed, and as such we obtain the LTS on the right of \Cref{fig:not_can_min}. Likewise for the LTSs in \Cref{fig:non_unique_min}, we can apply $\tau$-desaturation on the left LTS to remove the $p\xrightarrow{\tau}q$ transition. One of the added transitions is a $p\xrightarrow{a}r$ transition, but this transition is covered (since $p\xrightarrow{\tau}q\xrightarrow{a}r$) and thus it can be removed, resulting in the LTS on the right of \Cref{fig:non_unique_min}.

We now have the necessary transformations to obtain a canonical quotient for $\equiv_x$. We say an LTS is \emph{reduced} if:
\begin{itemize}
    \item it is desaturated;
    \item it has no covered transitions;
    \item it has no unreachable states;
    \item it has no inert transitions; and
    \item no two states are equivalent.
\end{itemize}

It turns out that whenever two equivalent LTSs are reduced, they are also isomorphic, as claimed by~\Cref{thm:can}.
The proof of this theorem uses the following observation:
\begin{restatable}{lemma}{simequiv}\label{lem:sim_equiv}
    Let $L=\langle S, \act_\tau, \rightarrow, \iota\rangle$ be a desaturated LTS with no covered transitions. Let $p\xrightarrow{\alpha}q$ be some transition of $L$ and let $p'\equiv_x p$. Then there exists some state $q'\equiv_x q$ such that $p'\xRightarrow{\alpha}q'$.	
\end{restatable}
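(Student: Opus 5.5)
The plan is to get a weak $\alpha$-step out of $p'$ by simulation, and then use the absence of covered transitions to force its target to be equivalent to $q$. Visible actions and $\alpha=\tau$ need separate treatment, because a weak $\tau$-step may consist of zero strong steps.

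\emph{Case $\alpha\in\act$.} Since $p\sqsubseteq_x p'$ and $p\xrightarrow{\alpha}q$, \Cref{lem:sim_act} gives some $q'$ with $p'\xRightarrow{\alpha}q'$ and $q\sqsubseteq_x q'$. It remains to show $q'\sqsubseteq_x q$.
\begin{itemize}
\item Apply \Cref{lem:sim_act} again, now to $p'\sqsubseteq_x p$. This yields $q''$ with $p\xRightarrow{\alpha}q''$ and $q'\sqsubseteq_x q''$.
\item Unfold the weak step as $p\xRightarrow{\tau}p_1\xrightarrow{\alpha}q_1\xRightarrow{\tau}q''$. By \Cref{lem:tau_succ}, $q''\sqsubseteq_x q_1$.
\item This gives the chain $q\sqsubseteq_x q'\sqsubseteq_x q''\sqsubseteq_x q_1$, so $q\sqsubseteq_x q_1$ and $p\xRightarrow{\tau}p_1$.
\item If $(p,q)\neq(p_1,q_1)$, then $p\xrightarrow{\alpha}q$ would be covered by $p_1\xrightarrow{\alpha}q_1$, which is excluded. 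Hence $q_1=q$.
\item The chain then closes up, $q\sqsubseteq_x q'\sqsubseteq_x q''\sqsubseteq_x q$, so $q'\equiv_x q$.
\end{itemize}
Only the absence of covered transitions and of inert transitions is used here; desaturation is not needed.

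\emph{Case $\alpha=\tau$.} This is the main obstacle. The argument above breaks because $p\xRightarrow{\tau}q''$ may be the empty path, so no strong $\tau$-transition is available to compare with. For $x=S$ the case is vacuous, since a desaturated LTS has no $\tau$-transitions. For $x=CS$ I would use \Cref{lem:min} instead. The steps are as follows.
\begin{itemize}
\item Desaturation means there are no consecutive $\tau$-transitions. Therefore $G_\tau(p)$ consists of $p$ together with its direct $\tau$-successors.
\item Claim: $q\in\Min(G_\tau(p))$. Suppose some $r\in G_\tau(p)$ has $r\sqsubseteq_x q$ but not $q\sqsubseteq_x r$.
\item If $r=p$, then $p\sqsubseteq_x q$. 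Together with $q\sqsubseteq_x p$ (from \Cref{lem:tau_succ}) this gives $p\equiv_x q$, so $p\xrightarrow{\tau}q$ would be inert, which is excluded.
\item Otherwise $p\xrightarrow{\tau}r$ with $r\sqsubseteq_x q$ and $r\neq q$. Then $p\xrightarrow{\tau}r$ is covered by $p\xrightarrow{\tau}q$, again a contradiction.
\item With the claim established, apply \Cref{lem:min} to $p\sqsubseteq_x p'$ and $q\in\Min(G_\tau(p))$. This gives $q'$ with $p'\xRightarrow{\tau}q'$ and $q'\equiv_x q$.
\end{itemize}
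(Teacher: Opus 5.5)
Your proof is correct, and its skeleton matches the paper's. The visible case is the paper's argument with \Cref{lem:max} unfolded into its two applications of \Cref{lem:sim_act}. In the $\tau$-case you likewise dismiss $x=S$ as vacuous and, for $x=CS$, show $q\in\Min(G_\tau(p))$ and invoke \Cref{lem:min}.

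The one real difference is how you obtain that minimality. The paper takes $r\in G_\tau(p)$ with $r\sqsubseteq_x q$ and applies the coupling condition to this pair, giving $q\xRightarrow{\tau}q''$ with $q''\sqsubseteq_x r$. Since $p\xrightarrow{\tau}q$ and there are no consecutive $\tau$-transitions, $q$ has no outgoing $\tau$-step, so $q''=q$ and hence $q\sqsubseteq_x r$. You instead use desaturation to flatten $G_\tau(p)$ to $p$ plus its direct $\tau$-successors. The absence of covered transitions then does the work, exactly as in the visible case: a direct successor $r$ strictly below $q$ would make $p\xrightarrow{\tau}r$ covered by $p\xrightarrow{\tau}q$.

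The paper's version needs only that the target $q$ is $\tau$-stuck, and does not use the no-covered-transitions hypothesis in the $\tau$-case. Yours makes the two cases uniform, since extremality of $q$ always comes from the absence of covered transitions, and it uses the coupling condition only inside \Cref{lem:min}.

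One minor simplification: your subcase $r=p$ closes at once. \Cref{lem:tau_succ} gives $q\sqsubseteq_x p=r$, contradicting the assumption that $q\sqsubseteq_x r$ fails, so no appeal to inertness is needed there.
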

\begin{theorem}\label{thm:can}
    Let $L=\langle S, \act_\tau, \rightarrow, \iota\rangle$ and $\bar{L}=\langle \bar{S}, \act_\tau, \bar{\rightarrow}, \bar{\iota}\rangle$ be two reduced LTSs. Then $L\equiv_x \bar{L}$ iff $L\sim \bar{L}$.
\end{theorem}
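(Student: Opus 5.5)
The easy direction is immediate: if $L\sim\bar{L}$ via $f$, then the graph of $f$ (together with its inverse) is a weak bisimulation on the disjoint union. It is also a coupled simulation, since each state is related to itself trivially and coupling holds with $q'=q$. Hence $\iota\equiv_x\bar\iota$.

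For the converse, assume $L\equiv_x\bar{L}$ and work in the disjoint union of the two LTSs. Define $f:S\to\bar S$ by letting $f(p)$ be a state $\bar p$ with $p\equiv_x\bar p$.

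\textbf{Step 1 (well-defined and injective).} Since no two states within a reduced LTS are equivalent, any such $\bar p$ is unique, and $f$ is injective.

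\textbf{Step 2 (total, surjective, and preserves the initial state).} I would argue by induction on reachability. We have $f(\iota)=\bar\iota$ by assumption. Suppose $p\equiv_x\bar p$ and $p\xrightarrow{\alpha}q$. Applying \Cref{lem:sim_equiv} in the disjoint union, which is still desaturated without covered transitions, yields $\bar q$ with $\bar p\xRightarrow{\alpha}\bar q$ and $q\equiv_x\bar q$. Because $\bar{L}$ has no unreachable states and every reachable state is reached via such steps, this shows every $p\in S$ has an image, and symmetrically every $\bar p$ has a preimage.

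\textbf{Step 3 (transitions are preserved).} Given $p\xrightarrow{\alpha}q$, I must upgrade the weak transition $\bar p\xRightarrow{\alpha}\bar q$ to a strong one, $f(p)\xrightarrow{\alpha}f(q)$. This is the main obstacle, and I would use desaturation and the absence of covered transitions as follows.

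Write the weak step as $\bar p\xRightarrow{\tau}\bar p_1\xrightarrow{\alpha}\bar q_1\xRightarrow{\tau}\bar q$. Mapping back with $f^{-1}$ and applying \Cref{lem:sim_equiv} again, we get $p\xRightarrow{\tau}p_1\xrightarrow{\alpha}q_1'$ with $q_1'\equiv_x\bar q_1$, and by \Cref{lem:tau_succ}, $q\sqsubseteq_x q_1'$.

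\begin{itemize}
\item If $(p,q)\neq(p_1,q_1')$, then $p\xrightarrow{\alpha}q$ is covered, which is a contradiction. Hence $p_1=p$ and $q_1'=q$.
\item Since $f$ is a bijection, it follows that $\bar p_1\equiv_x p=\ldots\equiv_x\bar p$, so $\bar p_1=\bar p$, and likewise $\bar q_1=\bar q$, using that $\bar{L}$ is a quotient.
\end{itemize}

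Therefore the strong transition $\bar p\xrightarrow{\alpha}\bar q$ is exactly the middle step. The delicate points are:

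\begin{itemize}
\item For $x=S$, desaturation means there are no $\tau$-steps at all, so for $\alpha=\tau$ there is nothing to show, and for visible $\alpha$ the $\tau$-prefixes are empty.
\item For $x=CS$ and $\alpha=\tau$, the weak step could be empty, with $\bar p=\bar q$. This would give $p\equiv_x q$, contradicting the absence of inert transitions.
\item When equating $p_1=p$, I need that $p\xRightarrow{\tau}p_1$ together with $p_1\equiv_x p$ forces equality, and that the $\tau$-paths collapse. Both follow from quotient-ness and the absence of inert transitions, combined with \Cref{lem:tau_succ} making the intermediate states equivalent.
\end{itemize}

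By symmetry, reverse transitions are also preserved, so $f$ is an isomorphism.
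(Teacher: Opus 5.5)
Your proof is correct, but it takes a different route from the paper. The paper never builds the isomorphism by hand. It uses \Cref{lem:sim_equiv} only to show that the relation pairing $\equiv_x$-equivalent states of $L$ and $\bar L$ is a weak bisimulation. It then notes two facts about reduced LTSs:
\begin{itemize}
\item they are state-minimal for weak bisimilarity, since all states are reachable, no two states are $\equiv_x$-equivalent, and weak bisimilarity is contained in $\equiv_x$;
\item they have no duplicate transitions, since by \Cref{lem:tau_succ} every duplicate is covered.
\end{itemize}
Finally it invokes Lemma~21 of~\cite{eloranta_essential_1997}: weakly bisimilar, state-minimal LTSs without duplicate transitions are isomorphic.

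You effectively inline a direct proof of that uniqueness result in the present setting. The bijection comes from the quotient property plus reachability. The upgrade of $f(p)\xRightarrow{\alpha}f(q)$ to a strong transition comes from the absence of covered transitions, which is strictly stronger than the absence of duplicates and is what keeps that step short. Your argument is self-contained and does not require matching hypotheses against an external paper. The paper's argument is shorter, and it makes explicit a point its conclusion stresses: equivalent reduced LTSs are weakly bisimilar, so canonicity is inherited from the weak-bisimilarity theory.

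Two small repairs are needed.

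First, \Cref{lem:sim_equiv} applied to $\bar p_1\xrightarrow{\alpha}\bar q_1$ gives only a weak step $p_1\xRightarrow{\alpha}q_1'$, not a strong one. Decompose it as $p_1\xRightarrow{\tau}p_2\xrightarrow{\alpha}q_2\xRightarrow{\tau}q_1'$. Then $p\xRightarrow{\tau}p_2$, and by \Cref{lem:tau_succ} $q\equiv_x\bar q\sqsubseteq_x\bar q_1\equiv_x q_1'\sqsubseteq_x q_2$. So non-coveredness forces $(p_2,q_2)=(p,q)$. The chain then collapses to $\bar q_1\equiv_x\bar q$, and the $\tau$-cycle $p\xRightarrow{\tau}p_1\xRightarrow{\tau}p$ gives $p_1\equiv_x p$. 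Hence $\bar p_1=\bar p$ and $\bar q_1=\bar q$, as you want. For $\alpha=\tau$, the step $p_1\xRightarrow{\tau}q_1'$ is nonempty, since otherwise $\bar p_1\xrightarrow{\tau}\bar q_1$ would be inert.

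Second, totality of $f$ needs every state of $L$ to be reachable, not $\bar L$ as you wrote. Surjectivity needs every state of $\bar L$ to be reachable. Both hold because both LTSs are reduced.
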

\begin{proof}
    It is clear $L\sim \bar{L}$ implies $L\equiv_x \bar{L}$. To show that $L\equiv_x \bar{L}$ implies $L\sim \bar{L}$, we use Lemma 21 from~\cite{eloranta_essential_1997} which states that if:
    \begin{enumerate}
        \item $L\bisim_w\bar{L}$, and
        \item $L$ and $\bar{L}$ are state minimal w.r.t. $\bisim_w$, and
        \item $L$ and $\bar{L}$ have no duplicate transitions,
    \end{enumerate}
    then $L\sim\bar{L}$. For (1), it immediately follows by \Cref{lem:sim_equiv} that 
	\[
        \RR\coloneqq\setof{(p,\bar{q})\in S\times\bar{S}}{p\equiv_x\bar{q}}\cup\setof{(\bar{p},q)\in \bar{S}\times S}{\bar{p}\equiv_x q}
    \]
    is a weak bisimulation. For (2), state minimality for weak bisimilarity requires that no two states are weakly bisimilar, and all states are reachable~\cite{eloranta_essential_1997}. Since $\bisim_w{\subseteq}\equiv_x$~\cite{van_glabbeek_linear_1993}, and no two states are equivalent w.r.t. $\equiv_x$, it holds that $L$ and $\bar{L}$ are state minimal w.r.t. $\bisim_w$. For (3), note that $L$ and $\bar{L}$ have no covered transitions, so it suffices to show that all duplicate transitions are covered. Consider some $p\xrightarrow{\alpha}q$ that duplicates $p'\xrightarrow{\alpha}q'$. Then by \Cref{lem:tau_succ}, it holds that $q\sqsubseteq_x q'$, and therefore $p\xrightarrow{\alpha}q$ is covered by $p'\xrightarrow{\alpha}q'$. Hence, $p\xrightarrow{\alpha}q$ is covered.
\end{proof}

\Cref{alg:non_min_quotient} computes a reduced LTS equivalent to the input LTS. The order of operations is important: it is essential to first compute the $\forall$-quotient in order to eliminate inert transitions and ensure the correctness of the remaining transformations; we need to desaturate before removing covered transitions, since covered transitions may be introduced by desaturation; and we need to remove covered transitions before removing unreachable states, since states may become unreachable by removing covered transitions.

\begin{algorithm}[H]
\caption{A quotienting function that yields a canonical quotient.}
\label{alg:non_min_quotient}
\begin{algorithmic}[1]
\Procedure{CanQuotient}{$L$}
    \Let{$L_1$}{$L_{/\equiv}$}\label{line:quotient}
    \Let{$L_2$}{Desaturate $L_1$}
    \Let{$L_3$}{Remove covered transitions from $L_2$}
    \Let{$L_4$}{Remove unreachable states from $L_3$}\label{line:unreachable}
    \Return{$L_4$}
\EndProcedure
\end{algorithmic}
\end{algorithm}

This algorithm can be implemented to run in polynomial time since $\sqsubseteq_x$ can be computed in polynomial time~\cite{bisping_computing_2019}, desaturation needs to look at no more than two consecutive transitions, and the removal of covered transitions needs to look at no more than two consecutive transitions due to the preceding desaturation step.

\section{Saturation and Minimality}\label{sec:min}

The procedure described in the previous section does not necessarily yield a \emph{minimal} quotient. This is because $\tau$-desaturation may cause a strict increase in the number of transitions, as shown in \Cref{fig:desaturate_non_min}. Yet, $\tau$-desaturation is needed because it reduces the number of incoming transitions to a state, and if this number is reduced to zero, the size of the LTS can be reduced by removing the state.

\begin{figure}[ht]
    \centering
    \begin{tikzpicture}[node distance = 1.5cm]
        \tikzset{every path/.style={thick}}
        \node[node] (p) {$p$};
        \node[node] (q)[right=of p] {$q$};
        \node[node] (r)[right=of q] {$r$};
        
        \node[node] (pp) [right=2.5cm of r] {$p$};
        \node[node] (qq) [right=of pp] {$q$};
        \node[node] (rr) [right=of qq] {$r$};
        
        \draw[->, bend right] (p) to["$a$" below] (q);
        \draw[->, bend right] (q) to["$b$" below] (r);
        \draw[->, bend left] (p) to["$\tau$"] (q);
        \draw[->, bend left] (q) to["$\tau$"] (r);
        
        \draw[|->, shorten <=0.5cm, shorten >=1cm] (r) to (pp);
        
        \draw[->] (pp) to["$a$"] (qq);        
        \draw[->, bend left=15] (qq) to["$\tau$"] (rr);
        \draw[->, bend right=15] (qq) to["$b$" below] (rr);
        
        \draw[->, bend left=45] (pp) to["$\tau$"] (rr);
        \draw[->, bend right=45] (pp) to["$b$" below] (rr);
        
        \coordinate[left=0.5cm of p] (initL);
        \coordinate[left=0.5cm of pp] (initLc);
        \draw[->] (initL) to (p);
        \draw[->] (initLc) to (pp);
    \end{tikzpicture}
    \caption{$\tau$-Desaturation maps the left LTS to the right LTS.}
    \label{fig:desaturate_non_min}
\end{figure}

In this section, we provide a procedure for obtaining a minimal quotient for $\equiv_x$, the key to which is the inverse operation of $\tau$-desaturation. Consider \Cref{fig:desaturate_non_min}: to go from the right LTS to the left LTS, observe that one may reintroduce the $p\xrightarrow{\tau}q$ transition, and then remove the covered $p\xrightarrow{\tau}r$ and $p\xrightarrow{b}r$ transitions. The introduction of the $p\xrightarrow{\tau}q$ is known as $\tau$\emph{-saturation}, and this transformation is sound since $q\sqsubseteq_x p$.

\begin{restatable}[$\tau$-Saturation]{proposition}{saturate}\label{prop:saturate}
    Let $L=\langle S, \act_\tau, \rightarrow, \iota\rangle$ be an LTS. Let $p,q\in S$ be such that $q\sqsubseteq_x p$, and define $\bar{L}\coloneqq L\cup\set{(p,\tau,q)}$. Then $L\equiv_x\bar{L}$. Moreover, for all $s\in S$, it holds that $s$ in $L$ is equivalent to $s$ in $\bar{L}$.	
\end{restatable}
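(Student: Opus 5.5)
We work in the disjoint union of $L$ and $\bar L$, write $s$ for the copy in $L$ and $\bar s$ for the copy in $\bar L$, and show $s\equiv_x\bar s$ for every $s\in S$. Taking $s=\iota$ then gives $L\equiv_x\bar L$. We prove the two directions with two explicit relations and then invoke the fact that $\sqsubseteq_x$ is the greatest $x$-simulation.

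\emph{Direction $s\sqsubseteq_x\bar s$.} Take $\RR_1=\setof{(s,\bar s)}{s\in S}$. Every strong transition of $L$ is also a transition of $\bar L$, so the simulation condition holds trivially. For $x=CS$ we also need the coupling condition, which needs the reverse pair $(\bar s,s)$. We therefore let $\RR$ be the union of $\RR_1$ and $\RR_2$, where $\RR_2$ is defined below, and verify the two conditions jointly.

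\emph{Direction $\bar s\sqsubseteq_x s$.} Take $\RR_2=\setof{(\bar s,t)}{s\sqsubseteq_x t \text{ in } L}$. Let $\bar s\xrightarrow{\alpha}\bar s'$ be a transition of $\bar L$ and suppose $s\sqsubseteq_x t$.
\begin{itemize}
\item If the transition is not the new one, then $s\xrightarrow{\alpha}s'$ holds in $L$. Since $s\sqsubseteq_x t$, the simulation condition gives $t\xRightarrow{\alpha}t'$ in $L$ with $s'\sqsubseteq_x t'$, so $(\bar s',t')\in\RR_2$.
\item If it is the new transition, then $s=p$, $\alpha=\tau$ and $s'=q$. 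Here $t$ itself works, with the empty weak step $t\xRightarrow{\tau}t$, because $q\sqsubseteq_x p\sqsubseteq_x t$ and $\sqsubseteq_x$ is transitive. This transitivity is the only place the hypothesis is used.
\end{itemize}
So $\RR_2$ satisfies the simulation condition on the union. It also contains $(\bar s,s)$, since $\sqsubseteq_x$ is reflexive.

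\emph{Coupling conditions for $x=CS$.} For $(s,\bar s)\in\RR_1$ we need $\bar s\xRightarrow{\tau}u$ with $u\RR s$. Take $u=\bar s$ itself, using $(\bar s,s)\in\RR_2$.

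For $(\bar s,t)\in\RR_2$ we need $t\xRightarrow{\tau}t'$ in $L$ with $t'\RR\bar s$. Since $s\sqsubseteq_{CS}t$ in $L$, the coupling condition in $L$ gives $t\xRightarrow{\tau}t'$ with $t'\sqsubseteq_{CS}s$. What we actually need is $t'\RR\bar s$ with $\bar s$ living in $\bar L$. To get it, enlarge $\RR_1$ to $\RR_1'=\setof{(t,\bar s)}{t\sqsubseteq_x s \text{ in } L}$ and redo its simulation check. Given $t\xrightarrow{\alpha}t''$ in $L$, the relation $t\sqsubseteq_x s$ gives $s\xRightarrow{\alpha}s''$ in $L$ with $t''\sqsubseteq_x s''$. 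The same weak path exists in $\bar L$ because $\bar L$ only adds a transition, so $\bar s\xRightarrow{\alpha}\bar s''$ and $(t'',\bar s'')\in\RR_1'$.

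With $\RR=\RR_1'\cup\RR_2$, the coupling condition for $\RR_1'$ follows the same pattern. For $(t,\bar s)$, the coupling condition in $L$ gives $s\xRightarrow{\tau}s'$ in $L$ with $s'\sqsubseteq_{CS}t$. This path lifts to $\bar s\xRightarrow{\tau}\bar s'$, and $(\bar s',t)\in\RR_2$ by definition of $\RR_2$.

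\emph{Main obstacle.} The hard part is precisely this coupled-simulation bookkeeping: choosing relations $\RR_1'$ and $\RR_2$ that are closed under the coupling condition across the two copies. Defining both relations through $\sqsubseteq_x$ in $L$, rather than through identity, closes the loop. It also uses the full-strength fact that $\sqsubseteq_{CS}$ is itself a coupled simulation on $L$.
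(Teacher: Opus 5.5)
Your proof is correct and follows essentially the paper's approach: once you enlarge $\RR_1$ to $\RR_1'$, the relation $\RR_1'\cup\RR_2$ is exactly the paper's relation. The verification is also the same, namely the split on whether a transition of $\bar L$ is the new $(p,\tau,q)$, transitivity via $q\sqsubseteq_x p\sqsubseteq_x t$, and lifting weak paths of $L$ into $\bar L$; the first attempt with the identity relation $\RR_1$ is an unnecessary but harmless detour.
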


A reduction is then given by performing $\tau$-saturation whenever doing so would result in covered transitions that can be removed. We say a transition $p\xrightarrow{\alpha}q$ of an LTS, $L$, is \emph{coverable} if it can be made covered by a $\tau$-saturation step, i.e. if it is an element of:
\[
    \Removable(p)\coloneqq\setof{(p,\alpha,q)}{\exists p'\sqsubset_x p: p\xrightarrow{\alpha}q\text{ is covered in }L\cup\set{(p,\tau,p')}}\quad.
\]
\vspace{-4ex}
\begin{remark}
In the definition above, we require $p'\sqsubset_x p$ rather than $p'\sqsubseteq_x p$ since $(p,\tau,p')$ cannot be inert.
\end{remark}

The existential quantifier in this definition strongly suggests that there may not be a unique, deterministic way to remove coverable transitions, for there may be several $\tau$-saturation steps that can be used to cover a transition. This raises the following question: in case we are given such a choice, is one more optimal than the other?

\begin{example}\label{eg:cover1}
    In this example, we show that this choice matters. Consider the LTS in \Cref{fig:saturate_choice}. Here, it holds that each $s_i\sqsubset_x u$, and consequently we may $\tau$-saturate by adding $u\xrightarrow{\tau}s_i$. Note that if we add a $u\xrightarrow{\tau}s_i$ transition, then each $u\xrightarrow{\alpha}\bot$ transition for which $\alpha\in S_i$ becomes covered, and therefore may be removed. Since each $u\xrightarrow{\alpha}\bot$ transition is such that $\alpha$ is in some $S_i$, we may remove every such transition using $\tau$-saturation.

	\begin{figure}[ht]
		\centering
		\begin{tikzpicture}[node distance = 2cm]
			\tikzset{every path/.style={thick}}
			\node[node] (U) {$u$};
			\node[node] (S1)[right=of U] {$s_1$};
			\node[node] (S2)[right=of S1] {$s_2$};
			\node[node] (S3)[right=of S2] {$s_3$};
			\node[node] (bot)[below=of U] {$\bot$};

			\draw[->] (U) to ["$b$"] (S1);
			\draw[->, bend left] (U) to ["$b$" very near end] (S2);
			\draw[->, bend left] (U) to ["$b$" very near end] (S3);
			\draw[->, bend right=60] (U) to ["$U$" left] (bot);
			\draw[->, bend right=15] (S1) to ["$S_1$" above left] (bot);
			\draw[->] (S2) to ["$S_2$"] (bot);
			\draw[->, bend left=20] (S3) to ["$S_3$"] (bot);

			\coordinate[above left=0.866cm and 0.5cm of U] (init);
			\draw[->] (init) to (U);

			\node[rectangle, draw, inner sep=0.25cm] (legend)[below right=0cm and 0.5cm of S3] {$\begin{aligned}
				U&=\set{a_1,a_2,a_3,a_4,\tau}\\
				S_1&=\set{a_1,a_2,\tau}\\
				S_2&=\set{a_2,a_3,\tau}\\
				S_3&=\set{a_3,a_4,\tau}
			\end{aligned}$};
		\end{tikzpicture}
		\caption{Example showing that $\tau$-saturation must be performed carefully.}
		\label{fig:saturate_choice}
	\end{figure}

    To demonstrate that the order of $\tau$-saturation is important, suppose we add a $u\xrightarrow{\tau}s_2$ transition. Then we may remove the $u\xrightarrow{\alpha}\bot$ transition for each $\alpha\in S_2$. So one transition was added, three were removed, and we are left with a $u\xrightarrow{a_1}\bot$ transition, and a $u\xrightarrow{a_4}\bot$ transition. The net effect is the removal of two transitions. At this point, $\tau$-saturation will not reduce the overall number of transitions: we could add a $u\xrightarrow{\tau}s_1$ transition, but only one transition would be removable, and likewise for $s_3$. 
    
    If instead of adding a $u\xrightarrow{\tau}s_2$ transition, we add a $u\xrightarrow{\tau}s_1$ and $u\xrightarrow{\tau}s_3$, then each $u\xrightarrow{\alpha}\bot$ transition becomes covered, so we may remove five transitions at the cost of adding two. This reduction performs better than when we added the $u\xrightarrow{\tau}s_2$ transition, so we conclude that the choice of $\tau$-saturation is important.
\end{example}

Whereas $\Removable(p)$ describes the set of transitions that can be covered using \emph{some} $\tau$-saturation step, we define $\Removed((p,\tau,p'))$ as the set of transitions that are covered by a \emph{given} $\tau$-saturation step.
\[
    \Removed((p,\tau,p'))\coloneqq\setof{(p,\alpha,q)}{p\xrightarrow{\alpha}q\text{ is covered in }L\cup\set{(p,\tau,p')}}\quad.
\]
For a given state $p$ and set $Y\subseteq\setof{(p,\alpha,q)}{p\xrightarrow{\alpha}q}$, we say a set, $X\subseteq\setof{(p,\tau,p')}{p'\sqsubset_x p}$, is a \emph{cover} of $Y$ iff $\bigcup_{(p,\tau,p')\in X}\Removed((p,\tau,p'))\supseteq Y$, and say $X$ is a cover of $p$ iff $X$ is a cover of $\Removable(p)$. What the previous example highlights is that some covers are smaller than others, and thus result in better reductions. As suggested by this example (and the terminology we use), the problem of finding a minimal cover is NP-complete, by reduction to and from the set cover problem.

\begin{definition}[Set Cover Problem]
	An instance of the set cover problem is a pair $\langle U, \mathcal{S}\rangle$ where $\bigcup_{S\in\mathcal{S}}S=U$. We call $U$ the \emph{universe} and $\mathcal{S}$ the \emph{collection}. A cover of $U$ is some subset $\mathcal{C}$ of $\mathcal{S}$ such that $\bigcup_{S\in\mathcal{C}}S=U$. A solution to the problem is some minimal cover of $U$.
\end{definition}

By considering $\langle\Removable(p),\setof{\Removed((p,\tau,p'))}{p'\sqsubset_x p}\rangle$ as an instance of the set cover problem, it is not too hard to see that an algorithm for solving the set cover problem can be used to find a minimal cover for some state $p$. Likewise, given an instance $\langle U, \mathcal{S}\rangle$ of the set cover problem, one may construct an LTS similar to the one \Cref{fig:saturate_choice}, such that a minimal cover of the initial state corresponds to a minimal set cover of the given instance. It turns out that minimisation w.r.t. $\equiv_x$ is NP-hard for similar reasons.

\begin{restatable}{theorem}{minhard}
    Minimisation w.r.t. $\equiv_x$ is NP-hard.	
\end{restatable}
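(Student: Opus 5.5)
We reduce from the (decision version of the) set cover problem: given $\langle U,\mathcal{S}\rangle$ with $U=\set{a_1,\dots,a_n}$, $\mathcal{S}=\set{S_1,\dots,S_m}$ and a bound $k$, decide whether a cover of size at most $k$ exists. Following \Cref{fig:saturate_choice}, we build an LTS $L$ with states $u$ (initial), $s_1,\dots,s_m$ and $\bot$. Each $a_j$ is a visible action, plus one fresh visible action $b$. The transitions are $u\xrightarrow{a_j}\bot$ for all $j$, $s_i\xrightarrow{a}\bot$ for $a\in S_i$, and $u\xrightarrow{b}s_i$ for all $i$. The $b$-steps serve only to make each $s_i$ reachable and distinct, so that the $s_i$ must survive in every equivalent LTS.

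We deliberately omit the $\tau$-components of the figure, since they only matter for coupled similarity. For $x=CS$ we instead add $\tau$-loop-free gadgets, or check that the coupling condition is satisfied trivially. This is because $\tau$-saturating $u\xrightarrow{\tau}s_i$ requires, for $\equiv_{CS}$, that $s_i\sqsubseteq_{CS}u$, which holds whenever $s_i$'s visible behaviour is contained in $u$'s and there are no $\tau$'s. We also assume, without loss of generality, that the $S_i$ are pairwise incomparable and that no $S_i=U$, which is harmless for set cover. These assumptions make $s_i\sqsubset_x u$ strict and the $s_i$ pairwise inequivalent.

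The claim is that $L$ has an equivalent LTS of size at most $(m+2$ states, $m+\sum_i|S_i|+k$ transitions$)$ iff a cover of size at most $k$ exists.

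For ($\Rightarrow$ of the cover), given a cover $\mathcal{C}$ we add $u\xrightarrow{\tau}s_i$ for each $S_i\in\mathcal{C}$ by \Cref{prop:saturate}. Every $u\xrightarrow{a_j}\bot$ is then covered, so we remove them all by \Cref{cor:covered} and \Cref{lem:covered_persist}, giving exactly that size.

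The converse is the main obstacle: we must show that any $L'\equiv_x L$ has at least this size. The plan is as follows.
\begin{enumerate}
\item Show $L'$ has at least $m+2$ states, since the classes of $u$, each $s_i$, and $\bot$ are pairwise inequivalent and must each be realised. We use \Cref{lem:max} and \Cref{lem:sim_act}: after $b$ from the initial state, the maximal reachable states must match each $s_i$, because the $s_i$ are pairwise incomparable and thus all maximal.
\item With exactly $m+2$ states, identify them with $u,s_1,\dots,s_m,\bot$ (a representative of each class).
\item Count transitions in three groups.
\begin{itemize}
\item Each $s_i$ needs $|S_i|$ outgoing transitions, since $\bot$ is the only deadlock-class target and any $\tau$ out of $s_i$ would lead to a state simulated by $s_i$. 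We must exclude a $\tau$ to another $s_{i'}$ via incomparability, and a $\tau$ to $\bot$ would break $s_i$'s equivalence.
\item $u$ needs at least $m$ transitions reaching each $s_i$-class after $b$. Routing through a $\tau$ to some $s_{i}$ cannot help, as $s_i$ has no $b$.
\item For each $a_j$, $u$ must weakly reach $\bot$ via $a_j$. Either there is a direct edge, or there is a path $u\xRightarrow{\tau}s_i\xrightarrow{a_j}$ with $a_j\in S_i$.
\end{itemize}
\item Let $T$ be the set of $\tau$-successors of $u$. The number of extra transitions is at least $|T|$ plus the number of uncovered $a_j$, and replacing each uncovered $a_j$ by any set containing it yields a cover of size at most $k$.
\end{enumerate}
The fiddly part is ruling out exotic $L'$, such as $\tau$-steps between $s_i$'s or extra states traded for fewer transitions. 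The lexicographic size order handles the latter, because states take priority. The former is handled by incomparability. The construction is polynomial, so NP-hardness follows.
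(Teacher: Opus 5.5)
\textbf{The case $x=CS$ has a genuine gap.} For $x=S$ your reduction is sound. Working with the decision version and patching each uncovered $a_j$ with an arbitrary set containing it is a legitimate simplification: it removes the need for the paper's doubled actions $\bar a_j$.

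You justify the $CS$ case by saying that $s_i\sqsubseteq_{CS}u$ ``holds whenever $s_i$'s visible behaviour is contained in $u$'s and there are no $\tau$'s''. This is false. For the pair $(s_i,u)$, the coupling condition requires some $u\xRightarrow{\tau}u'$ with $u'\sqsubseteq_{CS}s_i$. In a $\tau$-free LTS the only candidate is $u'=u$, and $u\not\sqsubseteq_{CS}s_i$ because $u\xrightarrow{b}$ while $s_i$ cannot do $b$. More generally, on $\tau$-free LTSs the coupling condition forces every coupled simulation to be symmetric, so $\sqsubseteq_{CS}$ collapses to bisimilarity.

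The consequences are concrete:
\begin{itemize}
\item \Cref{prop:saturate} cannot be applied to add $u\xrightarrow{\tau}s_i$.
\item In any $L'\equiv_{CS}L$, the state $u'$ can only have $\tau$-steps to itself, since neither $s_i$ nor $\bot$ is $\sqsubseteq_{CS}u$.
\item Hence all $n$ edges $u'\xrightarrow{a_j}\bot'$ are forced, and your $L$ is already $CS$-minimal.
\end{itemize}
So your claimed equivalence fails whenever a cover of size $k<n$ exists. ``Adding $\tau$-loop-free gadgets'' is too vague to repair this.

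The missing ingredient is what the paper does. For $x=CS$, include $\tau$ in $\hat U$ and in every $\hat S_i$, i.e.\ add $u\xrightarrow{\tau}\bot$ and $s_i\xrightarrow{\tau}\bot$.
\begin{itemize}
\item Now $u\xRightarrow{\tau}\bot\sqsubseteq_{CS}s_i$ witnesses the coupling condition, so $s_i\sqsubset_{CS}u$.
\item $u\xrightarrow{\tau}\bot$ becomes covered as soon as some $u\xrightarrow{\tau}s_i$ is present.
\end{itemize}

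Your counting must then be redone.
\begin{itemize}
\item Each $s'_i$ is forced to keep $s'_i\xrightarrow{\tau}\bot'$. Otherwise matching $s_i\xrightarrow{\tau}\bot$ would need $\bot\sqsubseteq_{CS}s'_i$, whose coupling condition fails. Your remark that such a $\tau$ ``would break $s_i$'s equivalence'' is wrong here, and already wrong for $x=S$, where that step is sound but merely superfluous.
\item $u'$ must $\tau$-reach $\bot'$. This costs one extra transition exactly when $u'$ has no $\tau$-step to any $s'_i$.
\end{itemize}
The bound becomes $m+\sum_i(|S_i|+1)+k$ transitions on $m+2$ states. With that bound and these forced transitions, your lower-bound argument (the $\tau$-successors of $u'$ plus patched uncovered elements give a cover) goes through for both equivalences.
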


\begin{algorithm}[ht]
\caption{A quotienting function for a minimal quotient.}
\label{alg:min_quotient}
\begin{algorithmic}[1]
\Procedure{MinQuotient}{$L$}
    \Let{$L_1$}{$L_{/\equiv}$}\label{line:min_quotient}
    \Let{$L_2$}{Remove covered transitions from $L_1$}\label{line:min_covered_1}
    \Let{$L_3$}{Desaturate $L_2$}\label{line:min_desaturate}
    \Let{$L_4$}{Remove covered transitions from $L_3$}\label{line:min_covered_2}
    \Let{$L_5$}{Remove unreachable states from $L_4$}\label{line:min_unreachable}
    \ForAll{$p\in S$}
        \Let{$X_p$}{be a minimal cover of $p$}\label{line:min_saturate_1}
    \EndFor
    \Let{$L_6$}{$L_5\cup\bigcup_{p\in S}X_p$}\label{line:min_saturate_2}
    \Let{$L_7$}{Remove covered transitions from $L_6$}\label{line:min_covered_3}
    \Return{$L_7$}\label{line:min_return}
\EndProcedure
\end{algorithmic}
\end{algorithm}

\Cref{alg:min_quotient} describes a procedure for computing a minimal quotient. Lines~\ref{line:min_quotient}--\ref{line:min_unreachable} essentially compute $\Call{CanQuotient}{L}$; \Aref**{line:min_covered_1} is superfluous since $L_5$ is reduced, and hence by \Cref{thm:can}, $L_5\sim\Call{CanQuotient}{L}$ holds. We include \Aref**{line:min_covered_1} as it simplifies the minimality proof, due to \Cref{prop:covered_quotient}. Then, after computing the canonical quotient, each state is covered optimally. As a consequence, an efficient algorithm for the set cover problem allows us to efficiently minimise LTSs w.r.t. $\equiv_x$.

Henceforth, for an LTS $L$, we shall use $L_i$ for $1\leq i\leq 7$ to denote the LTS in \Cref{alg:min_quotient} indicated by $L_i$. We shall parameterise $\Removable$ and $\Removed$ by $i$ as follows:
\begin{align*}
    \Removable_i(p)&\coloneqq\setof{(p,\alpha,q)}{\exists p'\sqsubset_x p: p_i\xrightarrow{\alpha}q_i\text{ is covered in }L_i\cup\set{(p,\tau,p')}}\\
    \Removed_i((p,\tau,p'))&\coloneqq\setof{(p,\alpha,q)}{p_i\xrightarrow{\alpha}q_i\text{ is covered in }L_i\cup\set{(p,\tau,p')}}\quad.
\end{align*}
After \Aref**{line:min_quotient}, the transformations used preserve $\sqsubseteq_x$, and therefore we do not find the need to parameterise $\sqsubseteq_x$ by $i$. We make the following observations:
\begin{itemize}
	\item $L_5$ is reduced, so it is unique.
	\item If $|L_7|\leq |L_2|$, then by \Cref{prop:covered_quotient}, it holds that $|L_7|\leq|L|$.
\end{itemize}
We shall use the three LTSs in \Cref{fig:injection} to demonstrate various claims, and illustrate why $|L_7|\leq |L_2|$.

\begin{figure}[ht]
	\centering
	\hfill
	\begin{tikzpicture}[node distance = 1cm]
		\tikzset{every path/.style={thick}}
		\node[node] (center) {$r$};
		\node[node] (top) at (90:1.5){$p$};
		\node[node] (bl) at (210:1.5){$q$};
		\node[node] (br) at (330:1.5){$s$};
		\coordinate[above=0.5cm of top] (init);
		\draw[->] (init) to (top);

		\draw[->, dashed, bend right] (top) to["$\tau$" above left] (bl);
		\draw[->, bend left] (center) to["$c$"] (bl);
		\draw[->, bend right] (center) to["$c$" below left] (br);
		\draw[->, bend left] (bl) to["$a$"] (center);
		\draw[->, bend right] (br) to["$b$" above right] (center);
		\draw[->, dotted, bend left] (top) to["$b$"] (center);
	\end{tikzpicture}
	\hfill
	\begin{tikzpicture}[node distance = 1cm]
		\tikzset{every path/.style={thick}}
		\node[node] (center) {$r$};
		\node[node] (top) at (90:1.5){$p$};
		\node[node] (bl) at (210:1.5){$q$};
		\node[node] (br) at (330:1.5){$s$};
		\coordinate[above=0.5cm of top] (init);
		\draw[->] (init) to (top);

		\draw[->, bend left] (center) to["$c$"] (bl);
		\draw[->, bend right] (center) to["$c$" below left] (br);
		\draw[->, bend left] (bl) to["$a$"] (center);
		\draw[->, bend right] (br) to["$b$" above right] (center);
		\draw[->, dashed, bend right] (top) to["$a$" left] (center);
		\draw[->, dotted, bend left] (top) to["$b$"] (center);
	\end{tikzpicture}
	\hfill
	\begin{tikzpicture}[node distance = 1cm]
		\tikzset{every path/.style={thick}}
		\node[node] (center) {$r$};
		\node[node] (top) at (90:1.5){$p$};
		\node[node] (bl) at (210:1.5){$q$};
		\node[node] (br) at (330:1.5){$s$};
		\coordinate[above=0.5cm of top] (init);
		\draw[->] (init) to (top);

		\draw[->, bend left] (center) to["$c$"] (bl);
		\draw[->, bend right] (center) to["$c$" below left] (br);
		\draw[->, bend left] (bl) to["$a$"] (center);
		\draw[->, bend right] (br) to["$b$" above right] (center);
		\draw[->, dotted, bend left] (top) to["$\tau$"] (br);
		\draw[->, dashed, bend right] (top) to["$\tau$" above left] (bl);
	\end{tikzpicture}
	\hfill\mbox{}
	\caption{Three weak simulation equivalent LTSs used to demonstrate various claims. They are not coupled similar. Some transitions are dashed or dotted for later reference.}
	\label{fig:injection}
\end{figure}

Firstly, observe the middle LTS. This LTS is reduced, so it is unique. Note that the $p\xrightarrow{a}r$ and $p\xrightarrow{b}r$ transitions are coverable. The former can be covered by adding a $p\xrightarrow{\tau}q$ transition, and the latter can be covered by adding a $p\xrightarrow{\tau}s$ transition. Hence, we indicate these pairs of transitions using dashed and dotted transitions.

Let $L$ be some LTS such that $L_2$ is given by the left LTS of \Cref{fig:injection}. Then $L_5$ is given by the middle LTS. The $p\xrightarrow{a}r$ transition was added by the $\tau$-desaturation step that removed the $p\xrightarrow{\tau}q$ transition. Reintroducing the transition allows us to remove the $p\xrightarrow{a}r$ transition again. Generally speaking, whenever there is a coverable transition of $L_5$ that is not a transition of $L_2$, then there is some $\tau$-desaturation step that can be reverted in order to remove this transition. This is given by \Cref{cor:desaturate_new}.

\begin{restatable}{lemma}{ldesaturatenew}\label{lem:desaturate_new}
    Let $L=\langle S, \act_\tau, \rightarrow, \iota\rangle$ be an LTS without inert transitions. Let $\bar{L}$ be obtained from $L$ by some number of $\tau$-desaturation steps. Suppose $p\xrightarrow{\alpha}q$ in $\bar{L}$, but not in $L$. Then there is some state $p'\neq p$ such that $p\xrightarrow{\tau}p'\xRightarrow{\alpha}q$ in $L$, but $p\nxrightarrow{\tau}p'$ in $\bar{L}$.	
\end{restatable}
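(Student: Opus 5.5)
The plan is an induction on the number of $\tau$-desaturation steps, proving an invariant strong enough to go through a single step. Write $L=L^0,L^1,\dots,L^n=\bar{L}$ for the successive LTSs, all on the state set $S$. I would first record three auxiliary facts.

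(a) Every weak transition of $L^k$ is a weak transition of $L^{k-1}$, and hence of $L^0$. Suppose step $k$ removes $p\xrightarrow{\tau}r$ and adds $p\xrightarrow{\beta}r'$ for each $r\xrightarrow{\beta}r'$. Then each added transition is mimicked in $L^{k-1}$ by the path $p\xrightarrow{\tau}r\xrightarrow{\beta}r'$. Every other transition of $L^k$ is already present in $L^{k-1}$.

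(b) Every state is equivalent across the $L^k$, by \Cref{prop:tau}. So $\equiv_x$ can be used unambiguously.

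(c) No $L^k$ has inert transitions. The only new $\tau$-transitions have the form $p\xrightarrow{\tau}r'$ with $p\xrightarrow{\tau}r\xrightarrow{\tau}r'$ in $L^{k-1}$. \Cref{lem:tau_succ} gives $r'\sqsubseteq_x r\sqsubseteq_x p$, so $p\equiv_x r'$ would force $p\equiv_x r$, making $p\xrightarrow{\tau}r$ inert in $L^{k-1}$. In particular, no $L^k$ has a $\tau$-self-loop.

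The invariant I prove for every $k$ is the statement of the lemma with $L^k$ in place of $\bar{L}$. Namely, every transition $p\xrightarrow{\alpha}q$ of $L^k$ that is not in $L^0$ has a witness $p'\neq p$ with $p\xrightarrow{\tau}p'\xRightarrow{\alpha}q$ in $L^0$ and $(p,\tau,p')\notin L^k$. The case $k=0$ is vacuous.

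For the step, fix step $k$ as above. First take a transition of $L^k$ that is not in $L^0$ and was already in $L^{k-1}$. Its witness $p'$ from the induction hypothesis survives, unless $(p,\tau,p')$ is one of the transitions added at step $k$. Next take a newly added $p\xrightarrow{\beta}r'$ that is not in $L^0$. Here I split into two cases.

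\emph{Case $p\xrightarrow{\tau}r$ in $L^0$.} Take $p'=r$. We have $r\neq p$ by (c), $r\xRightarrow{\beta}r'$ in $L^0$ by (a), and $(p,\tau,r)$ has just been removed in $L^k$.

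\emph{Case $p\xrightarrow{\tau}r$ not in $L^0$.} The induction hypothesis gives $p'$ with $p\xrightarrow{\tau}p'\xRightarrow{\tau}r$ in $L^0$. Composing with $r\xRightarrow{\beta}r'$ from (a) gives $p'\xRightarrow{\beta}r'$ in $L^0$.

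In every case the only remaining threat is that step $k$ itself adds $(p,\tau,p')$. This is the main obstacle, since desaturation can re-create a $\tau$-transition that was removed earlier.

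To dispose of it, suppose $r\xrightarrow{\tau}p'$ in $L^{k-1}$, so that step $k$ adds $(p,\tau,p')$. By (a), $r\xRightarrow{\tau}p'$ in $L^0$. Now look at where $p'$ came from.
\begin{itemize}
\item If $p'$ was chosen with $p'\xRightarrow{\tau}r$ in $L^0$, as in the second case, then $p'$ and $r$ lie on a common $\tau$-cycle of $L^0$. We have $p'\neq r$ because there are no self-loops, by (c). By \Cref{lem:tau_succ}, all states on this cycle are equivalent, so some transition of $L^0$ is inert. This is a contradiction.
\item If instead $p'=r$, we would need $r\xrightarrow{\tau}r$, which (c) excludes.
\item For old transitions whose witness $p'$ came from the induction hypothesis, the case distinction needs a little more care. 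I would handle it by showing that adding $(p,\tau,p')$ would make $p\xrightarrow{\tau}r\xRightarrow{\tau}p'$ together with the old witness collapse into an inert $\tau$-cycle in the same way. Alternatively, I would replace the witness by $r$ or by $r$'s own witness, using the same two-case analysis as above.
\end{itemize}
Checking this last bookkeeping carefully is where I expect the real work to lie.
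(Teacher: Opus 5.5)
Your route is the paper's: induction on the number of desaturation steps, splitting on whether the desaturated $p\xrightarrow{\tau}r$ is a transition of $L$, invoking the induction hypothesis when it is not, and using your fact (a) to pull $r\xrightarrow{\beta}r'$ back to a weak transition of $L$. You have also noticed something the paper's proof passes over. For transitions already present before the last step, it says the result ``holds inductively''. In its second case, it takes the witness supplied by the induction hypothesis. In neither place does it check that the last step does not re-add $(p,\tau,p')$. Your first two bullets supply exactly the missing argument for newly added transitions.

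The gap is your third bullet, and the first option you offer there cannot work, because an inherited witness can be re-added without any $\tau$-cycle. Take $x=S$ and let $L$ have the transitions $p\xrightarrow{c}0$, $p\xrightarrow{\tau}r$, $p\xrightarrow{\tau}p'$, $r\xrightarrow{b}0$, $r\xrightarrow{\tau}p'$ and $p'\xrightarrow{a}0$. The actions $b$ and $c$ make all three $\tau$-transitions non-inert. Desaturating $p\xrightarrow{\tau}p'$ creates $p\xrightarrow{a}0$ with witness $p'$. Desaturating $p\xrightarrow{\tau}r$ next puts $p\xrightarrow{\tau}p'$ back, so $p'$ is no longer a witness, although $L$ has no $\tau$-cycle at all. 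The only valid witness is now $r$.

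That is what your second option produces, and it does close the proof. Suppose step $k$ desaturates $p\xrightarrow{\tau}r$ and $r\xrightarrow{\tau}p'$ holds in $L^{k-1}$.
\begin{itemize}
\item If $p\xrightarrow{\tau}r$ is in $L$, use $r$. That transition is deleted at step $k$, and $r\xRightarrow{\tau}p'\xRightarrow{\alpha}q$ in $L$ by (a).
\item Otherwise, use the witness $r''$ of $p\xrightarrow{\tau}r$ given by the induction hypothesis. Then $r''\xRightarrow{\tau}r\xRightarrow{\tau}p'\xRightarrow{\alpha}q$ in $L$, and re-adding $(p,\tau,r'')$ at step $k$ is ruled out by your first bullet's cycle argument verbatim.
\end{itemize}
So whenever step $k$ desaturates $p\xrightarrow{\tau}r$, every transition of $L^k$ with source $p$ whose witness is threatened, old or new, can be given the witness ``$r$, or the witness of $p\xrightarrow{\tau}r$''. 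That witness is never re-added by the same step. Stating the inductive step this way is the cleanest fix, and it repairs the paper's proof as well.
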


\begin{restatable}{corollary}{cdesaturatenew}\label{cor:desaturate_new}
    Let $L=\langle S, \act_\tau, \rightarrow, \iota\rangle$ be an LTS such that $L_2$ and $L_7$ have the same set of states. Let $Z_p=\setof{(p,\tau,p')}{p_2\xrightarrow{\tau}p'_2\land p_3\nxrightarrow{\tau}p'_3}$. Suppose $p_5\xrightarrow{\alpha}q_5$ and $p_2\nxrightarrow{\alpha}q_2$. Then there exists some $(p,\tau,p')\in Z_p$ such that $(p,\alpha,q)\in\Removed_5((p,\tau,p'))$.	
\end{restatable}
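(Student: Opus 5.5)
The plan is to apply \Cref{lem:desaturate_new} to the pair $L_2$, $L_3$, and then check that the covering witness it supplies survives the later removal of covered transitions and unreachable states.

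First, $p_5\xrightarrow{\alpha}q_5$ gives $p_3\xrightarrow{\alpha}q_3$, since $L_4$ and $L_5$ only remove transitions and states from $L_3$. The hypothesis that $L_2$ and $L_7$ have the same states means that no state is removed in passing to $L_5$, because $L_6$ and $L_7$ do not add states. Hence $L_5$ is simply $L_4$, and its states coincide with those of $L_2$. The LTS $L_2$ has no inert transitions, because $L_1$ has none and $L_2$ only removes transitions from it. So $L_3$ is obtained from $L_2$ by $\tau$-desaturation steps. Since $p_3\xrightarrow{\alpha}q_3$ but $p_2\nxrightarrow{\alpha}q_2$, \Cref{lem:desaturate_new} yields a state $p'\neq p$ with $p_2\xrightarrow{\tau}p'_2\xRightarrow{\alpha}q_2$ and $p_3\nxrightarrow{\tau}p'_3$. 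In particular, $(p,\tau,p')\in Z_p$.

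It remains to show that $(p,\alpha,q)\in\Removed_5((p,\tau,p'))$, that is, that $p\xrightarrow{\alpha}q$ is covered in $L_5\cup\set{(p,\tau,p')}$. Desaturation and removal of covered transitions preserve every weak transition: by \Cref{prop:tau} each desaturation step replaces $p\xrightarrow{\tau}q$ by the one-step saturations, so weak transitions are preserved, and \Cref{prop:covered} shows the same for covered-transition removal. Hence $p'\xRightarrow{\alpha}q$ still holds in $L_5$. Write this as $p'\xRightarrow{\tau}u\xrightarrow{\alpha}v\xRightarrow{\tau}q$ in $L_5$.

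In $L_5\cup\set{(p,\tau,p')}$ we then have $p\xrightarrow{\tau}p'\xRightarrow{\tau}u$, so $p\xRightarrow{\tau}u$, with $u\xrightarrow{\alpha}v$. By \Cref{lem:tau_succ}, $q\sqsubseteq_x v$. Here we use that $\sqsubseteq_x$ is unchanged by these transformations, as the paper notes after \Aref**{line:min_quotient}, together with \Cref{prop:saturate} for the added $\tau$-step. It remains to check $(p,q)\neq(u,v)$.

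If $(u,v)=(p,q)$, then $p'\xRightarrow{\tau}p$ in $L_5$. Together with $p\xrightarrow{\tau}p'$ in $L_2$, \Cref{lem:tau_succ} gives $p\equiv_x p'$, so $p\xrightarrow{\tau}p'$ would be an inert transition of $L_2$ (as $p\neq p'$), a contradiction. This step, excluding the degenerate witness and making sure the $\alpha=\tau$ and zero-length cases are handled, is the one I expect to need the most care. In the case $\alpha=\tau$ with a zero-length weak path, the witness must still be a genuine strong transition $u\xrightarrow{\tau}v$. Here one uses that $p'\xRightarrow{\tau}q$ in $L_2$ with $p'\neq q$, which follows from $q\not\equiv_x p$ (there are no inert transitions) and $p'\equiv_x q$ otherwise. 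So the path in $L_5$ is nonempty, and one can take its first $\tau$-step. If $p'=q$, then the added transition $(p,\tau,p')$ itself differs from $(p,q)$ only if $p\xrightarrow{\tau}p'$ is not already present. But in that case the added transition is exactly $(p,\tau,q)$, and covering requires the comparison in the opposite direction. So this subcase needs the argument that $p_3\nxrightarrow{\tau}p'_3$ while $p_3\xrightarrow{\tau}q_3$, which rules out $p'=q$ when $\alpha=\tau$.
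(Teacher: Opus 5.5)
Your first step matches the paper: \Cref{lem:desaturate_new} applied to $L_2$ and $L_3$ gives $p'$ with $(p,\tau,p')\in Z_p$. The gap is the transport step. There you assert that desaturation and removal of covered transitions preserve every weak transition, and you conclude $p'_5\xRightarrow{\alpha}q_5$. That principle is false.

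Desaturating $r\xrightarrow{\tau}q$ on a path $p'\xrightarrow{a}r\xrightarrow{\tau}q$ (for $x=CS$, when $q$ has an outgoing $\tau$) deletes the last step. It only gives $r$ copies of the outgoing transitions of $q$, so $p'\xRightarrow{a}q$ can be lost. A weak transition survives a desaturation step only if the removed $\tau$-step is followed by another step on the path.

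Similarly, removing a transition $s\xrightarrow{a}q$ that is covered by $s\xrightarrow{a}q'$ with $q\sqsubset_x q'$ destroys $s\xRightarrow{a}q$. \Cref{prop:covered} only protects paths through the \emph{covering} transition, and \Cref{prop:tau} speaks about equivalence, not about weak transitions.

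Under the corollary's hypotheses, the particular fact $p'_5\xRightarrow{\alpha}q_5$ can in fact be salvaged. But that needs a separate argument exploiting the survival of $p\xrightarrow{\alpha}q$ in $L_4$: if some path from $p'$ to $q$ did not end in its $\alpha$-step, or could be redirected to a strictly larger target, then $p\xrightarrow{\alpha}q$ would already be covered in $L_3$. Nothing of this kind appears in your proposal.

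The missing idea is that you never need the exact target $q$: covering only asks for some target above $q$. Since $p'$ in $L_5$ is equivalent to $p'$ in $L_2$ (by \Cref{prop:tau}, \Cref{cor:covered}, and the fact that no state is removed), \Cref{lem:sim_act} gives $p'_5\xRightarrow{\alpha}q'_5$ for some $q'\sqsupseteq_x q$.

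\textbf{Case $\alpha\neq\tau$.} Write the weak transition as $p'_5\xRightarrow{\tau}u_5\xrightarrow{\alpha}v_5\xRightarrow{\tau}q'_5$. Then $q\sqsubseteq_x v$ and $u\sqsubseteq_x p'\sqsubset_x p$, so $u\neq p$. Hence $u\xrightarrow{\alpha}v$ covers $p\xrightarrow{\alpha}q$ in $L_5\cup\set{(p,\tau,p')}$ via $p\xrightarrow{\tau}p'\xRightarrow{\tau}u$. This is the paper's argument. Keep, as you do, the saturating transition $(p,\tau,p')$ itself: the paper's write-up names $(p,\tau,p'')$ at this point, which need not lie in $Z_p$.

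\textbf{Case $\alpha=\tau$.} No path in $L_5$ is needed: the added $p\xrightarrow{\tau}p'$ itself covers $p\xrightarrow{\tau}q$. Indeed $q\sqsubseteq_x p'$ by \Cref{lem:tau_succ} applied in $L_2$, and $p'\neq q$ because $p_3\xrightarrow{\tau}q_3$ but $p_3\nxrightarrow{\tau}p'_3$. Your $\alpha=\tau$ paragraph does eventually reach this last fact. However, the detour through $q\not\equiv_x p$ and the \emph{opposite direction} of comparison is not a valid argument and should be dropped.
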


Thus, we can use the set $Z_p$ defined in \Cref{cor:desaturate_new} to cover and remove part of $\Removable_5(p)$. When we remove transitions in this way, the LTS does not increase in size. However, not all coverable transitions get removed. In our previous example, the $p\xrightarrow{b}r$ transition does not get removed. This transition can be removed by extending $Z_p$ with the $p\xrightarrow{\tau}s$ transition, and we obtain the LTS on the right of \Cref{fig:injection}. In general, the set $Z_p$ can always be extended in this way in order to cover $p$.

\begin{restatable}{lemma}{cover}\label{lem:cover}
    Let $L=\langle S, \act_\tau, \rightarrow, \iota\rangle$ be an LTS such that $L_2$ and $L_7$ have the same set of states, and let:
    \begin{align*}
        &Y_p\text{ be a cover of }\setof{(p,\alpha,q)\in\Removable_5(p)}{p_2\xrightarrow{\alpha}q_2}\\
        &Z_p=\setof{(p,\tau,p')}{p_2\xrightarrow{\tau}p'_2\land p_3\nxrightarrow{\tau}p'_3}\quad.
    \end{align*}
    Then $Y_p\cup Z_p$ is a cover of $p$ in $L_5$.	
\end{restatable}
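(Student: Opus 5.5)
We need to show that $Y_p\cup Z_p$ is a cover of $p$ in $L_5$, i.e.\ that $Y_p\cup Z_p\subseteq\setof{(p,\tau,p')}{p'\sqsubset_x p}$ and that
\[
\bigcup_{(p,\tau,p')\in Y_p\cup Z_p}\Removed_5((p,\tau,p'))\supseteq\Removable_5(p).
\]
The plan is to split $\Removable_5(p)$ according to whether each transition is already present in $L_2$ or not, and to handle the two parts with $Y_p$ and $Z_p$ respectively.

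\textbf{Admissibility of the $\tau$-steps.} For $Y_p$ this holds by the definition of a cover. For $Z_p$, take $(p,\tau,p')\in Z_p$, so $p_2\xrightarrow{\tau}p'_2$. By \Cref{lem:tau_succ}, $p'\sqsubseteq_x p$ in $L_2$. Moreover $L_2$ has no inert transitions: $L_1$ has none by the remark after the $\forall$-quotient, and removing transitions cannot create inert ones, since by \Cref{cor:covered} equivalence of states is preserved. Hence $p'\not\equiv_x p$, so $p'\sqsubset_x p$.

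Since the transformations after \Aref**{line:min_quotient} preserve $\sqsubseteq_x$ on the common states (\Cref{cor:covered}, \Cref{prop:tau}, and the fact that removing unreachable states does not affect the states that remain), this strict relation also holds in $L_5$. The hypothesis that $L_2$ and $L_7$ have the same set of states guarantees that $p$ and $p'$ are still states of $L_5$. Indeed, the states of $L_5$ lie between those of $L_2$ and those of $L_7$, because lines~\ref{line:min_saturate_2}--\ref{line:min_return} neither add nor remove states.

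\textbf{Covering argument.} Let $(p,\alpha,q)\in\Removable_5(p)$; in particular $p_5\xrightarrow{\alpha}q_5$. There are two cases.
\begin{itemize}
\item If $p_2\xrightarrow{\alpha}q_2$, then $(p,\alpha,q)$ lies in the set that $Y_p$ covers. So some $(p,\tau,p')\in Y_p$ has $(p,\alpha,q)\in\Removed_5((p,\tau,p'))$.
\item If $p_2\nxrightarrow{\alpha}q_2$, then \Cref{cor:desaturate_new} applies directly, using the same hypothesis on the states of $L_2$ and $L_7$. It gives some $(p,\tau,p')\in Z_p$ with $(p,\alpha,q)\in\Removed_5((p,\tau,p'))$.
\end{itemize}
In both cases the transition lies in the union of the $\Removed_5$ sets, which completes the argument.

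\textbf{Expected obstacle.} The main difficulty is bookkeeping rather than substance: we must make sure that $\sqsubset_x$ and the notions of being covered and $\Removed_5$ are all evaluated in the right LTS. In particular, the strictness $p'\sqsubset_x p$ obtained in $L_2$ has to transfer to $L_5$. This is where I would lean on the paper's standing remark that $\sqsubseteq_x$ is not parameterised by $i$ after \Aref**{line:min_quotient}.
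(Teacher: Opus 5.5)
Your proposal is correct and follows essentially the same route as the paper. Both proofs establish admissibility of $Y_p$ from the definition of a cover and of $Z_p$ via \Cref{lem:tau_succ} plus the absence of inert transitions, then split on whether $p_2\xrightarrow{\alpha}q_2$, using $Y_p$ in one case and \Cref{cor:desaturate_new} in the other; your extra bookkeeping (that $p'$ is still a state of $L_5$ and that $p'\sqsubset_x p$ carries over from $L_2$ to $L_5$) only spells out what the paper leaves implicit.
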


If the set $Y_p$ in \Cref{lem:cover} is minimal, then $Y_p$ is no larger than the subset of $\Removable_5(p)$ covered by elements of $Y_p$. Thus, removing transitions using $Y_p$ does not increase the size of the LTS. Since this was also the case for $Z_p$, we can use $Y_p\cup Z_p$ in place of $X_p$ on \Aref**{line:min_saturate_1}, and this guarantees that $|L_2|\leq|L_7|$. Then, since $X_p$ is no larger than $Y_p\cup Z_p$, and it covers all the same transitions, we also guarantee $|L_2|\leq|L_7|$ by using $X_p$.

\begin{theorem}\label{thm:min}
    Let $L$ be an LTS. Then $|L_7|\leq|L|$.
\end{theorem}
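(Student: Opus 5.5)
By \Cref{prop:covered_quotient} we have $|L_2|\leq|L|$, so it suffices to show $|L_7|\leq|L_2|$. We compare states and transitions separately.

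\emph{States.} Every transformation after \Aref**{line:min_quotient} removes states or keeps them, so $L_7$ has at most as many states as $L_2$. If the inclusion is strict, then $|L_7|<|L_2|$ lexicographically and we are done. Hence we may assume that $L_2$ and $L_7$ have the same set of states. This is exactly the hypothesis of \Cref{cor:desaturate_new} and \Cref{lem:cover}. In particular, no state is deleted in \Aref**{line:min_unreachable}, so $L_5$ and $L_2$ share the state set $S$.

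\emph{Transitions.} We count outgoing transitions per state $p$ and aim to show $|\text{out}_7(p)|\leq|\text{out}_2(p)|$.

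First, we note that adding $X_p$ only creates $\tau$-transitions from $p$. By \Cref{lem:covered_persist}, removing all covered transitions of $L_6$ at once is sound. Moreover, the transitions of $L_6$ leaving $p$ that are covered include every element of $\bigcup_{X_p}\Removed_5$. Here we must check that coverage in $L_5\cup\set{(p,\tau,p')}$ persists in $L_6$: adding transitions only adds weak paths, and the preorder is unchanged. We must also check that an added $\tau$-step does not itself get removed, or, if it does, that this only helps the count. Consequently
\[|\text{out}_7(p)|\leq|\text{out}_5(p)|-|\Removable_5(p)|+|X_p|.\]

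Next, we bound this using the cover $Y_p\cup Z_p$ from \Cref{lem:cover}. By minimality of $X_p$ we have $|X_p|\leq|Y_p\cup Z_p|\leq|Y_p|+|Z_p|$. We split $\text{out}_5(p)$ into three parts:
\begin{enumerate}
\item transitions also present in $L_2$;
\item transitions absent from $L_2$, which are all in $\Removable_5(p)$ by \Cref{cor:desaturate_new};
\item $\tau$-transitions of $L_2$ removed by desaturation or covering, i.e.\ $Z_p$ together with others.
\end{enumerate}
Let $A=\setof{(p,\alpha,q)\in\Removable_5(p)}{p_2\xrightarrow{\alpha}q_2}$. We choose $Y_p$ to be a minimal cover of $A$, so $|Y_p|\leq|A|$, since each element of $A$ can be covered individually. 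Then
\[|\text{out}_7(p)|\leq \bigl|\text{out}_5(p)\cap\text{out}_2(p)\bigr|-|A|+|Y_p|+|Z_p|\leq|\text{out}_5(p)\cap\text{out}_2(p)|+|Z_p|.\]
By definition, $Z_p\subseteq\text{out}_2(p)$ and $Z_p$ is disjoint from $\text{out}_3(p)$. Since \Aref**{line:min_covered_2} only removes transitions, $Z_p$ is also disjoint from $\text{out}_5(p)$, giving
\[|\text{out}_5(p)\cap\text{out}_2(p)|+|Z_p|\leq|\text{out}_2(p)|.\]
Summing over all $p$ yields $|L_7|\leq|L_2|$.

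\emph{Main obstacle.} The delicate step is the first inequality: showing that in $L_6$ every transition in $\Removed_5$ of a chosen saturation step is still covered, and that removing covered transitions all at once never deletes needed ones, in particular the added $\tau$-steps. We also need $\Removed_5$ to be well defined even when $L_5$ lacks inert transitions after saturation. For this I would reuse \Cref{lem:covered_persist} and \Cref{prop:saturate}. We must further verify that covering is upward-monotone under adding transitions: the first condition $p\xRightarrow{\tau}p'$ persists, and $\sqsubseteq_x$ is invariant. This gives the required monotonicity.
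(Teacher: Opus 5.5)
Your proof is correct and is essentially the paper's argument. The paper uses the same ingredients: \Cref{lem:cover}, \Cref{cor:desaturate_new}, minimality of $X_p$ against the cover $Y_p\cup Z_p$, the bound $|Y_p|\le|\hat{Y}_p|$, and the disjointness of $Z_p$ from the transitions of $L_5$. It packages them into an explicit injection from transitions of $L_7$ to transitions of $L_2$, whereas you phrase the same thing as a per-state cardinality count. Your inclusion $\text{out}_5(p)\setminus\text{out}_2(p)\subseteq\Removable_5(p)$ even makes explicit a step the paper's identity component uses tacitly. Your item (3) is misfiled, since those transitions are not in $\text{out}_5(p)$, but it plays no role in your inequalities.
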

\begin{proof}
    As previously discussed, we need only show that $|L_7|\leq |L_2|$. This requires showing that if $L_2$ and $L_7$ have an equal number of states, then $L_7$ has at most as many transitions as $L_2$. Suppose $L_2$ and $L_7$ have the same number of states. Then none of the transformations after \Aref**{line:min_covered_1} modify the set of states, and hence $L_2$ and $L_7$ have the same set of states. For an arbitrary state, $p$, let $X_p$ be the set defined on \Aref**{line:min_saturate_1}, let $Y_p$ be a \emph{minimal} cover of $\hat{Y}_p=\setof{(p,\alpha,q)\in\Removable_5(p)}{p_2\xrightarrow{\alpha}q_2}$, and let $Z_p=\setof{(p,\tau,p')}{p_2\xrightarrow{\tau}p'_2\land p_3\nxrightarrow{\tau}p'_3}$. \Cref{fig:injection_2} depicts an injection, $f$, from the transitions of $L_7$ to the transitions of $L_2$, with the following form:
	\begin{align*}
        f((p,\alpha,q))&=\begin{cases}
            (p,\alpha,q)&\text{if }(p,\alpha,q)\notin X_p\\
            \rep_p \circ g_p((p,\alpha,q))&\text{if }(p,\alpha,q)\in X_p
        \end{cases}\\
		\rep_p((p,\alpha,q))&=\begin{cases}
            (p,\alpha,q)&\text{if }(p,\alpha,q)\in Z_p\setminus Y_p\\
            h_p((p,\alpha,q))&\text{if }(p,\alpha,q)\in Y_p\quad.
		\end{cases}
	\end{align*}
	
	\begin{figure}[H]
		\centering
		\begin{tikzpicture}[node distance = 2cm]
			\tikzset{every path/.style={thick}}
			\draw[rounded corners=10] (0, 0) rectangle (3,4);
			\node at (-0.5,3.5) {$L_7$};
			\draw[rounded corners=10] (7.5, 0) rectangle (10.5,4);
			\node at (11,3.5) {$L_2$};

			\draw[rounded corners=10] (0.25, 0.25) rectangle (2.75,3) node[midway] (Xp) {$X_p$};
			\draw[rounded corners=10] (4, 0.25) rectangle (6.5,3) 
			node[midway, yshift=0.75cm] (Yp) {$Y_p$}
			node[midway, yshift=-0.75cm] (Zp1) {$Z_p\setminus Y_p$};
			\draw[rounded corners=10] (7.75, 0.25) rectangle (10.25,1.5) node[midway] (Zp2) {$Z_p\setminus Y_p$};
			\draw[rounded corners=10] (7.75, 1.75) rectangle (10.25,3) node[midway] (Removable) {$\hat{Y_p}$};

			\draw[snake it, dashed] (4, 1.625) -- (6.5,1.625);

			\draw[->] (Xp) to["$g_p$" xshift=16pt] (4,1.625);
			\draw[->] (Yp) to["$h_p$" xshift=9pt] (7.75,2.375);
			\draw[->] (Zp1) to["$\id$" xshift=4pt] (7.75,0.875);
			\draw[->] (1.5,3.5) to["$\id$"] (9, 3.5);
		\end{tikzpicture}
		\caption{Structure of an injection from transitions of $L_7$ to transitions of $L_2$. Each arrow represents an injection, and $\id$ is the identity function.}
		\label{fig:injection_2}
	\end{figure}

    By \Cref{lem:cover}, $Y_p\cup Z_p$ is a cover of $p_5$. Since $X_p$ is a minimal cover of $p_5$, there is an injection from $X_p$ to $Y_p\cup Z_p$. Let $g_p$ be such an injection. Now we show that there is an injection from $Y_p$ to $\hat{Y_p}$. Consider some $(p,\tau,p')\in Y_p$, and suppose for the sake of contradiction that for each $(p,\alpha,q)\in\Removed_5((p,\tau,p'))\cap\hat{Y}$, there is some $(p,\tau,p'')\in Y_p$ such that $p'\neq p''$ and $(p,\alpha,q)\in\Removed_5((p,\tau,p''))\cap\hat{Y}$. Then $Y_p\setminus\set{(p,\tau,p')}$ is a smaller cover of $\hat{Y}$, contradicting our choice of $Y_p$. Therefore, we may construct an injection from $Y_p$ to $\hat{Y}_p$ by mapping each $(p,\tau,p')\in Y_p$ to some $(p,\alpha,q)\in\hat{Y}$ that is uniquely removed by $(p,\tau,p')$. Let $h_p$ be such an injection, thereby concluding the construction of $f$.

	Now we show that $\rep_p$ is an injection. Since both $h_p$ and the identity are injections, it suffices to show that $\hat{Y_p}\cap(Z_p\setminus Y_p)=\emptyset$. By definition of $\hat{Y}_p$, each of its transitions is a transition of $L_5$. By definition of $Z_p$, each of its transitions is not a transition of $L_3$, and hence not a transition of $L_5$. Therefore, these two sets do not intersect.
	
	Now we show that $f$ is an injection. Note that $f$ maps a transition with source $p$ to another transition with source $p$. Therefore, we may fix $p$ and show that:
	\begin{enumerate}
		\item $f((p,\alpha,q))\in\hat{Y}_p$ implies $(p,\alpha,q)\in X_p$, and
		\item $f((p,\alpha,q))\in Z_p\setminus Y_p$ implies $(p,\alpha,q)\in X_p$.
	\end{enumerate}
	For (1), take $f((p,\alpha,q))\in\hat{Y}_p$ and assume for the sake of contradiction that $(p,\alpha,q)\notin X_p$. Then $f((p,\alpha,q))=(p,\alpha,q)$, and $(p,\alpha,q)$ is a transition of both $L_2$ and $L_7$. However, $\hat{Y}_p$ is a set of removable transitions in $L_5$, and therefore no transition of $\hat{Y}_p$ is a transition of $L_7$. This contradicts $p_7\xrightarrow{\alpha}q_7$. 
	
	For (2), take $f((p,\alpha,q))\in Z_p\setminus Y_p$ and assume for the sake of contradiction that $(p,\alpha,q)\notin X_p$. Then $f((p,\alpha,q))=(p,\alpha,q)$, and $(p,\alpha,q)$ is a transition of both $L_2$ and $L_7$. However, $Z_p$ is a set of transitions that do not belong to $L_3$, which means the $(p,\alpha,q)$ transition was added on \Aref**{line:min_saturate_2}, and therefore $(p,\alpha,q)\in X_p$. This contradicts our assumption that $(p,\alpha,q)\notin X_p$.

    Thus, $f$ is an injection from transitions of $L_7$ to $L_2$, which was what we wanted.
\end{proof}

\begin{restatable}{corollary}{minquotient}
    \Cref{alg:min_quotient} computes a minimal quotient.	
\end{restatable}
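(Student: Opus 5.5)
To prove that \Cref{alg:min_quotient} computes a minimal quotient, I would establish two things: that $L_7$ is a quotient LTS with $L_7\equiv_x L$, and that $|L_7|\leq|L'|$ for every $L'\equiv_x L$.

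\textbf{Equivalence.} I would follow the lines of the algorithm in order, using the fact that every transformation preserves equivalence state by state.
\begin{itemize}
\item By \Cref{prop:quotient}, $L_1\equiv_x L$, each state $p$ of $L$ is equivalent to $[p]_x$, and $L_1$ has no inert transitions.
\item Removing covered transitions (Lines~\ref{line:min_covered_1} and~\ref{line:min_covered_2}) preserves equivalence by \Cref{cor:covered} and \Cref{lem:covered_persist}.
\item Desaturation (Line~\ref{line:min_desaturate}) preserves equivalence by \Cref{prop:tau}, and terminates by \Cref{prop:termination}.
\item Removing unreachable states does not change the behaviour of the initial state.
\item The saturation step on Line~\ref{line:min_saturate_2} adds only transitions $(p,\tau,p')$ with $p'\sqsubset_x p$, so \Cref{prop:saturate} applies once per added transition.
\item The final removal on Line~\ref{line:min_covered_3} again uses \Cref{cor:covered}.
\end{itemize}
Since each of these results also states that every state keeps its equivalence class, $\sqsubseteq_x$ between states is unchanged along the chain. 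Two consequences follow. First, no inert transitions are ever introduced: the added $\tau$-steps go to strictly smaller states. This keeps the hypotheses ``without inert transitions'' valid at every step. Second, distinct states of $L_1$ are inequivalent and no later step merges or creates states, so $L_7$ is a quotient LTS.

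\textbf{Minimality.} Fix an arbitrary $L'\equiv_x L$. The key observation is that running the algorithm on $L'$ yields an LTS of the same size as running it on $L$.
\begin{enumerate}
\item By \Cref{thm:can}, $L_5\sim L'_5$: both are reduced and equivalent, since $L_5\equiv_x L\equiv_x L'\equiv_x L'_5$.
\item Lines~\ref{line:min_saturate_1}--\ref{line:min_return} depend only on $L_5$ up to isomorphism. Under the isomorphism, the sets $\Removable_5(p)$ and $\Removed_5((p,\tau,p'))$ correspond, so minimal covers have equal cardinality.
\item Lines~\ref{line:min_saturate_2}--\ref{line:min_covered_3} then yield LTSs with the same number of states, namely those of $L_5$. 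They also have the same number of transitions, namely $|{\rightarrow_5}|+\sum_p|X_p|$ minus the transitions that are covered in $L_6$.
\end{enumerate}
Hence $|L_7|=|L'_7|$, and by \Cref{thm:min} applied to $L'$, $|L'_7|\leq|L'|$. Therefore $|L_7|\leq|L'|$.

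\textbf{Main obstacle.} I expect step 3 to be the delicate point. It requires that the number of transitions removed on Line~\ref{line:min_covered_3} be determined by the size of the minimal cover alone, rather than by which minimal cover is chosen. I would argue this as follows.
\begin{itemize}
\item A transition of $L_6$ is covered iff it is an added $\tau$-transition that happens to be covered, or it lies in $\Removable_5(p)$. For the latter, any covering transition from $p$ can be taken as one in $X_p$, or one already present; this uses \Cref{lem:covered_persist} to remove everything at once.
\item An element of $X_p$ cannot itself be covered in a minimal cover. Otherwise its target would lie below some $q'$ reachable through another element, and the element could be dropped while all its covered transitions remain covered, contradicting minimality.
\end{itemize}
So the removed set is exactly $\bigcup_p\Removable_5(p)$, and the transition count of $L_7$ is $|{\rightarrow_5}|-\sum_p|\Removable_5(p)|+\sum_p|X_p|$. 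This quantity is an isomorphism invariant, which completes the argument.
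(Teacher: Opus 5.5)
Your proposal is correct and follows the paper's own argument. It uses canonicity of $L_5$ via \Cref{thm:can}, equal sizes of minimal covers under the isomorphism, and \Cref{thm:min} applied to the comparison LTS; you take an arbitrary $L'\equiv_x L$ where the paper fixes a minimal $\hat{L}$, which makes no real difference. Your extra step showing that Line~\ref{line:min_covered_3} removes exactly $\bigcup_p\Removable_5(p)$ and none of the added cover transitions (essentially the reasoning of \Cref{prop:cover_max}) is what the paper's bijection $h$ tacitly assumes, so it is a welcome refinement rather than a different route.
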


As discussed previously, finding minimal covers of states is NP-complete, so \Cref{alg:min_quotient} cannot be implemented efficiently unless $P=NP$. However, this procedure allows us to leverage state of the art algorithms for the set cover problem. Moreover, the instances of the set cover problem that need to be solved are typically small. Recall that finding a minimal cover of some state $p$ corresponds to finding a minimal set cover for the instance $\langle\Removable(p),\setof{\Removed((p,\tau,p'))}{p'\sqsubset_x p}\rangle$. The size of $\Removable(p)$ is bounded by the out-degree of $p$ since, in the worst case, every outgoing transition of $p$ is coverable. Thus, the size of the set that needs to be covered is small. Furthermore, by the following proposition it suffices to consider all $p'\in\max(p\sqsupset_x)$, rather than all $p'\sqsubset_x p$, both in the construction of the universe, $\Removable(p)$, and in the construction of the collection, $\setof{\Removed((p,\tau,p'))}{p'\sqsubset_x p}$. Lastly, note that $\Removed((p,\tau,p'))$ may be empty, so the size of the collection may be reduced even further.

\begin{restatable}{proposition}{covermax}\label{prop:cover_max}
	Let $L=\langle S, \act_\tau, \rightarrow, \iota\rangle$ be an LTS without inert or covered transitions, and let $p,p'\in S$ be such that $p'\sqsubset_x p$. Then $\Removed((p,\tau,p'))\subseteq\Removed((p,\tau,p''))$ holds for all $p''$ satisfying $p'\sqsubseteq_x p''\sqsubset_x p$.
\end{restatable}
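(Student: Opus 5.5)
The plan is to work directly from the definition of $\Removed$. Write $L'=L\cup\set{(p,\tau,p')}$ and $L''=L\cup\set{(p,\tau,p'')}$. By \Cref{prop:saturate} both additions preserve $\sqsubseteq_x$, so $\sqsubseteq_x$ can be read in $L$ throughout. Fix $(p,\alpha,q)\in\Removed((p,\tau,p'))$. Then there is a transition $s\xrightarrow{\alpha}t$ of $L'$ such that $p\xRightarrow{\tau}s$ in $L'$, $q\sqsubseteq_x t$ and $(p,q)\neq(s,t)$. Since $L$ has no covered transitions, this covering witness must use the new edge $(p,\tau,p')$, either as the covering transition itself or inside the weak path $p\xRightarrow{\tau}s$. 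This gives two cases.

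\emph{Case 1: the covering transition is the new edge}, so $s=p$, $\alpha=\tau$ and $t=p'$. Then $q\sqsubseteq_x p'\sqsubseteq_x p''$, so in $L''$ the edge $(p,\tau,p'')$ covers $(p,\tau,q)$, provided $q\neq p''$. The degenerate situation $q=p''$ forces $q\equiv_x p'$. In the intended setting, where no two states are equivalent (the algorithm applies this after quotienting), it gives $q=p'$, hence $(p,\tau,p')\in L$ and $L'=L$. Then $(p,\tau,q)$ would be covered in $L$, a contradiction. I expect this edge case to be the only delicate point, and I would state explicitly that it uses the quotient property.

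\emph{Case 2: the path $p\xRightarrow{\tau}s$ uses the new edge.} Take its last use, so that $p\xrightarrow{\tau}p'\xRightarrow{\tau}s$ with the tail lying in $L$. If $s\xrightarrow{\alpha}t$ is itself the new edge, we are back in Case 1 after shortening the path, so assume $s\xrightarrow{\alpha}t$ lies in $L$. Then $p'\xRightarrow{\alpha}t$ in $L$, and $p'\sqsubseteq_x p''$. By \Cref{lem:sim_act} there is $t'$ with $p''\xRightarrow{\alpha}t'$ and $t\sqsubseteq_x t'$. Unfold this as $p''\xRightarrow{\tau}s''\xrightarrow{\alpha}t''\xRightarrow{\tau}t'$ in $L$. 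By \Cref{lem:tau_succ}, $t'\sqsubseteq_x t''$, so $q\sqsubseteq_x t\sqsubseteq_x t'\sqsubseteq_x t''$. Moreover $p\xrightarrow{\tau}p''\xRightarrow{\tau}s''$ in $L''$. Hence $s''\xrightarrow{\alpha}t''$ covers $(p,\alpha,q)$ in $L''$, as long as $(s'',t'')\neq(p,q)$.

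To rule out $(s'',t'')=(p,q)$: if $s''=p$, then $p''\xRightarrow{\tau}p$, so $p\sqsubseteq_x p''$ by \Cref{lem:tau_succ}. This contradicts $p''\sqsubset_x p$. Therefore $(p,\alpha,q)\in\Removed((p,\tau,p''))$ in both cases, which proves the inclusion.
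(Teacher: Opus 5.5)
Your route is the paper's. You turn the covering witness in $L\cup\{(p,\tau,p')\}$ into a weak $\alpha$-move of $p'$, transfer it to $p''$ with \Cref{lem:sim_act}, and use the strong $\alpha$-step on that move as the cover in $L\cup\{(p,\tau,p'')\}$. You exclude $s''=p$ because $p''\sqsubset_x p$.

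Your caveat in Case~1 is more than a technicality: without the hypothesis that no two states are equivalent, the proposition is false. Let $L$ have states $p,q,p'$ and only the transitions $p\xrightarrow{\tau}q$ and $p\xrightarrow{a}q$. Then $q\equiv_x p'$ (both deadlock), $q\sqsubset_x p$ and $p'\sqsubset_x p$, and $L$ has no inert or covered transitions. In $L\cup\{(p,\tau,p')\}$ the new edge covers $p\xrightarrow{\tau}q$, so $(p,\tau,q)\in\Removed((p,\tau,p'))$. Yet $p''\coloneqq q$ satisfies $p'\sqsubseteq_x p''\sqsubset_x p$, and $\Removed((p,\tau,q))=\emptyset$ because $L\cup\{(p,\tau,q)\}=L$. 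The paper's proof misses this: its step ``w.l.o.g.\ $p''\xRightarrow{\tau}r''\xrightarrow{\alpha}q''$'' is impossible here, since $p''=q$ has no transitions. So state the extra hypothesis explicitly. It holds for the reduced $L_5$, which is where the proposition is applied.

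The remaining gap is in your Case~2, and it is the same flaw as that step of the paper. For $\alpha=\tau$, the move $p''\xRightarrow{\tau}t'$ supplied by \Cref{lem:sim_act} may be the empty path, $t'=p''$. A simulating state may answer a $\tau$-step by idling, for instance when $p''$ has no $\tau$-transitions. Then there is no strong step $s''\xrightarrow{\tau}t''$ to unfold, and your claimed cover does not exist. So Case~1 is not the only delicate point.

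The repair is short and needs no quotient assumption. For $\alpha=\tau$ in Case~2, skip \Cref{lem:sim_act} and let the new edge $p\xrightarrow{\tau}p''$ itself be the cover. You have $q\sqsubseteq_x t\sqsubseteq_x p'\sqsubseteq_x p''$, using \Cref{lem:tau_succ} on $p'\xRightarrow{\tau}t$. Moreover $q\neq p''$: otherwise $q\sqsubseteq_x t\sqsubseteq_x s\sqsubseteq_x p'\sqsubseteq_x p''=q$, so $s\equiv_x t$ and the transition $s\xrightarrow{\tau}t$ of $L$ would be inert. With this patch, and the extra hypothesis in Case~1, your argument is complete.
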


The output of \Cref{alg:min_quotient} is---generally speaking---not canonical since on \Aref**{line:min_saturate_1}, there may be multiple minimal covers to choose from. To show how this step can be made deterministic, take some LTS $L$, and suppose the states of $L_5$ can be ordered $p_1,\dots,p_n$. A cover for some $p_i$ can be written as a binary array of length $n$, where the $j$th bit equals 1 iff $p_i\xrightarrow{\tau}p_j$ belongs to the cover. Then one can deterministically assign a minimal cover to $p_i$ by choosing the cover that is lexicographically minimal w.r.t. the chosen order. Since $L_5$ is canonical, we only need some algorithm that returns the `same' ordering for isomorphic LTSs. Specifically, if $L'_5\sim L_5$ and the states of $L'_5$ are ordered $q_1,\dots,q_n$, then $f:p_i\mapsto q_i$ is an isomorphism.

Algorithms for canonical forms of graphs work by computing such an ordering, and labelling vertices accordingly. It is shown in~\cite{turner_symmetry_2007} how this can be done for LTSs, assuming a universal ordering on the set of actions. In practice, this is a fair assumption since one could order actions by their unicode representations. Current algorithms for computing canonical forms have an exponential worst case complexity, so one may choose to forego this step.

\section{Conclusion}\label{sec:conc}

We investigated the problem of finding canonical and minimal quotients for coupled and weak simulation equivalence, two closely related notions of equivalence for LTSs.
While both equivalences turn out to admit canonical quotients, we showed that these are not necessarily minimal.
Furthermore, somewhat to our surprise, for both equivalences, the problem of computing a minimal quotient turns out to be NP-complete.

The quotienting function we describe for obtaining a minimal quotient essentially computes a canonical quotient and subsequently reverses some of the operations that ultimately led to an increase in size. Assuming a universal ordering on the set of actions, computing the canonical quotient as an intermediary step allows the quotienting function to compute a canonical minimal quotient, since we may take advantage of canonical labelling algorithms. 

Interestingly, our approach relied little on specific properties of the equivalences in question. For example, the proof of \Cref{thm:can} relies mainly on \Cref{lem:sim_equiv}, which shows that equivalent LTSs that are desaturated and have no covered transitions are weakly bisimilar. Afterwards, we use a previously established fact for canonicity of weakly bisimilar~LTSs. 

Our notion of $\tau$-desaturation depends on the $\tau$-law for the given equivalence. It happens to be that coupled and weak simulation equivalence have similar $\tau$-laws, which allows the results in this paper to generalise well to both equivalences. However, we cannot reasonably expect this to be the case for the remainder of the linear time-branching time spectrum. Therefore, it would be worthwhile to investigate how the $\tau$-laws for different equivalences can be used to describe canonical and minimal quotients, as they do here.




\bibliography{bib.bib}

@article{aceto_axiomatizing_2014,
  series = {Theoretical {Aspects} of {Computing} ({ICTAC} 2011)},
  title = {Axiomatizing weak simulation semantics over {BCCSP}},
  volume = {537},
  issn = {0304-3975},
  url = {https://www.sciencedirect.com/science/article/pii/S0304397513002144},
  doi = {10.1016/j.tcs.2013.03.013},
  abstract = {This paper is devoted to the study of the (in)equational theory of the largest (pre)congruences over the language BCCSP induced by variations on the classic simulation preorder and equivalence that abstract from internal steps in process behaviours. In particular, the article focuses on the (pre)congruences associated with the weak simulation, the weak complete simulation and the weak ready simulation preorders. We present results on the (non)existence of finite (ground-)complete (in)equational axiomatizations for each of these behavioural semantics. The axiomatization of those semantics using conditional equations is also discussed in some detail.},
  journal = {Theoretical Computer Science},
  author = {Aceto, Luca and de Frutos Escrig, David and Gregorio-Rodríguez, Carlos and Ingolfsdottir, Anna},
  month = {jun},
  year = {2014},
  keywords = {Complete simulation semantics, Equational axiomatizations, Equational logic, Non-finitely based algebras, Process algebra, Ready simulation semantics, Simulation semantics},
  pages = {42--71},
  timestamp = {Wed, 17 Feb 2021 00:00:00 +0100},
  biburl = {https://dblp.org/rec/journals/tcs/AcetoFGI14.bib},
  bibsource = {dblp computer science bibliography, https://dblp.org},
  _bib2doi_selected = {dblp:/rec/journals/tcs/AcetoFGI14.bib},
  _bib2doi_confirmed = {true},
}

@article{Coupledsim_Contrasim-AFP,
  author = {Benjamin Bisping and Luisa Montanari},
  title = {Coupled Similarity and Contrasimilarity, and How to Compute Them},
  journal = {Arch. Formal Proofs},
  month = {aug},
  year = {2023},
  issn = {2150-914x},
  timestamp = {Mon, 16 Oct 2023 01:00:00 +0200},
  biburl = {https://dblp.org/rec/journals/afp/BispingM23.bib},
  bibsource = {dblp computer science bibliography, https://dblp.org},
  url = {https://www.isa-afp.org/entries/Coupledsim\_Contrasim.html},
  volume = {2023},
  _bib2doi_selected = {dblp:/rec/journals/afp/BispingM23.bib},
  _bib2doi_confirmed = {true},
  _bib2doi_finished = {true},
}

@inproceedings{bisping_computing_2019,
  address = {Cham},
  title = {Computing {Coupled} {Similarity}},
  isbn = {978-3-030-17462-0},
  doi = {10.1007/978-3-030-17462-0_14},
  abstract = {Coupled similarity is a notion of equivalence for systems with internal actions. It has outstanding applications in contexts where internal choices must transparently be distributed in time or space, for example, in process calculi encodings or in action refinements. No tractable algorithms for the computation of coupled similarity have been proposed up to now. Accordingly, there has not been any tool support.},
  language = {en},
  booktitle = {Tools and {Algorithms} for the {Construction} and {Analysis} of {Systems}},
  publisher = {Springer International Publishing},
  author = {Bisping, Benjamin and Nestmann, Uwe},
  editor = {Vojnar, Tomáš and Zhang, Lijun},
  year = {2019},
  pages = {244--261},
  timestamp = {Tue, 20 Aug 2019 01:00:00 +0200},
  biburl = {https://dblp.org/rec/conf/tacas/BispingN19.bib},
  bibsource = {dblp computer science bibliography, https://dblp.org},
  _bib2doi_selected = {dblp:/rec/conf/tacas/BispingN19.bib},
  _bib2doi_confirmed = {true},
}

@inproceedings{bolognesi_fundamental_1987,
  title = {Fundamental Results for the Verification of Observational Equivalence: {A} Survey},
  shorttitle = {Fundamental {Results} for the {Verification} of {Observational} {Equivalence}},
  author = {Tommaso Bolognesi and Scott A. Smolka},
  month = {jan},
  year = {1987},
  note = {Pages: 179},
  timestamp = {Thu, 03 Jan 2002 00:00:00 +0100},
  biburl = {https://dblp.org/rec/conf/pstv/BolognesiS87.bib},
  bibsource = {dblp computer science bibliography, https://dblp.org},
  booktitle = {Protocol Specification, Testing and Verification VII, Proceedings of the {IFIP} {WG6.1} Seventh International Conference on Protocol Specification, Testing and Verification, Zurich, Switzerland, 5-8 May, 1987},
  publisher = {North-Holland},
  pages = {165--179},
  editor = {Harry Rudin and Colin H. West},
  _bib2doi_selected = {dblp:/rec/conf/pstv/BolognesiS87.bib},
  _bib2doi_confirmed = {true},
  _bib2doi_finished = {true},
}

@inproceedings{simulation,
  address = {Berlin, Heidelberg},
  title = {Simulation {Based} {Minimization}},
  isbn = {978-3-540-45101-3},
  doi = {10.1007/10721959_20},
  abstract = {This work presents a minimization algorithm. The algorithm receives a Kripke structure M and returns the smallest structure that is simulation equivalent to M. The simulation equivalence relation is weaker than bisimulation but stronger than the simulation preorder. It strongly preserves ACTL and LTL (as sub-logics of ACTL*).},
  language = {en},
  booktitle = {Automated {Deduction} - {CADE}-17},
  publisher = {Springer},
  author = {Bustan, Doron and Grumberg, Orna},
  editor = {McAllester, David},
  year = {2000},
  keywords = {Equivalence Class, Kripke Structure, Model Check, Space Complexity, Temporal Logic},
  pages = {255--270},
  timestamp = {Sun, 21 May 2017 01:00:00 +0200},
  biburl = {https://dblp.org/rec/conf/cade/BustanG00.bib},
  bibsource = {dblp computer science bibliography, https://dblp.org},
  _bib2doi_selected = {dblp:/rec/conf/cade/BustanG00.bib},
  _bib2doi_confirmed = {true},
}

@article{eloranta_minimizing_1991,
  title = {Minimizing the number of transitions with respect to observation equivalence},
  volume = {31},
  issn = {1572-9125},
  url = {https://doi.org/10.1007/BF01933173},
  doi = {10.1007/BF01933173},
  abstract = {Labeled transition systems (lts) provide an operational semantics for many specification languages. In order to abstract unrelevant details of lts's, manybehavioural equivalences have been defined; here observation equivalence is considered. We are interested in the following problem:Given a finite lts, which is the minimal observation equivalent lts corresponding to it?},
  language = {en},
  number = {4},
  journal = {BIT Numerical Mathematics},
  author = {Eloranta, Jaana},
  month = {dec},
  year = {1991},
  keywords = {F.1.1, F.3.1, Labeled transition system, Minimization, Observation equivalence, Uniqueness},
  pages = {576--590},
  timestamp = {Tue, 22 Jun 2021 01:00:00 +0200},
  biburl = {https://dblp.org/rec/journals/bit/Eloranta91.bib},
  bibsource = {dblp computer science bibliography, https://dblp.org},
  _bib2doi_selected = {dblp:/rec/journals/bit/Eloranta91.bib},
  _bib2doi_confirmed = {true},
}

@article{eloranta_essential_1997,
  title = {Essential transitions to bisimulation equivalences},
  volume = {179},
  issn = {0304-3975},
  url = {https://www.sciencedirect.com/science/article/pii/S0304397596002812},
  doi = {10.1016/S0304-3975(96)00281-2},
  abstract = {Minimization of a labelled transition system (Its) is useful e.g. while condensing the global state space of a concurrent system compositionally for verification. In this paper new minimality results for both weak and branching bisimilarities are proven. It is well known that an equivalent Its with the minimal number of states can be, in the case of bisimilarities, found by identifying all equivalent states of an Its. However, the question has been partially open whether an equivalent Its with the minimal number of states and transitions can be found. We give a proof that for every weak-image-finite Its there is a unique bisimilar Its that contains the minimal number of states and transitions. We study divergence preserving bisimilarities, since divergence — i.e. the possibility to use system resources infinitely without any output — should not be ignored when liveness properties of systems have to be checked. Our results are shown to be valid also for commonly used divergence ignoring bisimilarities, weak bisimilarity and branching bisimilarity.},
  number = {1},
  journal = {Theoretical Computer Science},
  author = {Eloranta, Jaana and Tienari, Martti and Valmari, Antti},
  month = {jun},
  year = {1997},
  pages = {397--419},
  timestamp = {Wed, 17 Feb 2021 00:00:00 +0100},
  biburl = {https://dblp.org/rec/journals/tcs/ElorantaTV97.bib},
  bibsource = {dblp computer science bibliography, https://dblp.org},
  _bib2doi_selected = {dblp:/rec/journals/tcs/ElorantaTV97.bib},
  _bib2doi_confirmed = {true},
}

@incollection{jansen_om_2020,
  title = {An {O}(m log n) algorithm for branching bisimilarity on labelled transition systems},
  isbn = {978-3-030-45236-0},
  doi = {10.1007/978-3-030-45237-7_1},
  author = {Jansen, David and Groote, Jan Friso and Keiren, Jeroen and Wijs, Anton},
  month = {apr},
  year = {2020},
  pages = {3--20},
  timestamp = {Fri, 14 May 2021 01:00:00 +0200},
  biburl = {https://dblp.org/rec/conf/tacas/JansenGKW20.bib},
  bibsource = {dblp computer science bibliography, https://dblp.org},
  _bib2doi_selected = {dblp:/rec/conf/tacas/JansenGKW20.bib},
  _bib2doi_confirmed = {true},
}

@article{kanellakis_ccs_1990,
  title = {{CCS} expressions, finite state processes, and three problems of equivalence},
  volume = {86},
  issn = {0890-5401},
  url = {https://www.sciencedirect.com/science/article/pii/089054019090025D},
  doi = {10.1016/0890-5401(90)90025-D},
  number = {1},
  urldate = {2026-04-08},
  journal = {Information and Computation},
  author = {Kanellakis, Paris C. and Smolka, Scott A.},
  month = {may},
  year = {1990},
  pages = {43--68},
  timestamp = {Fri, 12 Feb 2021 00:00:00 +0100},
  biburl = {https://dblp.org/rec/journals/iandc/KanellakisS90.bib},
  bibsource = {dblp computer science bibliography, https://dblp.org},
  _bib2doi_selected = {dblp:/rec/journals/iandc/KanellakisS90.bib},
  _bib2doi_confirmed = {true},
}

@inproceedings{cs-axiom,
  author = {Joachim Parrow and Peter Sj{\"{o}}din},
  editor = {Patrice Enjalbert and Ernst W. Mayr and Klaus W. Wagner},
  title = {The Complete Axiomatization of Cs-congruence},
  booktitle = {{STACS} 94, 11th Annual Symposium on Theoretical Aspects of Computer Science, Caen, France, February 24-26, 1994, Proceedings},
  year = {1994},
  publisher = {Springer},
  address = {Berlin, Heidelberg},
  pages = {557--568},
  isbn = {978-3-540-48332-8},
  timestamp = {Sat, 20 May 2017 01:00:00 +0200},
  biburl = {https://dblp.org/rec/conf/stacs/ParrowS94.bib},
  bibsource = {dblp computer science bibliography, https://dblp.org},
  doi = {10.1007/3-540-57785-8_171},
  volume = {775},
  url = {https://doi.org/10.1007/3-540-57785-8\_171},
  series = {Lecture Notes in Computer Science},
  _bib2doi_selected = {dblp:/rec/conf/stacs/ParrowS94.bib},
  _bib2doi_confirmed = {true},
  _bib2doi_finished = {true},
}

@article{reniers_results_2014,
  title = {Results on {Embeddings} {Between} {State}-{Based} and {Event}-{Based} {Systems}},
  volume = {57},
  issn = {0010-4620},
  url = {https://doi.org/10.1093/comjnl/bxs156},
  doi = {10.1093/comjnl/bxs156},
  abstract = {Kripke Structures (KSs) and Labelled Transition Systems (LTSs) are the two most prominent semantic models used in concurrency theory. Both models are commonly believed to be equi-expressive. One can find many ad hoc embeddings of one of these models into the other. We build upon the seminal work of De Nicola and Vaandrager that firmly established the correspondence between stuttering equivalence in KSs and divergence-sensitive branching bisimulation in LTSs. We show that their embeddings can also be used for a range of other equivalences of interest, such as strong bisimilarity, simulation equivalence and trace equivalence. Furthermore, we extend the results by De Nicola and Vaandrager by showing that there are additional translations that allow one to use minimization techniques in one semantic domain to obtain minimal representatives in the other semantic domain for these equivalences.},
  number = {1},
  urldate = {2026-04-08},
  journal = {The Computer Journal},
  author = {Reniers, Michel A. and Schoren, Rob and Willemse, Tim A.C.},
  month = {jan},
  year = {2014},
  pages = {73--92},
  timestamp = {Wed, 25 Sep 2019 01:00:00 +0200},
  biburl = {https://dblp.org/rec/journals/cj/ReniersSW14.bib},
  bibsource = {dblp computer science bibliography, https://dblp.org},
  _bib2doi_selected = {dblp:/rec/journals/cj/ReniersSW14.bib},
  _bib2doi_confirmed = {true},
}

@inproceedings{turner_symmetry_2007,
  author = {Turner, Edd and Leuschel, Michael and Spermann, Corinna and Butler, Michael},
  booktitle = {First Joint IEEE/IFIP Symposium on Theoretical Aspects of Software Engineering (TASE '07)},
  title = {Symmetry Reduced Model Checking for B},
  year = {2007},
  pages = {25-34},
  keywords = {State-space methods;Explosions;Automatic control;Control systems;Computer science;Packaging;Set theory;Automatic logic units;Electrical equipment industry;Industrial control},
  doi = {10.1109/TASE.2007.50},
  timestamp = {Fri, 24 Mar 2023 00:00:00 +0100},
  biburl = {https://dblp.org/rec/conf/tase/TurnerLSB07.bib},
  bibsource = {dblp computer science bibliography, https://dblp.org},
  _bib2doi_selected = {dblp:/rec/conf/tase/TurnerLSB07.bib},
  _bib2doi_confirmed = {true},
}

@inproceedings{van_glabbeek_linear_1993,
  address = {Berlin, Heidelberg},
  title = {The linear time — {Branching} time spectrum {II}},
  isbn = {978-3-540-47968-0},
  doi = {10.1007/3-540-57208-2_6},
  language = {en},
  booktitle = {{CONCUR}'93},
  publisher = {Springer},
  author = {van Glabbeek, Rob J.},
  editor = {Best, Eike},
  year = {1993},
  keywords = {Couple Simulation, Explicit Divergence, Global Testing, Process Algebra, Visible Action System},
  pages = {66--81},
  timestamp = {Sat, 20 May 2017 01:00:00 +0200},
  biburl = {https://dblp.org/rec/conf/concur/Glabbeek93.bib},
  bibsource = {dblp computer science bibliography, https://dblp.org},
  _bib2doi_selected = {dblp:/rec/conf/concur/Glabbeek93.bib},
  _bib2doi_confirmed = {true},
}

@inproceedings{DBLP:conf/icalp/KuceraM99,
  author       = {Anton{\'{\i}}n Kucera and
                  Richard Mayr},
  editor       = {Jir{\'{\i}} Wiedermann and
                  Peter van Emde Boas and
                  Mogens Nielsen},
  title        = {Simulation Preorder on Simple Process Algebras},
  booktitle    = {Automata, Languages and Programming, 26th International Colloquium,
                  ICALP'99, Prague, Czech Republic, July 11-15, 1999, Proceedings},
  series       = {Lecture Notes in Computer Science},
  pages        = {503--512},
  publisher    = {Springer},
  year         = {1999},
  url          = {https://doi.org/10.1007/3-540-48523-6\_47},
  doi          = {10.1007/3-540-48523-6\_47},
  bibsource    = {dblp computer science bibliography, https://dblp.org}
}

@INPROCEEDINGS{coupledsimoriginal,
  author={Ulidowski, I.},
  booktitle={[1992] Proceedings of the Seventh Annual IEEE Symposium on Logic in Computer Science}, 
  title={Equivalences on observable processes}, 
  year={1992},
  volume={},
  number={},
  pages={148-159},
  doi={10.1109/LICS.1992.185529}}

@inproceedings{park_concurrency_1981,
	address = {Berlin, Heidelberg},
	title = {Concurrency and automata on infinite sequences},
	isbn = {978-3-540-38561-5},
	doi = {10.1007/BFb0017309},
	language = {en},
	booktitle = {Theoretical {Computer} {Science}},
	publisher = {Springer},
	author = {Park, David},
	editor = {Deussen, Peter},
	year = {1981},
	pages = {167--183},
}


\appendix

\section{Proofs}\label[appendix]{sec:proofs}

\tausucc*
\begin{proof}
    We show that
    \[
    \RR\coloneqq\setof{(r,r)}{r\in S}\cup\set{(q,p)}
    \]
    is an $x$-simulation.

    \proofsubparagraph{Simulation.} Clearly a pair of the form $(r,r)\in\RR$ satisfies the simulation condition. For the pair $(q,p)\in\RR$, suppose $q\xrightarrow{\alpha}q'$. Then $p\xRightarrow{\tau}q\xrightarrow{\alpha}q'$, and thus $p\xRightarrow{\alpha}q'$. The simulation condition is satisfied since $(q',q')\in\RR$.

    \proofsubparagraph{Coupling.} We show that $x=CS$ implies $\RR$ is a coupled simulation. A pair of the form $(r,r)\in\RR$ immediately satisfies the coupling condition. For the pair $(q,p)\in\RR$, it holds that $p\xRightarrow{\tau}q$, and thus the coupling condition is satisfied since $(q,q)\in\RR$.
\end{proof}

\simact*
\begin{proof}
    Since $p\xRightarrow{\alpha}p'$, there exists states $r,r'$ such that $p\xRightarrow{\tau}r\xrightarrow{\alpha}r'\xRightarrow{\tau}p'$. By \Cref{lem:tau_succ}, it holds that $r\sqsubseteq_x p\sqsubseteq_x q$. By the simulation condition, there is some state $q'$ such that $q\xRightarrow{\alpha}q'$ and $r'\sqsubseteq_x q'$. By \Cref{lem:tau_succ} it holds that $p'\sqsubseteq_x r'\sqsubseteq_x q'$, as required.
\end{proof}

\lmax*
\begin{proof}
    Let $q\equiv_x p$ and $p'\in\max(G_\alpha(p))$. We distinguish two cases.
    \begin{enumerate}
        \item If $\alpha=\tau$, then by \Cref{lem:tau_succ}, it holds that $p'\sqsubseteq_x p$. Since $p'\in\max(G_\alpha(p))$, it holds that $p'\equiv_x p$. Therefore, since $q\xRightarrow{\tau}q$ and $q\equiv_x p$, the desired result holds.
        \item If $\alpha\neq\tau$, then by \Cref{lem:sim_act}, there is some $q'\sqsupseteq_x p'$ such that $q\xRightarrow{\alpha}q'$. Using \Cref{lem:sim_act} again, there is some $p''\sqsupseteq_x q'$ such that $p\xRightarrow{\alpha}p''$. Since $p'$ is maximal and $p'\sqsubseteq_x q'\sqsubseteq_x p''$, it holds that $p''\sqsubseteq_x p'$. Therefore, $q'\sqsubseteq_x p''\sqsubseteq_x p'$, showing that $p'\equiv_x q'$ as desired.
    \end{enumerate}
\end{proof}

\lmin*
\begin{proof}
    Let $q\sqsupseteq_x p$ and $p'\in\Min(G_\tau(p))$. By \Cref{lem:tau_succ}, it holds that $p'\sqsubseteq_x p\sqsubseteq_x q$. By the coupling condition, there exists some $q'\sqsubseteq_x p'$ such that $q\xRightarrow{\tau}q'$. By the coupling condition again, there exists some $p''\sqsubseteq_x q'$ such that $p'\xRightarrow{\tau}p''$. Then $p''\in G_\tau(p)$ and $p''\sqsubseteq_x p'$. Since $p'$ is minimal, it holds that $p'\sqsubseteq_x p''$. Therefore, $p'\sqsubseteq_x p''\sqsubseteq_x q'$, showing that $p'\equiv_x q'$ as desired.
\end{proof}

\quotient*
\begin{proof}
    We show that:
    \[
        \RR\coloneqq\setof{(p,[q]_x)}{p,q\in S\land p\sqsubseteq_x q}\cup\setof{([p]_x,q)}{p,q\in S\land p\sqsubseteq_x q}
    \]
    is an $x$-simulation.

    \proofsubparagraph{Simulation.} We show that $\RR$ is a simulation. First, take $(p,[q]_x)\in\RR$ and suppose $p\xrightarrow{\alpha}p'$. If $\alpha=\tau$, then by \Cref{lem:tau_succ}, it holds that $p'\sqsubseteq_x q$. Then $(p',[q]_x)\in\RR$ and the simulation condition is satisfied for this pair since $[q]_x\xRightarrow{\alpha}[q]_x$. If instead $\alpha\neq\tau$, then since $p\sqsubseteq q$, there is some $q'\in S$ such that $q\xRightarrow{\alpha}q'$ and $p'\sqsubseteq_x q'$. Suppose w.l.o.g. that $q'\in\Max_x(G_\alpha(q))$. Let $r\equiv_x q$ be arbitrary. By \Cref{lem:max}, there is some $r'\equiv_x q'$ such that $r\xRightarrow{\alpha}r'$. Since $r$ is arbitrary and $\alpha\neq\tau$, it holds that $[q]_x\xrightarrow{\alpha}[q']_x$. Thus, the simulation condition is satisfied for the pair $(p,[q]_x)$ since $(p',[q']_x)\in\RR$.
    
    We show that $\RR$ is a simulation. First, take $(p,[q]_x)\in\RR$ and suppose $p\xrightarrow{\alpha}p'$. Then since $p\sqsubseteq q$, there is some $q'\in S$ such that $q\xRightarrow{\alpha}q'$ and $p'\sqsubseteq_x q'$. Therefore, $(p',[q']_x)\in\RR$. Suppose w.l.o.g. that $q'\in\Max(G_\alpha(q))$. It suffices to show that $[q]_x\xRightarrow{\alpha}[q']_x$. Let $r\equiv_x q$ be arbitrary. By \Cref{lem:max}, there is some $r'\equiv_x q'$ such that $r\xRightarrow{\alpha}r'$. We distinguish two cases:
    \begin{enumerate}
        \item If $\alpha=\tau$ and $r\equiv_x r'$, then $[q]_x=[q']_x$, and thus $[q]_x\xRightarrow{\alpha}[q]_x$ holds by definition.
        \item Otherwise, since $r$ is arbitrary and $r'\equiv_x q'$ it holds that $[q]_x\xrightarrow{\alpha}[q']_x$.         
    \end{enumerate}
    Thus, the simulation condition is satisfied for this pair.

    Now, take $([p]_x,q)\in\RR$. Suppose $[p]_x\xrightarrow{\alpha}[p']_x$. Then there is some $r'\equiv_x p'$ such that $p\xRightarrow{\alpha}r'$. Since $p\sqsubseteq q$, it holds by \Cref{lem:sim_act} that $q\xRightarrow{\alpha}q'$ for some $q'$ such that $r'\sqsubseteq_x q'$. Thus, the simulation condition is satisfied for the pair $([p]_x,q)$ since $([r']_x,q')\in\RR$ and $[r']_x=[p']_x$.

    \proofsubparagraph{Coupling.} We now show that $\RR$ is a coupled simulation if $x=CS$. First, take $(p,[q]_x)\in\RR$. Since $p\sqsubseteq_x q$, it holds that $q\xRightarrow{\tau}q'$ for some $q'\sqsubseteq_x p$. Therefore, $([q']_x,p)\in\RR$. Suppose w.l.o.g. that $q'\in\Min(G_\tau(q))$. It suffices to show that $[q]_x\xRightarrow{\tau}[q']_x$. Let $r\equiv_x q$ be arbitrary. By \Cref{lem:min}, there is some $r'\equiv_x q'$ such that $r\xRightarrow{\tau}r'$. We distinguish two cases:
    \begin{enumerate}
        \item If $r\equiv_x r'$, then $[q]_x=[q']_x$, and thus $[q]_x\xRightarrow{\tau}[q]_x$ holds by definition.
        \item Otherwise, since $r$ is arbitrary and $r'\equiv_x q'$ it holds that $[q]_x\xrightarrow{\tau}[q']_x$. 
    \end{enumerate}
    Thus, the coupling condition is satisfied for this pair.

    Now, take $([p]_x,q)\in\RR$. Since $p\sqsubseteq_x q$, it holds that $q\xRightarrow{\tau}q'$ for some $q'\sqsubseteq_x p$. Then the coupling condition is satisfied for this pair since $(q',[p]_x)\in\RR$.

    \proofsubparagraph{Quotienting.} We have shown $\RR$ is an $x$-simulation. It holds that $p\equiv_x [p]_x$ for all $p\in S$ since $p\RR [p]_x$ and $[p]_x\RR p$ for all $p\in S$. Therefore, the function mapping $L$ to $L_{/\equiv}$ is a quotienting function since $[p]_x\equiv_x[q]_x$ implies $p\equiv_x q$, which implies $[p]_x=[q]_x$.
\end{proof}

\pcovered*
\begin{proof}
    The proof from right to left is clear since every transition of $L\setminus\set{(p,\alpha,q)}$ is a transition of $L$. For the proof from left to right, suppose $s\xRightarrow{\tau}p'\xrightarrow{\alpha}q'\xRightarrow{\tau}t$ in $L$. Then, there exist states $s_0,\dots,s_m$ and $t_0,\dots,t_n$ such that:
    \begin{itemize}
        \item $s_0\xrightarrow{\tau}\dots\xrightarrow{\tau}s_m$ and $(s_0,s_m)=(s,p')$; and
        \item $t_0\xrightarrow{\tau}\dots\xrightarrow{\tau}t_n$, and $(t_0,t_n)=(q',t)$.
    \end{itemize}
    We show that if any of these transitions equals the $p\xrightarrow{\alpha}q$ transition, then we reach a contradiction.
    
    Suppose for the sake of contradiction that $s_i\xrightarrow{\tau}s_{i+1}=p\xrightarrow{\alpha}q$ for some $0\leq i < m$. Then $\alpha=\tau$, and by \Cref{lem:tau_succ}, it holds that $q'\sqsubseteq p'\sqsubseteq q$. However, since $q\sqsubseteq q'$, it holds that $p'\equiv q'$. Therefore, the $p'\xrightarrow{\alpha}q'$ transition is inert, contradicting the assumption that $L$ has no inert transitions.

    Suppose for the sake of contradiction $t_i\xrightarrow{\tau}t_{i+1}=p\xrightarrow{\alpha}q$ for some $0\leq i < n$. Then $\alpha=\tau$ and $q'\xRightarrow{\tau}p$. Since, $p\xRightarrow{\tau}p'\xrightarrow{\tau}q'$ by definition of a covered transition, it holds that $p'$ and $q'$ lie on a $\tau$-cycle, and hence $p'\equiv_x q'$ by \Cref{lem:tau_succ}. Therefore, the $p'\xrightarrow{\alpha}q'$ transition is inert, contradicting the assumption that $L$ has no inert transitions.
    
    So, none of these transitions equals the $p\xrightarrow{\alpha}q$ transition, and since $(p',q')\neq(p,q)$, neither does the $p'\xrightarrow{\alpha}q'$ transition. Therefore, these transitions occur in $L\setminus\set{(p,\alpha,q)}$, and hence $s\xRightarrow{\tau}p'\xrightarrow{\alpha}q'\xRightarrow{\tau}t$ in $L\setminus\set{(p,\alpha,q)}$.
\end{proof}

\ccovered*
\begin{proof}
    Let $\bar{L}\coloneqq L\setminus\set{(p,\alpha,q)}$ and $p'\xrightarrow{\alpha}q'$ be the transition of $L$ covering $p\xrightarrow{\alpha}q$. Using $\bar{s}$ to represent the state of $\bar{L}$ corresponding to the state $s$ in $L$, consider:
    \[\RR\coloneqq\setof{(s,\bar{t})}{s,t\in S\land s\sqsubseteq_x t}\cup\setof{(\bar{s},t)}{s,t\in S\land s\sqsubseteq_x t}\quad.\]
    Since $\RR$ contains both $(\iota,\bar{\iota})$ and $(\bar{\iota},\iota)$, it suffices to show that $\RR$ is an $x$-simulation.

    \proofsubparagraph{Simulation.} Take $(s,\bar{t})\in\RR$ and suppose $s\xrightarrow{\alpha}s'$. We distinguish two cases:
    \begin{itemize}
        \item If $\alpha=\tau$, then by \Cref{lem:tau_succ}, it holds that $s'\sqsubseteq_x s$. Since $s\sqsubseteq_x t$, it holds that $(s',\bar{t})\in\RR$, and therefore the simulation condition is satisfied since $\bar{t}\xRightarrow{\tau}\bar{t}$.
        \item If $\alpha\neq\tau$, then since $s\sqsubseteq t$, it holds that $t\xRightarrow{\alpha}t'$ for some $t'\in S$ such that $s'\sqsubseteq_x t'$. Since $\alpha\neq\tau$, there exist states $r,r'\in S$ such that $t\xRightarrow{\tau}r\xrightarrow{\alpha}r'\xRightarrow{\tau}t'$. If $(r,r')\neq(p,q)$, then $\bar{t}\xRightarrow{\alpha}\bar{t}'$, and thus the simulation condition is satisfied since $(s',\bar{t}')\in\RR$. Otherwise, it holds that $t\xRightarrow{\tau}r\xRightarrow{\tau}p'$, and thus by \Cref{prop:covered}, it holds that $\bar{t}\xRightarrow{\alpha}\bar{q}'$. By \Cref{lem:tau_succ}, it holds that $t'\sqsubseteq_x r'$. Since $r'=q$ and $q\sqsubseteq_x q'$, we have $s'\sqsubseteq_x q'$. Hence, $(s',\bar{q}')\in\RR$, and the simulation condition is satisfied.
    \end{itemize}
    Thus, the simulation condition is satisfied for this pair. It is trivially satisfied for pairs of the form $(\bar{s},t)\in\RR$, since each transition of $\bar{L}$ is a transition of $L$. Hence $\RR$ is a simulation.

    \proofsubparagraph{Coupling.} We now show that $\RR$ is a coupled simulation if $x=CS$. Take $(s,\bar{t})\in\RR$. Since $s\sqsubseteq_x t$, it holds that $t\xRightarrow{\tau}t'$ for some $t'\in S$ such that $t'\sqsubseteq s$. Therefore, there are states $t_0,\dots,t_n$ such that $t_0\xrightarrow{\tau}\dots\xrightarrow{\tau}t_n$ and $(t_0,t_n)=(t,t')$. We distinguish two cases:
    \begin{itemize}
        \item If there is no $0\leq i<n$ such that $t_i\xrightarrow{\tau}t_{i+1}=p\xrightarrow{\alpha}q$, then $\bar{t}\xRightarrow{\tau}\bar{t}'$, and the coupling condition is satisfied since $(\bar{t}',s)\in\RR$.
        \item If $t_i\xrightarrow{\tau}t_{i+1}=p\xrightarrow{\alpha}q$ for some $0\leq i<n$, then $t\xRightarrow{\tau}t_i\xRightarrow{\tau}p'\xrightarrow{\tau}q'$. By \Cref{lem:tau_succ}, it holds that $t'\sqsubseteq_x q$, and therefore $s\sqsubseteq_x t'\sqsubseteq_x q\sqsubseteq_x q'$. Therefore, there $q'\xRightarrow{\tau}q''$ for some $q''\sqsubseteq s'$. By \Cref{prop:covered}, it holds that $\bar{t}\xRightarrow{\tau}\bar{q}''$. Thus, the coupling condition is satisfied since $(\bar{q}'',s)\in\RR$.
    \end{itemize}
    Thus, the coupling condition is satisfied for this pair. It is trivially satisfied for pairs of the form $(\bar{s},t)\in\RR$, since each transition of $\bar{L}$ is a transition of $L$. Hence $\RR$ is a coupled simulation.
\end{proof}

\coveredpersist*
\begin{proof}
    Let $p\xrightarrow{\alpha}q$ and $s\xrightarrow{\beta}t$ be covered by $p'\xrightarrow{\alpha}q'$ and $s'\xrightarrow{\beta}t'$, respectively. As usual, let $\bar{s}$ denote the state of $\bar{L}$ corresponding to $s$ in $L$, and likewise for other states.

    Since $s\xrightarrow{\beta}t$ is covered by $s'\xrightarrow{\beta}t'$, there are states $s_0,\dots,s_n$ such that $s_0\xrightarrow{\tau}\dots\xrightarrow{\tau}s_n$ and $(s_0,s_n)=(s,s')$. We distinguish three cases.
    \begin{enumerate}
        \item If $\bar{s}_0\xrightarrow{\tau}\dots\xrightarrow{\tau}\bar{s}_n\xrightarrow{\beta}\bar{t}'$, then $\bar{s}\xrightarrow{\beta}\bar{t}$ is covered by $\bar{s}'\xrightarrow{\beta}\bar{t}'$.
        \item If there exists $0\leq i < n$ such that $s_i\xrightarrow{\tau}s_{i+1}=p\xrightarrow{\alpha}q$, then $\alpha=\tau$. Moreover, it holds that $s\xRightarrow{\tau}p\xRightarrow{\tau}p'$. Since $s_{i+1}=q$ and $s_n=s'$, it holds that $s'\sqsubseteq_x q$, and hence $s'\sqsubseteq_x q'$. Since $\bar{s}'\xrightarrow{\beta}\bar{t}'$. By \Cref{prop:covered}, it holds that $\bar{s}\xRightarrow{\tau}\bar{p}'\xrightarrow{\tau}\bar{q}'$. We distinguish two cases.
        \begin{itemize}
            \item If also $\beta=\tau$, then by \Cref{lem:tau_succ} and because there are no inert transitions, it holds that $t'\sqsubset_x s'$. Then $t\sqsubset_x q'$ since $s'\sqsubseteq_x q'$. Since $t\sqsubset_x q'$, it holds that $(s,t)\neq(p',q')$, and therefore $\bar{s}\xrightarrow{\beta}\bar{t}$ is covered by $\bar{p}'\xrightarrow{\alpha}\bar{q}'$.
            \item If $\beta\neq\tau$, then since $s'\xrightarrow{\beta}t'$ and $s'\sqsubseteq_x q'$, it holds that $q'\xRightarrow{\beta}r'$ for some $t'\sqsubseteq r'$, and hence $t\sqsubseteq r'$. By \Cref{lem:tau_succ} and since $\beta\neq\tau$, we may assume w.l.o.g. that $q'\xRightarrow{\tau}r\xrightarrow{\beta}r'$ for some state $r$. By \Cref{prop:covered} and since $\beta\neq\alpha$, it holds that $\bar{s}\xRightarrow{\tau}\bar{r}\xrightarrow{\beta}\bar{r}'$. Moreover, $s\neq r$ as otherwise $p'\xrightarrow{\alpha}q'$ is inert, but $L$ has no inert transitions. Therefore, $\bar{s}\xrightarrow{\beta}\bar{t}$ is covered by $\bar{r}\xrightarrow{\beta}\bar{r}'$.
        \end{itemize}
        In either case, $\bar{s}\xrightarrow{\beta}\bar{t}$ is covered.
        \item If $s'\xrightarrow{\beta}t'=p\xrightarrow{\alpha}q$, then $s\xRightarrow{\tau}p\xRightarrow{\tau}p'\xrightarrow{\alpha}q'$. Since $q\sqsubseteq_x q'$ and $t'=q$, it holds that $t'\sqsubseteq_x q'$. By \Cref{prop:covered}, we have $\bar{s}\xRightarrow{\tau}\bar{p}'\xrightarrow{\alpha}\bar{q}'$. Suppose for the sake of contradiction that $s=p'$. Then since $s\xRightarrow{\tau}p\xRightarrow{\tau}p'$ and there are no inert transitions, it holds that $p=p'$, which is a contradiction since $(p',q')\neq(p,q)$. Therefore, $\bar{s}\xrightarrow{\beta}\bar{t}$ is covered by $\bar{p}'\xrightarrow{\alpha}\bar{q}'$.
    \end{enumerate}
    In all three cases, $\bar{s}\xrightarrow{\beta}\bar{t}$ is a covered transition, as required.
\end{proof}

\coveredquotient*
\begin{proof}
    Note that $|\bar{S}|\leq |S|$. If $|\bar{S}|<|S|$, then $|\bar{L}|\leq |L|$ as required. If $|\bar{S}|=|S|$, observe that $\bar{S}=S_{/\equiv}$ since only transitions are removed to obtain $\bar{L}$ from $L_{/\equiv}$. Then, since $|S|=|S_{/\equiv}|$, it holds that no two states are equivalent in $L$. Therefore, by definition of $\bar{\rightarrow}$ and $\rightarrow_{/\equiv}$, it holds that each transition $s\xrightarrow{\alpha}t$ of $\bar{\rightarrow}$ is of the form $[p]_x\xrightarrow{\alpha}[q]_x$ for some $p,q\in S$, and for such a transition it holds that $p\xRightarrow{\alpha}q$ in $L$ and $\alpha=\tau$ implies $p\not\equiv_x q$.

    Consider a transition $[p]_x\xrightarrow{\alpha}[q]_x$ in $\bar{L}$. We show that $p\xrightarrow{\alpha}q$ in $L$. As shown above, $p\xRightarrow{\alpha}q$ in $L$ and $\alpha=\tau$ implies $p\not\equiv_x q$. Therefore, there exist states $p',q'\in S$ such that $p\xRightarrow{\tau}p'\xrightarrow{\alpha}q'\xRightarrow{\tau}q$. Suppose for the sake of contradiction that $(p',q')\neq(p,q)$. We distinguish two cases.
    \begin{itemize}
        \item If $p'\neq p$, then since no two states are equivalent, it holds that $[p]_x\xrightarrow{\tau}[p']_x\xrightarrow{\alpha}[q]_x$ in $L_{/\equiv}$. Moreover, $[p]_x\neq[p']_x$. Therefore, $[p]_x\xrightarrow{\alpha}[q]_x$ is covered by $[p']_x\xrightarrow{\alpha}[q]_x$.
        \item If $q'\neq q$, then by \Cref{lem:tau_succ} it holds that $q\sqsubseteq q'$. Moreover, $[p]_x\xrightarrow{\alpha}[q']_x$ is a transition of $L_{/\equiv}$, and $[q']_x\neq[q]_x$ since no two states are equivalent. Since $q\equiv_x[q]_x$ and $q'\equiv_x[q']_x$, it holds that $[q]_x\sqsubseteq_x[q']_x$. Therefore, $[p]_x\xrightarrow{\alpha}[q]_x$ is covered by $[p]_x\xrightarrow{\alpha}[q']_x$.
    \end{itemize}
    In either case, $[p]_x\xrightarrow{\alpha}[q]_x$ is a covered transition in $L_{/\equiv}$. However, by \Cref{lem:covered_persist}, this means $[p]_x\xrightarrow{\alpha}[q]_x$ is \emph{not} a transition of $\bar{L}$, contradicting the original assumption that $[p]_x\xrightarrow{\alpha}[q]_x$ in $\bar{L}$. Therefore, $p\xrightarrow{\alpha}q$, and thus each transition of $\bar{L}$ maps injectively to a transition of $L$, and hence $|\bar{L}|\leq |L|$.
\end{proof}

\ptau*
\begin{proof}
    Let $\bar{L}\coloneqq L\setminus\set{(p,\alpha,q)}$ and $p'\xrightarrow{\alpha}q'$ be the transition of $L$ covering $p\xrightarrow{\alpha}q$. Using $\bar{s}$ to represent the state of $\bar{L}$ corresponding to the state $s$ in $L$, consider:
    \[\RR\coloneqq\setof{(s,\bar{t})}{s,t\in S\land s\sqsubseteq_x t}\cup\setof{(\bar{s},t)}{s,t\in S\land s\sqsubseteq_x t}\quad.\]
    Since $\RR$ contains both $(\iota,\bar{\iota})$ and $(\bar{\iota},\iota)$, it suffices to show that $\RR$ is an $x$-simulation.

    \proofsubparagraph{Simulation.} Take $(s,\bar{t})\in\RR$ and suppose $s\xrightarrow{\alpha}s'$. We distinguish two cases. 
    \begin{enumerate}
        \item Suppose $\alpha=\tau$, then by \Cref{lem:tau_succ}, it holds that $s'\sqsubseteq_x s$, and therefore $(s',\bar{t})\in\RR$. Since $\bar{t}\xRightarrow{\alpha}\bar{t}$, the simulation condition is satisfied for this pair.
        \item Suppose instead that $\alpha\neq\tau$. Since $s\sqsubseteq_xt$, it holds that $t\xRightarrow{\alpha}t'$ for some $t'\in S$ such that $s'\sqsubseteq t'$. Therefore, $(s',\bar{t}')\in\RR$, and it suffices to show that $\bar{t}\xRightarrow{\alpha}\bar{t}'$. Since $\alpha\neq\tau$, by \Cref{lem:tau_succ} we may assume w.l.o.g. that $t\xRightarrow{\tau}t''\xrightarrow{\alpha}t'$. Therefore, there exist states $t_0,\dots,t_n$ such that $t_0\xrightarrow{\tau}\dots\xrightarrow{\tau}t_n$ and $(t_0,t_n)=(t,t'')$. We distinguish two cases.
        \begin{itemize}
            \item If there does not exist $0\leq i < n$ such that $(t_i,t_{i+1})=(p,q)$, then $\bar{t}_0\xrightarrow{\tau}\dots\xrightarrow{\tau}\bar{t}_n$. Moreover, since $\alpha\neq\tau$, we have $\bar{t}_n\xrightarrow{\alpha}\bar{t}'$, and therefore $\bar{t}\xRightarrow{\alpha}\bar{t}'$, as required.
            \item If $(t_i,t_{i+1})=(p,q)$ for some $0\leq i < n$, then note that due to the absence of inert transitions, $t_i=t_j$ implies $i=j$, so we do not revisit $t_i$ on this path. Therefore, we either have $\bar{t}_0\xrightarrow{\tau}\dots\xrightarrow{\tau}\bar{t}_i\xrightarrow{\alpha}\bar{t}'$ if $i+1=n$, or $\bar{t}_0\xrightarrow{\tau}\dots\xrightarrow{\tau}\bar{t}_i\xrightarrow{\tau}\bar{t}_{i+2}\xrightarrow{\tau}\dots\xrightarrow{\tau}\bar{t}_n\xrightarrow{\alpha}\bar{t}'$ if $i+1<n$. In any case, we have $\bar{t}\xRightarrow{\alpha}\bar{t}'$, as required.
        \end{itemize}
    \end{enumerate}
    Thus, the simulation condition is satisfied for this pair.
    
    Now, take $(\bar{s},t)\in\RR$ and suppose $\bar{s}\xrightarrow{\alpha}\bar{s}'$. We distinguish two cases.
    \begin{enumerate}
        \item If $s\xrightarrow{\alpha}s'$, then since $s\sqsubseteq_x t$, it holds that $t\xRightarrow{\alpha}t'$ for some $t'\in S$ such that $s'\sqsubseteq_x t'$. Since $(\bar{s}',\bar{t}')\in\RR$, the simulation condition is satisfied for this pair.
        \item Otherwise, it holds that $s=p$ and $s'$ is such that $q\xrightarrow{\alpha}s'$. Therefore, $s\xrightarrow{\tau}q\xrightarrow{\alpha}s'$. By \Cref{lem:sim_act}, it holds that $t\xRightarrow{\alpha}t'$ for some $t'\in S$ such that $s'\sqsubseteq_x t'$. Since $(\bar{s}',\bar{t}')\in\RR$, the simulation condition is satisfied for this pair.
    \end{enumerate}
    Thus, the simulation condition is satisfied for this pair, and $\RR$ is a simulation.

    \proofsubparagraph{Coupling.} We now show that $\RR$ is a coupled simulation if $x=CS$. Let $r\in S$ be such that $q\xrightarrow{\tau}r$, which exists by assumption since $x=CS$. Take $(s,\bar{t})\in\RR$. Since $s\sqsubseteq_x t$, it holds that $t\xRightarrow{\tau}t'$ for some $t'\in S$ such that $t'\sqsubseteq_x s$. Therefore, there exist states $t_0,\dots,t_n$ such that $t_0\xrightarrow{\tau}\dots\xrightarrow{\tau}t_n$ and $(t_0,t_n)=(t,t')$. We distinguish two cases.
    \begin{enumerate}
        \item If there does not exist $0\leq i < n$ such that $(t_i,t_{i+1})=(p,q)$, then $\bar{t}\xRightarrow{\tau}\bar{t}'$, and therefore the coupling condition is satisfied since $(\bar{t}',s)\in\RR$.
        \item If $(t_i,t_{i+1})=(p,q)$ for some $0\leq i < n$, then note that due to the absence of inert transitions, $t_i=t_j$ implies $i=j$, so we do not revisit $t_i$ on this path. We distinguish two cases.
        \begin{itemize}
            \item If $i+1=n$, then $\bar{t}_0\xrightarrow{\tau}\dots\xrightarrow{\tau}\bar{t}_i\xrightarrow{\tau}\bar{r}$. Therefore, $\bar{t}\xRightarrow{\tau}\bar{r}$. By \Cref{lem:tau_succ}, it holds that $r\sqsubseteq_x q$. Since $q=t'$ and $t'\sqsubseteq_x s$, it holds that $(\bar{r},s)\in\RR$, and thus the coupling condition is satisfied.
            \item If $i+1<n$, then $\bar{t}_0\xrightarrow{\tau}\dots\xrightarrow{\tau}\bar{t}_i\xrightarrow{\tau}\bar{t}_{i+2}\xrightarrow{\tau}\dots\xrightarrow{\tau}\bar{t}_n$. Therefore, $\bar{t}\xRightarrow{\tau}\bar{t}'$, and the coupling condition is satisfied since $(\bar{t}',s)\in\RR$.
        \end{itemize}
    \end{enumerate}
    Thus, the coupling condition is satisfied for this pair. It is immediately satisfied for pairs of the form $(\bar{s},t)\in\RR$, and hence $\RR$ is a coupled simulation.
\end{proof}

\termination*
\begin{proof}
    To prove termination, we show that the $\tau$-adjacency matrix of $L$ decreases w.r.t. some total order $<_{\mathcal{L}}$ after each $\tau$-desaturation step, provided that the matrix is indexed in a specific order, $<_S$, induced by $\sqsubseteq_x$.
    
    To construct $<_S$, we first obtain a partial order, $\preccurlyeq_x$ on the equivalence classes of $S$ by defining $[p]_x\preccurlyeq_x[q]_x$ iff $p\sqsubseteq_x q$. Then, let $\leq_x$ be some linear extension of $\preccurlyeq_x$. Lastly, let $<_S$ be some linear ordering such that $p<_S q$ only if $[p]_x\leq_x[q]_x$ and $p\neq q$. In other words, replace equivalence classes in $\leq_x$ with the states of those equivalence classes in arbitrary order. Henceforth, we enumerate the elements of $S$ as $p_1,\dots,p_n$, where $i<j$ iff $p_i<_Sp_j$.

    We show this order satisfies the property that $p_i\xrightarrow{\tau}p_j$ implies $j< i$. By \Cref{lem:tau_succ}, it holds that $p_j\sqsubseteq_x p_i$, and therefore $[p_j]_x\preccurlyeq_x[p_i]_x$. Since $p_i\xrightarrow{\tau}p_j$ is not inert, it holds that $[p_i]_x\neq[p_j]_x$, and therefore $j< i$. Thus, $p_i\xrightarrow{\tau}p_j$ implies $j< i$. Since $\tau$-desaturation preserves $\sqsubseteq_x$ and does not introduce inert transitions, the same property holds after a $\tau$-desaturation step.
    
    Define for each $p_i\in S$ the $n$-vector $v(p_i)$ where the $j$th position, $v_j(p_i)$, equals 1 if $p_i\xrightarrow{\tau}p_j$, and equals 0 otherwise. So $v(p_i)$ is some row of the $\tau$-adjacency matrix. Then, for two $n$-vectors, $x,y$, we write $x<_{\mathcal{L}} y$ iff there exists some $1\leq i\leq n$ such that $x_i < y_i$ and $x_j=y_j$ for all $i< j \leq n$. 

    We show that after each $\tau$-desaturation step, some $v(p_i)$ decreases w.r.t. $<_{\mathcal{L}}$, whereas all others remain the same. Let $\bar{L}$ be the LTS obtained from $L$ after a single $\tau$-desaturation step, say the removal of $p_i\xrightarrow{\tau}p_j$. That is, $\bar{L}= (L\cup\setof{(p_i,\alpha,p_k)}{(p_j,\alpha,p_k)\in{\rightarrow}})\setminus\set{(p_i,\tau,p_j)}$. We write $v(\bar{p}_k)$ to denote $v(p_k)$ in $\bar{L}$. For each $k\neq i$, it holds that $v(\bar{p}_k)=v(p_k)$, since the outgoing transitions of $p_k$ remain unchanged. Next, we show that $v(\bar{p}_i)<_{\mathcal{L}}v(p_i)$ by showing that:
	\begin{enumerate}
		\item $v_j(\bar{p}_i)<v_j(p_i)$, and
		\item $v_k(\bar{p}_i)=v_k(p_i)$ for all $j < k \leq n$.
	\end{enumerate}
	For (1), it holds by definition of $L$ and $\bar{L}$ that $v_j(\bar{p}_i)=0$ whereas $v_j(p_i)=1$. For (2), let $j<k\leq n$. We distinguish two cases.
	\begin{itemize}
		\item If $p_j\xrightarrow{\tau}p_k$ in $L$, then as shown above, it holds that $k<j$, which contradicts $j<k\leq n$.
		\item Otherwise, it holds that $v_k(\bar{p}_i)=v_k(p_i)$ by definition of $L$ and $\bar{L}$.
	\end{itemize}
	Therefore, $v(\bar{p}_i)<_{\mathcal{L}} v(p_i)$.

    Thus, the repeated application of $\tau$-desaturation terminates since on each iteration, some $v(p_i)$ decreases with respect to $<_{\mathcal{L}}$, all others remain the same, and each vector is bounded below by the zero vector.
\end{proof}

\simequiv*
\begin{proof}
    We distinguish two cases. Firstly, consider $\alpha\neq\tau$. We show that $q\in\Max(G_{\alpha}(p))$, so that the result holds by \Cref{lem:max}. Suppose $p\xRightarrow{\alpha}r$ for some $r\in S$ such that $q\sqsubseteq_x r$. We need to show that $r\sqsubseteq_x q$. Since $\alpha\neq\tau$, there are states $p',r'\in S$ such that $p\xRightarrow{\tau}p'\xrightarrow{\alpha}r'\xRightarrow{\tau}r$. By \Cref{lem:tau_succ}, it holds that $r\sqsubseteq_x r'$, and hence $q\sqsubseteq_x r'$. Since $L$ has no covered transitions, it holds that $(p,q)=(p',r')$, and therefore $r\sqsubseteq_x q$, as required.

    Now, consider $\alpha=\tau$. If $x=S$, then for $L$ to be desaturated, it must have no $\tau$-transitions, and thus the result holds vacuously. Otherwise, $x=CS$, and so we show that $q\in\Min(G_{\tau}(p))$, so that the result holds by \Cref{lem:min}. Suppose $p\xRightarrow{\tau}r$ for some $r\in S$ such that $r\sqsubseteq_x q$. We need to show that $q\sqsubseteq_x r$. By the coupling condition, there is some state $q'\in S$ such that $q\xRightarrow{\tau}q'$ and $q'\sqsubseteq_x r$. Since $L$ is desaturated, there are no consecutive $\tau$-transitions, and thus $q'=q$. Hence $q\sqsubseteq_x r$ as required.
\end{proof}

\saturate*
\begin{proof}
    Let $\bar{L}\coloneqq L\cup\set{(p,\tau,q)}$. Using $\bar{s}$ to represent the state of $\bar{L}$ corresponding to the state $s$ in $L$, consider:
    \[\RR\coloneqq\setof{(s,\bar{t})}{s,t\in S\land s\sqsubseteq_x t}\cup\setof{(\bar{s},t)}{s,t\in S\land s\sqsubseteq_x t}\quad.\]
    Since $\RR$ contains both $(\iota,\bar{\iota})$ and $(\bar{\iota},\iota)$, it suffices to show that $\RR$ is an $x$-simulation.

    \proofsubparagraph{Simulation.} First, take $(\bar{s},t)\in\RR$ and suppose $\bar{s}\xrightarrow{\alpha}\bar{s}'$. We distinguish two cases.
    \begin{itemize}
        \item If $s\xrightarrow{\alpha}s'$, then since $s\sqsubseteq t$, it holds that $t\xRightarrow{\alpha}t'$ for some $t'\in S$ such that $s'\sqsubseteq_x t'$. Hence, $(\bar{s}',t')\in\RR$ and the simulation condition is satisfied.
        \item Otherwise, it holds that $\bar{s}\xrightarrow{\alpha}\bar{s}'=\bar{p}\xrightarrow{\tau}\bar{q}$. Since $q\sqsubseteq_x p$ and $p=s$, it holds that $q\sqsubseteq_x t$, and hence $(\bar{q},t)\in\RR$. Therefore, the simulation condition is satisfied since $\bar{t}\xRightarrow{\alpha}\bar{t}$.
    \end{itemize}
    Thus, the simulation condition is satisfied for this pair. It is trivially satisfied for pairs of the form $(s,\bar{t})\in\RR$ since each transition of $L$ is a transition of $\bar{L}$, and therefore $\RR$ is a simulation.

    \proofsubparagraph{Coupling.} It remains to show that $x=CS$ implies $\RR$ is a coupled simulation. For a pair $(\bar{s},t)\in\RR$, it follows that since $s\sqsubseteq_x t$, there is some $t'\sqsubseteq_x s$ such that $t\xRightarrow{\tau}t'$. Therefore, the coupling condition is satisfied since $(t',\bar{s})\in\RR$. For a pair $(s,\bar{t})\in\RR$, the coupling condition is satisfied since each transition of $L$ is a transition of $\bar{L}$, and therefore $\RR$ is a coupled simulation.
\end{proof}

\minhard*
\begin{proof}
    We show that the set cover problem is reducible to minimisation w.r.t. $\equiv_x$. For our instance of the problem, let $U=\set{a_1,\dots,a_n}$ and $\mathcal{S}=\set{S_1,\dots,S_m}$ such that $\bigcup_{S_i\in\mathcal{S}}S_i=U$. We may assume w.l.o.g. that $S_i\subseteq S_j$ implies $S_i=S_j$, because from some minimal cover $\mathcal{C}$ of $U$, we may construct another minimal cover consisting of supersets of the sets in $\mathcal{C}$.

    We construct an LTS $L=\langle S, \act_\tau, \rightarrow, u\rangle$. Here, $\act_\tau=\set{a_1,\dots,a_n}\cup\set{\bar{a}_1,\dots,\bar{a}_n}\cup\set{b,\tau}$. We use $\bar{\bar{a}}_i$ to denote $a_i$. Define $\hat{U},\hat{S}_i$ for each $1\leq i\leq m$ as the least sets satisfying:
    \begin{itemize}
        \item $\hat{U}\supseteq U\cup\setof{\bar{a}_j}{a_j\in U}$;
        \item $\hat{S}_i\supseteq S_i\cup\setof{\bar{a}_j}{a_j\in S_i}$; and
        \item $x=CS$ implies $\tau\in\hat{U}$ and $\tau\in\hat{S}_i$.  
    \end{itemize}
    Then, define:
    \begin{itemize}
        \item $S=\set{u,\bot}\cup\set{s_1,\dots,s_m}$; and
        \item $\rightarrow$ is the least relation satisfying:
        \begin{itemize}
            \item $u\xrightarrow{b}s_i$ for all $1\leq i \leq m$;
            \item $u\xrightarrow{\hat{U}}\bot$; and
            \item $s_i\xrightarrow{\hat{S}_i}\bot$ for all $1\leq i\leq m$.
        \end{itemize}
    \end{itemize}
    Then $L$ has the structure shown in the figure below.
    \begin{center}
    \begin{tikzpicture}[node distance = 1.5cm]
        \tikzset{every path/.style={thick}}
        \node[node] (U) {$u'$};
        \node[node] (S1)[right=of U] {$s'_1$};
        \node[node] (S2)[right=of S1] {$s'_m$};
        \node[node] (bot)[below=of S1] {$\bot'$};
        \node[] (dots) at ($(S1)!0.5!(S2)$) {$\dots$};

        \draw[->] (U) to ["$b$" below] (S1);
        \draw[->, bend left] (U) to ["$b$"] (S2);
        \draw[->] (U) to ["$\hat{U}$" below left] (bot);
        \draw[->] (S1) to ["$\hat{S}_1$"] (bot);
        \draw[->] (S2) to ["$\hat{S}_m$"] (bot);

        \coordinate[left=0.5cm of U] (init);
        \draw[->] (init) to (U);
    \end{tikzpicture}        
    \end{center}    
    Note that $L$ is reduced, and for each $1\leq i\leq m$ it holds that $\bot\sqsubset_x s_i$, and $s_i\sqsubset_x u$. Moreover, $L$ can be constructed in polynomial time since there are $m+2$ states and up to $2m+2n$ transitions. What we now do is consider some minimal LTS $L'\equiv_x L$. We show that $L'$ satisfies the following conditions:
    \begin{enumerate}
        \item For each state $p$ of $L$, there is exactly one state $p'$ in $L'$ such that $p\equiv_x p'$, and vice-versa.
        \begin{itemize}
            \item Hence, if $p$ is a state of $L$, we write $p'$ to denote the unique state of $L'$ such that $p\equiv_x p'$.
            \item Likewise, if $p'$ is a state of $L'$, we write $p$ to denote the unique state of $L$ such that $p\equiv_x p'$.
        \end{itemize}
        \item $u'\xrightarrow{b}s'_i$ for all $1\leq i\leq m$.
        \item $s'_i\xrightarrow{\hat{S}_i}\bot'$ for all $1\leq i\leq m$.
        \item There is some set $\mathcal{C}'\subseteq\setof{i}{1\leq i\leq m}$ such that $u'\xrightarrow{\tau}s'_i$ for all $i\in\mathcal{C}'$ and $\bigcup_{i\in\mathcal{C}'}S_i=U$.
        \item $L'$ has no other transitions.
    \end{enumerate}
    Lastly, we show that from $L'$, one can obtain a minimal cover of $U$ (in polynomial time) in a natural way: $\setof{S_i}{i\in\mathcal{C}'}$ is a minimal cover of $U$.

    \proofsubparagraph{Condition 1.} Note that no two states of $L'$ are equivalent, for otherwise the $\forall$-quotient of $L'$ is strictly smaller than $L'$, contradicting the choice of $L'$ as minimal. Therefore, it suffices to show that for each state $p$ of $L$, there is some state $p'$ of $L'$ such that $p\equiv_x p'$. This holds by \Cref{lem:sim_equiv}.

    \proofsubparagraph{Condition 2.} We know by \Cref{lem:sim_equiv} that $u'\xRightarrow{b}s'_i$ for all $1\leq i\leq m$. Let $1\leq i\leq m$. We know that $u'\xRightarrow{b}s'_i$, so there exist states $p',q'$ such that $u'\xRightarrow{\tau}p'\xrightarrow{b}q'\xRightarrow{\tau}s'_i$. We show that $p'=u'$ and $q'=s'_i$. To show $p'=u'$, we know that if $p'\xrightarrow{b}$, then also $p\xRightarrow{b}$. However, the only state of $L$ that can execute a weak $b$-transition is $u$, therefore, $p=u$, and thus $p'=u'$. To show $q'=s'_i$, note that by \Cref{lem:tau_succ}, it holds that $s'_i\sqsubseteq_x q'$, and hence $s_i\sqsubseteq_x q$. The only candidates for $q$ are then $s_i$ or $u$. However, if $q=u$, then this means $u\xrightarrow{b}u$, which is not the case, and thus $q=s_i$ as required.

    \proofsubparagraph{Condition 3.} We know by \Cref{lem:sim_equiv} that $s'_i\xRightarrow{\hat{S}_i}\bot'$ for all $1\leq i\leq m$. Let $1\leq i\leq m$ and $\alpha\in\hat{S}_i$. Since $s'_i\xRightarrow{\alpha}\bot'$, there exist states $p',q'$ such that $s'_i\xRightarrow{\tau}p'\xrightarrow{\alpha}q'\xRightarrow{\tau}\bot'$. To show $p'=s'_i$, note that by \Cref{lem:tau_succ}, $p'\sqsubseteq_x s'_i$. Therefore, either $p'=s'_i$, which is what we want, or $p'=\bot'$. If $p'=\bot'$, we distinguish two cases.
    \begin{itemize}
        \item If $\alpha\neq\tau$, then we have a contradiction since $\bot$ cannot execute any visible actions, and therefore neither can $\bot'$.
        \item If $\alpha=\tau$, then by \Cref{lem:tau_succ}, it holds that $q'\sqsubseteq_x p'$, but since $\bot'$ is minimal, this means that $\bot'\xrightarrow{\tau}\bot'$. Then this self-loop may be removed, contradicting the choice of $L'$ as minimal.
    \end{itemize}
    Therefore, $p'=s'_i$. To show $q'=\bot'$, suppose for the sake of contradiction that $q'\neq\bot'$, i.e. $q'=u'$ or $q'=s'_j$ for some $1\leq j\leq m$. We distinguish two cases:
    \begin{itemize}
        \item If $\alpha\neq\tau$, then note that there is some $\beta\neq\tau$ such that $q'\xrightarrow{\beta}$. But then in $L'$ there is a sequence of three visible actions, $u'\xrightarrow{b}s'_i\xrightarrow{\alpha}q'\xrightarrow{\beta}$, which is a contradiction as this is not the case in $L$, and therefore not the case in $L'$, either.
        \item If $\alpha=\tau$, then we distinguish two more cases.
        \begin{itemize}
            \item If $q'=s'_i$, then this means that $s'_i\xrightarrow{\tau}s'_i$, which means that $L'$ is not minimal, as this transition can be removed.
            \item Otherwise, this means $q'\sqsubset_x s'_i$, which is a contradiction since $q'=u'$ or $q'=s'_j$ for some $j\neq i$.
        \end{itemize}
    \end{itemize}
    Therefore $s'_i\xrightarrow{\alpha}\bot'$, but since $\alpha\in\hat{S}_i$ was arbitrary, we have $s'_i\xrightarrow{\hat{S}_i}\bot'$ as required.

    \proofsubparagraph{Condition 4.} We show that for each $\alpha\in\hat{U}$, there is some $1\leq i\leq m$ such that $\alpha\in\hat{S}_i$ and $u'\xrightarrow{\tau}s'_i$. Note that $\tau\in\hat{S}_i$ for all $1\leq i\leq m$, and since $\hat{U}\setminus\set{\tau}$ is non-empty, it suffices to consider the case for $\alpha\neq\tau$.
    
    Take $\alpha\in\hat{U}\setminus\set{\tau}$. By \Cref{lem:sim_equiv} we know there exist states $p',q'$ such that $u'\xRightarrow{\tau}p'\xrightarrow{\alpha}q'\xRightarrow{\tau}\bot'$. Likewise, there exist states $s',t'$, such that $u'\xRightarrow{\tau}s'\xrightarrow{\bar{\alpha}}t'\xRightarrow{\tau}\bot'$.
    
    First, we show that $q'=\bot'$. Suppose for the sake of contradiction that $q'\neq\bot'$, i.e. either $q'=u'$ or $q'=s'_i$ for some $1\leq i\leq m$. Then $L'$ has two consecutive $\alpha$-transitions from its initial state, $u'$. This implies $L$ has two consecutive weak $\alpha$-transitions from its initial state, but it does not, and hence we have a contradiction. Thus $q'=\bot'$, and analogously $t'=\bot'$.

    Note that $p'\neq\bot'$, for if $p'=\bot'$, then $\bot'\xrightarrow{\alpha}$, which implies that $\bot\xRightarrow{\alpha}$, which is a contradiction. We now show that $p'\neq u'$. Suppose for the sake of contradiction that $p'=u'$, so that $u'\xrightarrow{\alpha}\bot'$. We distinguish two cases.
    \begin{enumerate}
        \item Suppose $u'\xrightarrow{\tau}s'_i$ for some $s'_i$ such that $\alpha\in\hat{S}_i$ or $\bar{\alpha}\in\hat{S}_i$. Note that $\alpha\in\hat{S}_i$ iff $\bar{\alpha}\in\hat{S}_i$, so therefore $s'_i\xrightarrow{\alpha}\bot'$. Hence, the $u'\xrightarrow{\alpha}\bot'$ transition is covered, meaning $L'$ is not minimal, contradicting our choice of $L'$.
        \item Otherwise, it holds that $s'=u'$, and hence $u'\xrightarrow{\bar{\alpha}}\bot'$. Note that there is at least one $\hat{S}_i$ such that $\alpha\in\hat{S}_i$ and $\bar{\alpha}\in\hat{S}_i$, so let $\hat{S}_i$ be such. Then $u'\xrightarrow{\alpha}\bot'$ and $u'\xrightarrow{\hat{\alpha}}\bot'$ are covered in $L'\cup\set{(u',\tau,s'_i)}$. Thus, by \Cref{prop:saturate,prop:covered}, it holds that $L'\equiv_x (L'\cup\set{(u',\tau,s'_i)})\setminus\set{(u',\alpha,\bot'),(u',\bar{\alpha},\bot')}$, contradicting our choice of $L'$ as minimal.
    \end{enumerate}
    Therefore $p'\neq\bot'$, and hence $p'=s'_i$ for some $1\leq i\leq m$. Moreover, since $s'_i\xrightarrow{\alpha}$, it holds that $s_i\xRightarrow{\alpha}$, and hence $\alpha\in\hat{S}_i$ as required.

    Now, construct $\mathcal{C}'=\setof{i}{u'\xrightarrow{\tau}s'_i}$. To show $\bigcup_{i\in\mathcal{C}'}S_i=U$ is satisfied, take $\alpha\in U$. Then, $\alpha\in\hat{U}$, so there is some $s'_i$ such that $u'\xrightarrow{\tau}s'_i$ and $\alpha\in\hat{S}_i$. Let $s'_i$ be such and note that $i\in\mathcal{C}'$. Since $\alpha\in U$ and $\alpha\in\hat{S}_i$, it holds that $\alpha\in S_i$. Thus, $\bigcup_{i\in\mathcal{C}'}S_i=U$ holds.

    \proofsubparagraph{Condition 5.} Using only the necessary transitions described by conditions 1--4, it is straightforward to show that
    \[
        \RR\coloneqq\setof{(p,p')}{p\in S}\cup\setof{(p',p)}{p\in S}
    \]
    is an $x$-simulation. Therefore, no other transitions are needed in order for $L'\equiv_x L$ to be satisfied.

    \proofsubparagraph{Cover.} Let $\mathcal{C}'\subseteq\setof{i}{1\leq i\leq m}$ as described by condition 4. Condition 4 states that $\mathcal{C}=\setof{S_i}{i\in\mathcal{C}'}$ is a cover of $U$. We show that $\mathcal{C}$ is a minimal cover. Suppose for the sake of contradiction that it is not minimal, so there is some $S_i\in\mathcal{C}$ such that $\mathcal{C}\setminus\set{S_i}$ is a cover of $U$. Let $1\leq i\leq m$ be such. We show that $L'\setminus\set{(u',\tau,s'_i)}\equiv_x L'$, contradicting our choice of $L'$ as minimal. 
    
    Firstly, since $u'\xrightarrow{\tau}s'_i$ and $x=CS$ implies $s'_i\xrightarrow{\tau}\bot'$, we may apply $\tau$-desaturation to remove the $u'\xrightarrow{\tau}s'_i$ transition and introduce a $u'\xrightarrow{\alpha}p'$ transition for each $s'_i\xrightarrow{\alpha}p'$, resulting in an LTS $L''$. We show that each of these introduced transitions is covered.

    Consider some newly introduced $u'\xrightarrow{\alpha}p'$ transition. This was added because $s'_i\xrightarrow{\alpha}p'$, meaning $p'=\bot'$ and $\alpha\in\hat{S}_i$. Since $\mathcal{C}\setminus\set{S_i}$ is a cover of $U$, it holds that for each $a_j\in S_i$, there is some $S_k\in\mathcal{C}\setminus\set{S_i}$ such that $a_j\in S_k$. We distinguish three cases.
    \begin{enumerate}
        \item If $\alpha=a_j$ for some $1\leq j \leq n$, then let $S_k\in\mathcal{C}\setminus\set{S_i}$ such that $a_j\in S_k$. Then it holds that $a_j\in\hat{S}_k$, and thus $u'\xrightarrow{\tau}s'_k\xrightarrow{a_j}$ in $L''$, meaning the $u'\xrightarrow{\alpha}\bot'$ transition is covered.
        \item If $\alpha=\bar{a}_j$ for some $1\leq j \leq n$, then $a_j\in S_i$, so there is some $S_k\in\mathcal{C}\setminus\set{S_i}$ such that $a_j\in S_k$. Then it holds that $\bar{a}_j\in\hat{S}_k$. Thus, $u'\xrightarrow{\tau}s'_k\xrightarrow{\bar{a}_j}$ in $L''$, meaning the $u'\xrightarrow{\alpha}\bot'$ transition is covered.
        \item If $\alpha=\tau$, then note that $\hat{S}_i\setminus\set{\tau}$ is non-empty, so there is some $\beta\in\hat{S}_i\setminus\set{\tau}$. As shown above, there is some $s'_k$ such that $u'\xrightarrow{\tau}s'_k\xrightarrow{\beta}\bot'$ in $L''$. Then, since $\tau\in\hat{S}_k$, it holds that $u'\xrightarrow{\tau}s'_k\xrightarrow{\tau}\bot'$ in $L''$, showing that the $u'\xrightarrow{\alpha}\bot'$ transition is covered.
    \end{enumerate}
    Hence, each introduced transition is covered in $L''$, so by \Cref{prop:covered} and \Cref{lem:covered_persist}, these may be removed to obtain $L'\setminus\set{(u',\tau,s'_i)}$. This contradicts the minimality of $L'$, and therefore $\mathcal{C}$ is a minimal cover of $U$.
    
    To finalise this proof, we must show that we can compute $\mathcal{C}$ in polynomial time given $L'$. The state $u'$ can be identified as the initial state. The states in $\setof{s'_i}{u'\xrightarrow{\tau}s'_i}$ can be identified by iterating over the outgoing transitions of $u'$, of which there are no more than $2m$. Lastly, from a state $s'_i$ one can determine the corresponding set $S_i$ by iterating over each $S_j$ and picking the set where $s'_i\xrightarrow{\alpha}$ for each $\alpha\in S_j$. This uses the assumption that $S_j\subseteq S_i$ implies $i=j$, and requires no more than $mn$ checks. Thus, $\equiv_x$-minimisation is NP-hard.
\end{proof}

\ldesaturatenew*
\begin{proof}
    We proceed by induction of the number of $\tau$-desaturation steps, $n$. For $n=0$, the lemma holds vacuously, since $L=\bar{L}$ and hence the premise is false.

    For $n>0$, let $\hat{L}$ be the LTS just before the final $\tau$-desaturation step. If $p\xrightarrow{\alpha}q$ in $\hat{L}$, then the result holds inductively. Otherwise, there is some state $p'$ such that $p\xrightarrow{\tau}p'$ and $p'\xrightarrow{\alpha}q$ in $\hat{L}$, and $\bar{L}$ is obtained from $\hat{L}$ by removing the $p\xrightarrow{\tau}p'$. We distinguish two cases, and show there is some state $p''\neq p$ such that $p\xrightarrow{\tau}p''\xRightarrow{\tau}p'$ in $L$, but not $p\xrightarrow{\tau}p''$ in $\bar{L}$.
    \begin{enumerate}
        \item If $p\xrightarrow{\tau}p'$ in $L$, then we may simply take $p''=p$. Note that $p'\neq p$ since $L$ has no inert transitions.
        \item Otherwise, we obtain the result using the induction hypothesis.
    \end{enumerate}
    Likewise, it holds that $p'\xRightarrow{\alpha}q$ in $L$. Therefore, we have $p\xrightarrow{\tau}p''\xRightarrow{\alpha}q$ in $L$, as required.
\end{proof}

\cdesaturatenew*
\begin{proof}
    Since $p_5\xrightarrow{\alpha}q_5$, it also holds that $p_3\xrightarrow{\alpha}q_3$. Therefore, by \Cref{lem:desaturate_new}, there exists some state $p'\neq p$ such that $p_2\xrightarrow{\tau}p'_2\xRightarrow{\alpha}q_2$, but $p_3\nxrightarrow{\tau}p'_3$. We distinguish two cases.
    \begin{enumerate}
        \item If $\alpha=\tau$, then by \Cref{lem:tau_succ}, it holds that $q\sqsubseteq_x p'$. Therefore, $p\xrightarrow{\alpha}q$ is covered by $p\xrightarrow{\tau}p'$ in $L_5\cup\set{(p,\tau,p')}$. Hence, $(p,\alpha,q)\in\Removed_5((p,\tau,p'))$.
        \item If $\alpha\neq\tau$, then since $p'$ in $L_5$ is equivalent to $p'$ in $L_2$, there exists some state $q'\sqsupseteq_x q$ such that $p'_5\xRightarrow{\alpha}q'_5$. By \Cref{lem:tau_succ} and since $\alpha\neq\tau$, we may assume w.l.o.g. that $p'_5\xRightarrow{\tau}p''_5\xrightarrow{\alpha}q'_5$ for some state $p''$. In particular, $p''\sqsubseteq_x p'\sqsubset_x p$ holds by \Cref{lem:tau_succ} and the absence of inert transitions. Therefore, $p\xrightarrow{\alpha}q$ is covered by $p''\xrightarrow{\alpha}q'$ in $L_5\cup\set{(p,\tau,p'')}$. Hence, $(p,\alpha,q)\in\Removed_5((p,\tau,p''))$.
    \end{enumerate}
    In both cases, we achieve our desired result.
\end{proof}

\cover*
\begin{proof}
    Firstly, note that $Y_p\cup Z_p\subseteq\setof{(p,\tau,p')}{p'\sqsubset_x p}$ holds, either by definition of a cover, or by \Cref{lem:tau_succ} and the absence of inert transitions. Thus, it remains to show that for each $(p,\alpha,q)\in\Removable_5(p)$, there is some $(p,\tau,p')\in Y_p\cup Z_p$ such that $(p,\alpha,q)\in\Removed_5((p,\tau,p'))$.

    Let $(p,\alpha,q)\in\Removable_5(p)$. Since each transition of $L_5$ is a transition of $L_3$, it holds that $p_3\xrightarrow{\alpha}q_3$. We distinguish two cases. If $p_2\xrightarrow{\alpha}q_2$, then by definition of $Y_p$, there is some $(p,\tau,p')\in Y_p$ such that $(p,\alpha,q)\in\Removed_5((p,\tau,p'))$. Otherwise, the result holds by \Cref{cor:desaturate_new}.
\end{proof}

\minquotient*
\begin{proof}
    Let $L$ be an LTS, and $\hat{L}\equiv_x L$ be minimal. By \Cref{thm:min}, it holds that $|\hat{L}_7|\leq|\hat{L}|$. Since $\hat{L}$ is minimal, it holds that $|\hat{L}_7|=|\hat{L}|$. By \Cref{thm:can}, $L_5\sim\hat{L}_5$, so let $f$ be an isomorphism from $L_5$ to $\hat{L}_5$. Let $\hat{X}_{f(p)}$ denote the cover computed on \Aref**{line:min_saturate_1} for $f(p)$. Note that from $X_p$, one can compute a cover of the same size for $f(p)$ by considering the set $\setof{(f(p),\tau,f(p'))}{(p,\tau,p')\in X_p}$, and likewise from $\hat{X}_{f(p)}$ one can compute a cover of the same size for $p$. Therefore, since both $X_p$ and $\hat{X}_p$ are minimal, they are of the same size, so there is a bijection, $g$ from $X_p$ to $\hat{X}_p$. Moreover, if $(p,\alpha,q)\in\Removable_5(p)$, then also $(f(p),\alpha,f(q))\in\Removable_5(f(p))$. Therefore, each transition of $L_7$ is either a transition of $L_5$ or belongs to $X_p$, and likewise for $\hat{L}_7$. Hence, we may construct a bijection $h$ from transitions of $L_7$ to transitions of $\hat{L}_7$ as follows:
    \[
    h((p,\alpha,q))\coloneqq\begin{cases}
        (f(p),\alpha,f(q))&\text{if }p_5\xrightarrow{\alpha}q_5\\
        g((p,\alpha,q))&\text{otherwise}\quad.
    \end{cases}
    \]
    Therefore $|L_7|=|\hat{L}_7|=|\hat{L}|$, so $|L_7|$ is minimal.
\end{proof}

\covermax*
\begin{proof}
	Let $p'\sqsubseteq_x p''\sqsubset_x p$, and take $(p,\alpha,q)\in\Removed((p,\tau,p'))$. Let $r'\xrightarrow{\alpha}q'$ be some transition in $L\cup\set{(p,\tau,p')}$ that covers $p\xrightarrow{\alpha}q$. Then there exists some simple path $p_0\xrightarrow{\tau}\dots\xrightarrow{\tau}p_n$ in $L\cup\set{(p,\tau,p')}$ such that $(p,r')=(p_0,p_n)$. We show that $p'\xRightarrow{\alpha}q'$ in $L\cup\set{(p,\tau,p')}$. We distinguish three cases.
	\begin{enumerate}
		\item If $p_0\xrightarrow{\tau}\dots\xrightarrow{\tau}p_n\xrightarrow{\alpha}q'$ holds in $L$, then we have a contradiction since $p\xrightarrow{\alpha}q$ is covered in $L$, but $L$ has no covered transitions.
		\item If $(p,\tau,p')=(p_i,\tau,p_{i+1})$, then $p'\xRightarrow{\tau}r'\xrightarrow{\alpha}q'$, as required.
		\item Otherwise if $(p,\tau,p')=(r',\alpha,q')$, then $\alpha=\tau$ and $p'=q'$, and hence $p'\xRightarrow{\alpha}q'$.
	\end{enumerate}
	Since $p'\sqsubseteq_x p''$ and $\tau$-saturation preserves $\sqsubseteq_x$, there is some $q''\sqsupseteq_x q'$ such that $p''\xRightarrow{\alpha}q''$ in $L$. We show that $(p,\alpha,q)\in\Removed((p,\tau,p''))$ by showing that $p\xrightarrow{\alpha}q$ is covered in $L\cup\set{(p,\tau,p'')}$.
	
	By \Cref{lem:tau_succ}, we may assume w.l.o.g. that $p''\xRightarrow{\tau}r''\xrightarrow{\alpha}q''$ for some state $r''$. By \Cref{lem:tau_succ}, it holds that $r''\sqsubseteq_x p''$. Since $p''\sqsubset_x p$, it holds that $r''\sqsubset_x p$, and hence $(p,q)\neq(r'',q'')$. Therefore, $p\xrightarrow{\alpha}q$ is covered in $L\cup\set{(p,\tau,p'')}$ since it is covered by $r''\xrightarrow{\alpha}q''$.
\end{proof}

\end{document}